\DeclareMathOperator{\pfaffian}{pf}
\begin{document}

\title{Tractable Minor-free Generalization of Planar Zero-field Ising Models}

\author{\name Valerii Likhosherstov \email vl304@cam.ac.uk \\
       \addr Department of Engineering \\ University of Cambridge \\ Cambridge, UK
       \AND
       \name Yury Maximov \email yury@lanl.gov \\
       \addr Theoretical Division and Center for Nonlinear Studies \\ Los Alamos National Laboratory \\ Los Alamos, NM, USA \\
       \name Michael Chertkov \email
       chertkov@math.arizona.edu \\
       \addr Graduate Program in Applied Mathematics \\ University of Arizona \\ Tucson, AZ, USA}

\editor{}

\maketitle

\begin{abstract}
We present a new family of zero-field Ising models over N binary variables/spins obtained by consecutive ``gluing'' of planar and $O(1)$-sized components and subsets of at most three vertices into a tree. The polynomial time algorithm of the dynamic programming type for solving exact inference (computing partition function) and exact sampling (generating i.i.d. samples) consists in a sequential application of an efficient (for planar) or brute-force (for $O(1)$-sized) inference and sampling to the components as a black box. To illustrate utility of the new family of tractable graphical models, we first build a polynomial algorithm for inference and sampling of zero-field Ising models over $K_{33}$-minor-free topologies and over $K_5$-minor-free topologies---both are extensions of the planar zero-field Ising models---which are neither genus- no treewidth-bounded. Second, we demonstrate empirically an improvement in the approximation quality of the NP-hard problem of inference over the square-grid Ising model in a node-dependent non-zero ``magnetic" field. 
\end{abstract}

\begin{keywords}
  Graphical model, Ising model, partition function, statistical inference.
\end{keywords}

\section{Introduction}

Let $G = (V (G), E(G))$ be an undirected graph with a set of vertices $V(G)$ and a set of normal edges $E(G) \subseteq \binom{V (G)}{2}$ (no loops or multiple edges). We discuss \textit{Ising models} which associate the following probability to each random $N\triangleq |V(G)|$-dimensional binary variable/spin configuration $X \in  \{\pm 1 \}^{N}$: 
\begin{gather}
    \mathbb{P}(X)  \triangleq \frac{\mathcal{W}(X)}{Z}, \label{eq:zfim}
\end{gather}
where 
\begin{gather}
    \mathcal{W}(X) \triangleq  \exp\biggl( \sum_{v \in V(G)} \mu_v x_v + \sum_{ e = \{ v, w \} \in E(G)} J_e x_v x_w \biggr) 
   \quad \text{and}\quad Z \triangleq \sum_{X \in \{ \pm 1 \}^N} \mathcal{W}(X). \label{eq:Z} 
\end{gather}
Here, $\mu = (\mu_v, v \in V(G))$ is a vector of \textit{(magnetic) fields}, $J = (J_e, e \in E(G))$ is a vector of the \textit{(pairwise) spin interactions}, and the normalization constant $Z$, which is defined as a sum over $2^N$ spin configurations, is referred to as the \textit{partition function}. Given the model specification $\mathcal{I} = \langle G, \mu, J \rangle$, we address the tasks of finding the exact value of $Z$ (inference) and drawing exact samples with the probability (\ref{eq:zfim}).

\textbf{Related work.} 
It has been known since the seminal contributions of \citet{fisher} and \citet{kasteleyn} that computation of the partition function in the zero-field ($\mu = 0$) Ising model over a planar graph and sampling from the respective probability distribution are both tractable, that is, these are tasks of complexity polynomial in $N$.
As shown by \citet{barahona}, even when $G$ is planar or when $\mu = 0$ (\textit{zero field}),
the positive results are hard to generalize---both addition of the non-zero (magnetic) field  and the extension beyond planar graphs make
the computation of the partition function NP-hard. These results are also consistent with the statement from \citet{jerrum-sinclair} that computation of the partition function of the zero-field Ising model is a \#P-complete problem, even in the ferromagnetic case when all components of $J$ are positive. Therefore, describing $\langle G, \mu, J \rangle$ families for which computations of the partition function and sampling are tractable remains an open question.

The simplest tractable (i.e., inference and sampling are polynomial in $N$) example is one when $G$ is a tree, and the corresponding inference algorithm, known as \textit{dynamic programming} and/or \textit{belief propagation}, has a long history in physics \citep{bethe,peierls}, optimal control \citep{bellman}, information theory \citep{gallager}, and artificial intelligence \citep{pearl}. Extension to the case when $G$ is a tree of $(t + 1)$-sized cliques ``glued'' together, or more formally when $G$ is of a \textit{treewidth}~$t$, is known as the \textit{junction tree algorithm} \citep{jensen}, which has complexity of counting and sampling that grow exponentially with $t$.

Another insight 
originates from the foundational statistical physics literature of the last century related to the zero-field version of (\ref{eq:zfim}), i.e. when $\mu=0$, over planar $G$. \citet{onsager} found a closed-form solution of (\ref{eq:zfim}) in the case of a homogeneous Ising model over an infinite two-dimensional square grid. \citet{kac-ward} reduced the inference of (\ref{eq:zfim}) over a finite square lattice to computing a determinant. \citet{kasteleyn} generalized this result to an arbitrary (finite) planar graph. Kasteleyn's approach consists of expanding each vertex of $G$ into a gadget and reducing the Ising model inference to the problem of counting perfect matchings over the expanded graph. Kasteleyn's construction was simplified by \citet{fisher}. The tightest running time estimate for Kasteleyn's method gives $O(N^\frac32)$. Kasteleyn conjectured, which was later proven by \citet{gallucio}, that the approach extends to the case of the zero-field Ising model over graphs embedded in a surface of \textit{genus} $g$ with a multiplicative $O(4^g)$ penalty. 

A parallel way of reducing the planar zero-field Ising model to a perfect matching counting problem consists of constructing the so-called expanded dual graph \citep{bieche,barahona,schraudolph-kamenetsky}. 
This approach is advantageous because using the expanded dual graph allows a one-to-one correspondence between spin configurations and perfect matchings. An extra advantage of this approach is that the reduction allows us to develop an exact efficient sampling. Based on linear algebra and planar separator theory \citep{lipton-tarjan}, \citet{wilson} introduced an algorithm that allows to sample perfect matchings over planar graphs in $O(N^\frac32)$ time. The algorithms were implemented by \citet{thomas-middleton1,thomas-middleton2} for the Ising model sampling, however, the implementation was limited to only the special case of a square lattice. \citet{thomas-middleton1} also suggested a simple extension of the Wilson's algorithm to the case of bounded genus graphs, again with the $4^g$ factor in complexity. Notice that imposing the zero field condition is critical, as otherwise, the Ising model over a planar graph is  NP-hard \citep{barahona}. On the other hand, even in the case of zero magnetic field the Ising models over general graphs are difficult \citep{barahona}.

Wagner's theorem \citep[chap.~4.4]{diestel} states that $G$ is planar if and only if it does not have $K_{33}$ and $K_5$ as minors (Figure \ref{fig:k5}(b)). Both families of $K_{33}$-free and $K_5$-free graphs generalize and extend the family of planar graphs, since $K_{33}$ ($K_{5}$) is nonplanar but $K_5$-free ($K_{33}$-free). Both families are genus-unbounded, since a disconnected set of $g$ $K_{33}$ ($K_5$) graphs has a genus of $g$ \citep{battle} and is $K_5$-free ($K_{33}$-free). Moreover, both families are treewidth-unbounded, since planar square grid of size $t \times t$ has a treewidth of $t$ \citep{bodlaender}. Therefore, the question of interest becomes generalizing tractable inference and sampling in the zero-field Ising model over a $K_{33}$-free or $K_5$-free graph.

To extend tractability of the special cases as an approximation to a more general class of inference problems it is natural to consider a family of tractable spanning subgraphs and then exploit the fact that the log-partition function $\log Z (\mu, J)$ is convex and hence can be upper-bounded by a linear combination of tractable partition functions. Tree-reweighted (TRW) approximation \citep{wainwright} was the first example in the literature where such upper-bounding was constructed with the trees used as a basic element. The upper-bound TRW approach \citep{wainwright} was extended by \citet{globerson}, where utilizing a planar spanning subgraph (and not a tree) as the basic (tractable) element was suggested.

\textbf{Contribution.}
In this manuscript, we, first of all, compile results that were scattered over the literature on (at least) $O(N^\frac32)$-efficient exact sampling and exact inference in the zero-field Ising model over planar graphs. To the best of our knowledge,  we are the first to present a complete and mathematically accurate description of the tight asymptotic bounds.

Then, we describe a new family of zero-field Ising models on graphs that are more general than planar. Given a tree decomposition of such graphs into planar and ``small'' ($O(1)$-sized) components ``glued'' together along sets of at most three vertices, inference and sampling over the new family of models is of polynomial time. We further show that all the $K_{33}$-free or $K_5$-free graphs are included in this family and, moreover, their aforementioned tree decomposition can be constructed with $O(N)$ efforts. This allows us to prove an $O(N^\frac32)$ upper bound on run time complexity for exact inference and exact sampling of the $K_5$-free or $K_{33}$-free zero-field Ising models. 



Finally, we show how the newly introduced tractable family of the zero-field Ising models allows extension of the approach of \citet{globerson} resulting in an upper-bound for log-partition function over general Ising models, non-planar and including non-zero magnetic field. Instead of using planar spanning subgraphs as in the work of \citet{globerson}, we use more general (non-planar) basic tractable elements. Using the methodology of \citet{globerson}, we illustrate the approach through experiments with a nonzero-field Ising model on a square grid for which exact inference is known to be NP-hard \citep{barahona}.

\textbf{Relation to other algorithms.} The result presented in this manuscript is similar to the approach used to count perfect matchings in $K_5$-free graphs \citep{curticapean,straub}. However, we do not use a transition to perfect matching counting as it is typically done in studies of zero-field Ising models over planar graphs~\citep{fisher,kasteleyn,thomas-middleton1}. 
Presumably, a direct transition to perfect matching counting can be done via a construction of an expanded graph in the fashion of \citet{fisher,kasteleyn}. However, this results in a size increase and, what's more important, there is no direct correspondence between spin configurations and perfect matchings, therefore exact sampling is not supported.

\textbf{Structure.} Section \ref{ch:planar} states the problems of exact inference and exact sampling for planar zero-field Ising models. In Section \ref{sec:main} we introduce the concept of $c$-nice decomposition of graphs, and then formulate and prove tractability of the zero-field Ising models over graphs which are $c$-nice decomposible.
Section \ref{sec:minorfree} is devoted to application of the algorithm introduced in the preceding Section to examples of the zero-field Ising model over the $K_{33}$-free (but possibly $K-5$ containing) and $K_5$-free (but possibly $K_{3,3}$ containing) graphs. Section \ref{sec:emp} presents an empirical application of the newly introduced family of tractable models to an upper-bounding log-partition function of a broader family of intractable graphical models (planar nonzero-field Ising models). Section \ref{sec:concl} is reserved for conclusions.

Throughout the text, we use common graph-theoretic notations and definitions \citep{diestel} and also restate the most important concepts briefly.

\section{Planar Topology} \label{ch:planar}

In this Section, we consider the special $\mathcal{I} = \langle G, 0, J \rangle$ case of the zero-field Ising model over a planar graph and introduce transition from $\mathcal{I}$ to the perfect matching model over a different (derived from $G$) planar graph. One-to-one correspondence between a spin configuration over the Ising model and corresponding perfect matching configuration over the derived graph translates the exact inference and exact sampling over  $\mathcal{I}$ to the corresponding exact inference and exact sampling in the derived perfect matching model.

\subsection{Expanded Dual Graph} \label{subsec:edg}

The graph is \textit{planar} when it can be drawn on (embedded into) a plane without edge intersections. We assume that the planar embedding of $G$ is given
(and if not, it can be found in $O(N)$ time according to \citet{boyer}). In this Section we follow in our constructions of \citet{schraudolph-kamenetsky}.

Let us, first, triangulate $G$ by triangulating one after another each face of the original graph and then setting $J_e = 0$ for all the newly added edges $e\in E(G)$.  Complexity of the triangulation is $O(N)$, see \citet{schraudolph-kamenetsky} for an example. (For convenience, we will then use the same notation for the derived, triangulated graph as for the original graph.) 

Second, construct a new graph, $G_F$, where each vertex $f$ of $V (G_F)$ is a face of $G$, and there is an edge $e = \{ f_1, f_2 \}$ in $E (G_F)$ if and only if $f_1$ and $f_2$ share an edge in $G$. By construction, $G_F$ is planar, and it is embedded in the same plane as $G$, so that each new edge $e = \{ f_1, f_2 \} \in E (G_F)$ intersects the respective old edge. 
Call $G_F$ a \textit{dual graph} of $G$. Since $G$ is triangulated, each $f \in V (G_F)$ has degree 3 in $G_F$. 

Third, obtain a planar graph $G^*$ and its embedding from $G_F$ by substituting each $f \in V (G_F)$ by a $K_3$ triangle so that each vertex of the triangle is incident to one edge, going outside the triangle (see Figure \ref{fig:expanded_dual} for illustration). Call $G^*$ the \textit{expanded dual graph} of $G$.

Newly introduced triangles of $G^*$, substituting $G_F$'s vertices, are called \textit{Fisher cities} \citep{fisher}. We refer to edges outside triangles as \textit{intercity edges} and denote their set as $E^*_I$. The set $E (G^*) \setminus E^*_I$ of Fisher city edges is denoted as $E^*_C$. Notice that $e^* \in E^*_I$ intersects exactly one $e \in E (G)$ and vice versa, which defines a bijection between $E^*_I$ and $E (G)$; denote it by $g: E^*_I \to E (G)$. Observe that $| E^*_I | = | E (G) | \leq 3 N - 6$, where $N$ is the size (cardinality) of $G$.

A set $E' \subseteq E (G)$ is called a \textit{perfect matching} (PM) of $G$, if edges of $E'$ are disjoint and their union equals $V$. Let $\text{PM}(G)$ denote the set of all Perfect Matchings (PM) of $G$. Notice that $E^*_I$ is a PM of $G^*$, and thus $| V (G^*) | = 2 | E^*_I | = O(N)$. Since $G^*$ is planar, one also finds that $| E (G^*) | = O(N)$. Constructing $G^*$ requires $O(N)$ steps.

\subsection{Perfect Matching (PM) Model} \label{subsec:pmc}

For every spin configuration $X \in \{ \pm 1 \}^N$, let $I(X)$ be a set $\{ e \in E^*_I \, | \, g(e) = \{ v, w \}, x_v = x_w \}$. Each Fisher city is incident to an odd number of edges in $I(X)$. Thus, $I(X)$ can be uniquely completed to a PM by edges from $E^*_C$. Denote the resulting PM by $M(X) \in \text{PM}(G^*)$ (see Figure \ref{fig:pm_cases} for an illustration). Let $\mathcal{C}_+ = \{ +1 \} \times \{ \pm 1 \}^{N - 1}$. 
\begin{lemma} \label{lemma:bij}
$M$ is a bijection between $\mathcal{C}_+$ and $\text{PM}(G^*)$.
\end{lemma}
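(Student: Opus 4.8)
The plan is to factor $M$ through the intercity edges. The key structural observation is that every perfect matching $M^*\in\text{PM}(G^*)$ is determined by its intercity part $M^*\cap E^*_I$: inside a Fisher city (a $K_3$), the vertices left uncovered by $M^*\cap E^*_I$ must be matched using the triangle's internal edges, and since any two edges of a $K_3$ share a vertex, a matching uses $0$ or $1$ internal edge at each city, so this completion is forced. A short parity count in a triangle then shows that a subset $T\subseteq E^*_I$ extends to a (then unique) perfect matching of $G^*$ \emph{iff} every Fisher city is incident to an odd number ($1$ or $3$) of edges of $T$; call such $T$ \emph{admissible}. Hence $M^*\mapsto M^*\cap E^*_I$ is a bijection between $\text{PM}(G^*)$ and the admissible subsets of $E^*_I$. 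Since by construction $M(X)\cap E^*_I = I(X)$, it remains to prove that $X\mapsto I(X)$ is a bijection from $\mathcal{C}_+$ onto the admissible subsets.

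Using the bijection $g\colon E^*_I\to E(G)$, I would restate everything on $G$: a set $T$ is admissible iff $g(T)\subseteq E(G)$ meets every triangular face of $G$ in an odd number of edges, equivalently iff $D\triangleq E(G)\setminus g(T)$ meets every face in an even number of edges. For $I(X)$, the set $g(I(X))$ consists exactly of the \emph{monochromatic} edges of $X$, so the corresponding $D$ is the set of \emph{bichromatic} edges (the domain wall); since any $2$-colouring of a triangle leaves $0$ or $2$ bichromatic edges, $I(X)$ is automatically admissible, which also re-derives the well-definedness of $M$ stated before the lemma. The content is therefore surjectivity and injectivity onto the admissible sets.

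For surjectivity, take an admissible $T$; then $D = E(G)\setminus g(T)$ has even intersection with every face boundary of $G$. Here I would invoke the standard planar-duality fact that, for a connected planar graph, the face-boundary cycles span the cycle space over $\mathbb{F}_2$ (the dimension count is Euler's formula, and after triangulation each face boundary is a genuine $3$-cycle). Thus $D$ lies in the orthogonal complement of the cycle space, i.e.\ it is an edge cut $D=\delta_G(U)$ for some $U\subseteq V(G)$, and we may choose $U$ with $1\notin U$. Setting $x_v=-1\iff v\in U$ produces an $X\in\mathcal{C}_+$ whose bichromatic edge set is $\delta_G(U)=D$, hence whose monochromatic edge set is $g(T)$, i.e.\ $I(X)=T$. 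For injectivity, if $I(X)=I(X')$ for $X,X'\in\mathcal{C}_+$ then $X$ and $X'$ induce the same bichromatic edge set; since $G$ is connected (guaranteed by the triangulation step), a $2$-colouring is determined by its domain wall up to a global flip, so $X'\in\{X,-X\}$, and the normalization $x_1=x'_1=+1$ forces $X=X'$.

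I expect the only non-routine ingredient to be this appeal to planar duality in the surjectivity step --- the identification of ``even intersection with every face'' with ``is an edge cut'' --- whereas the triangle-level parity and forced-completion claims are elementary local bookkeeping already implicit in the discussion preceding the lemma.
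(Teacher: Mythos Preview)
Your argument is correct and is essentially the paper's proof in more abstract clothing: both show that the unsaturated/bichromatic edge set meets every face of $G$ evenly and then lift this to all cycles---the paper by iterated symmetric differences of faces, you by the equivalent statement that face boundaries span the cycle space over $\mathbb{F}_2$---so that this set is a cut from which $X$ is recovered (the paper via path-parity from $v_1$, you by writing the cut as $\delta_G(U)$). Your explicit factorization through the ``admissible'' intercity sets is a cleaner packaging, but the underlying idea and the connectedness-based uniqueness argument are the same.
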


\begin{figure}
\centering
\subfigure[]{
  \centering
  \includegraphics[width=0.3\linewidth]{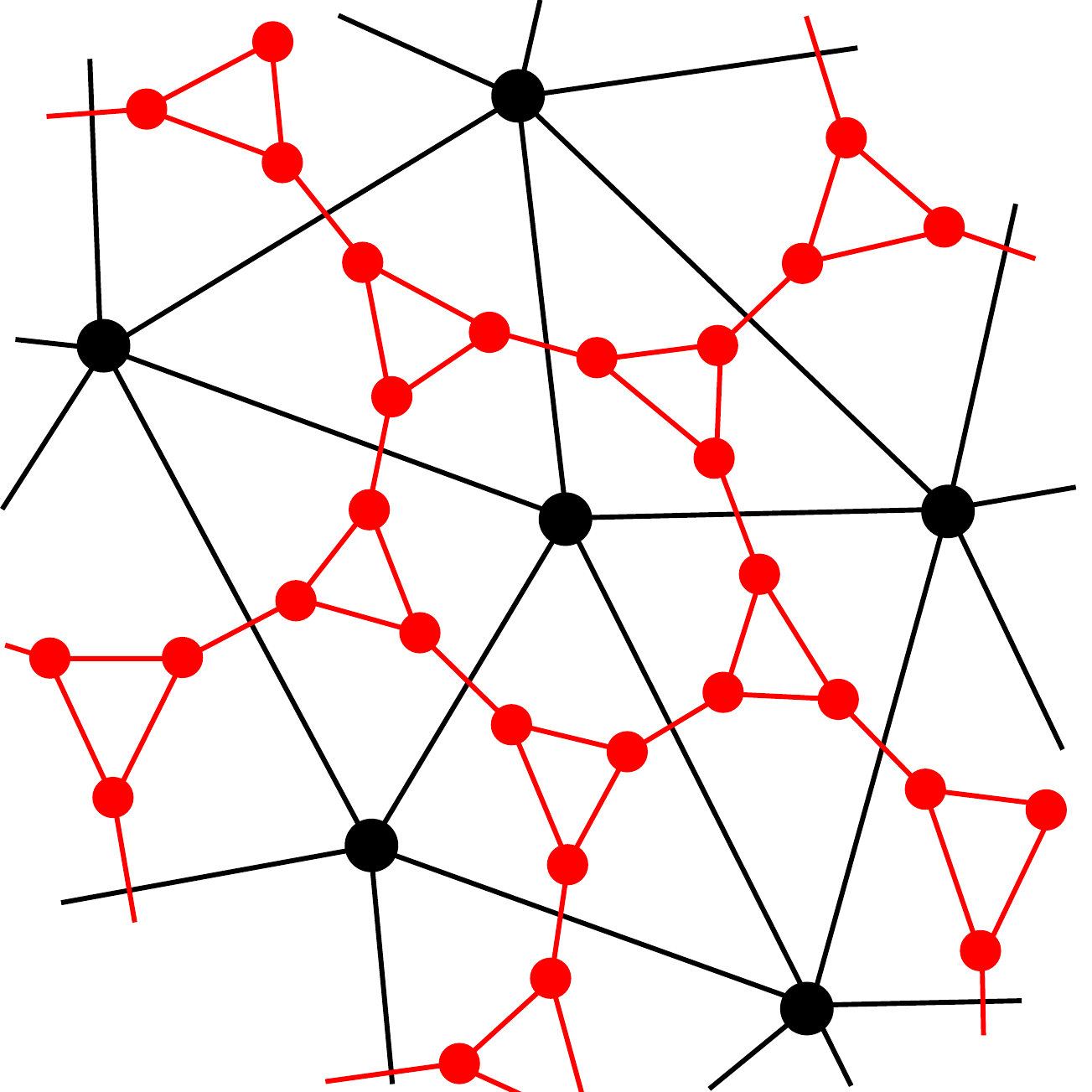}
  \label{fig:expanded_dual}
}%
\subfigure[]{
  \centering
  \includegraphics[width=0.3\linewidth]{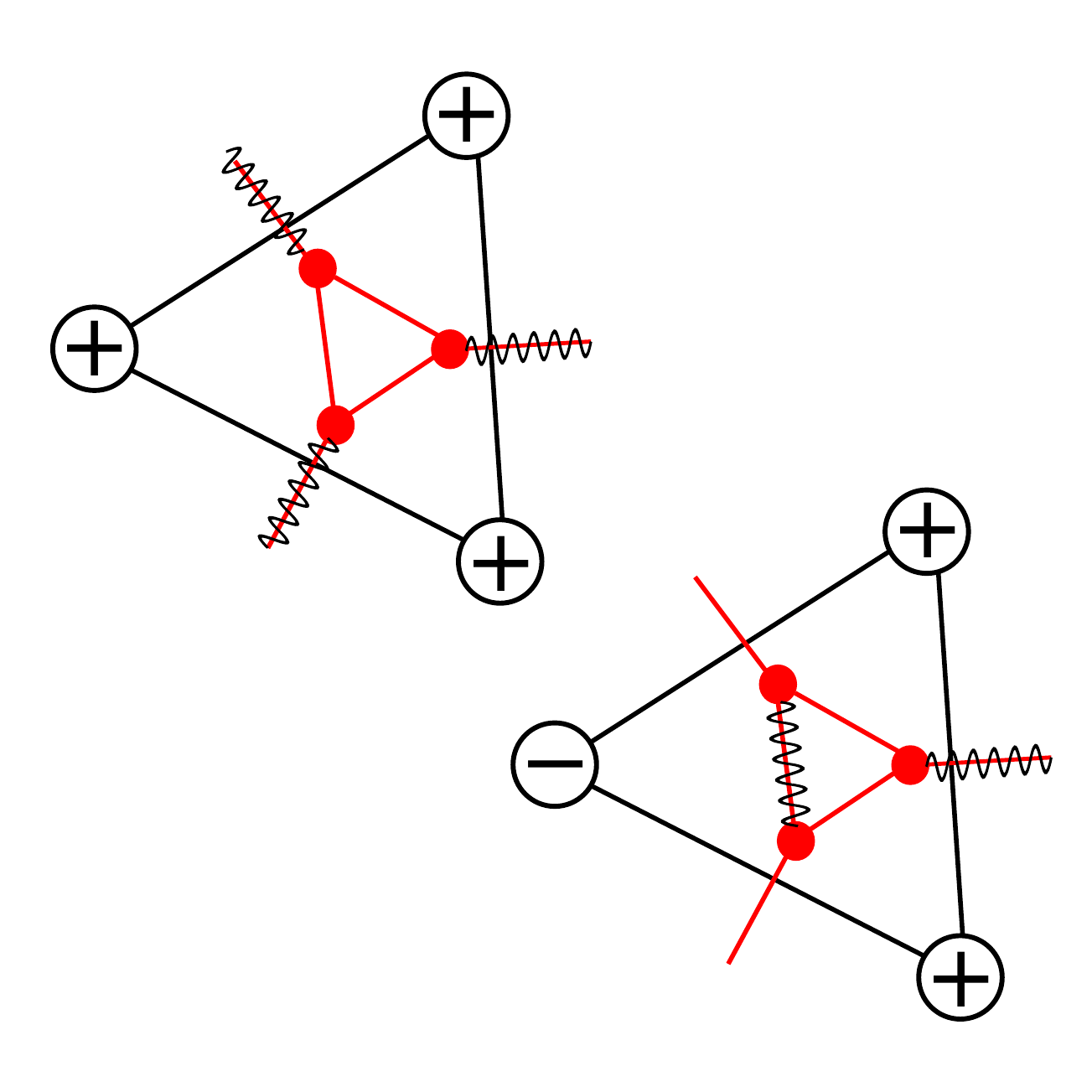}
  \label{fig:pm_cases}
}%
\caption{
(a) A fragment of $G$'s embedding after triangulation (black), expanded dual graph $G^*$ (red). (b) Possible $X$ configurations and corresponding $M(X)$ (wavy lines) on a single face of $G$. Rotation symmetric and reverse sign configurations are omitted.}
\end{figure}


Define weights on $G^*$ according to
\begin{equation}
\forall e^* \in E (G^*): c_{e^*} \triangleq \begin{cases} \exp (2 J_{g(e^*)}), & e^* \in E^*_I \\ 1, & e^* \in E^*_C \end{cases}
\end{equation}

\begin{lemma} \label{lemma:zfitopm}
For $E' \in \text{PM}(G^*)$ holds
\begin{equation}
    \mathbb{P} ( M(X) = E' ) = \frac{1}{Z^*} \prod_{e^* \in E'} c_{e^*}, \label{eq:pmprobs}
\end{equation}
where
\begin{equation}
    Z^* \triangleq \sum_{E' \in \text{PM}(G^*)} \prod_{e^* \in E'} c_{e^*} = \frac12 Z \exp\left( \sum_{e \in E(G)} J_e\right)
    \label{eq:zstar}
\end{equation}
is the partition function of the PM distribution (PM model) defined by (\ref{eq:pmprobs}).
\end{lemma}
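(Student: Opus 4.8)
The plan is to prove a single weight identity relating the perfect-matching weight $\prod_{e^* \in E'} c_{e^*}$ to the Ising weight $\mathcal{W}(X)$, and then to push everything through the bijection of Lemma~\ref{lemma:bij} together with the global spin-flip symmetry $\mathcal{W}(X) = \mathcal{W}(-X)$ of the zero-field model. First, fix $X \in \{\pm 1\}^N$ and evaluate $\prod_{e^* \in M(X)} c_{e^*}$. Since every edge of $E^*_C$ has weight $1$ and $M(X) \setminus I(X) \subseteq E^*_C$, only the intercity edges in $I(X)$ contribute, so $\prod_{e^* \in M(X)} c_{e^*} = \exp \big( 2 \sum_{e^* \in I(X)} J_{g(e^*)} \big)$; using that $g$ is a bijection $E^*_I \to E(G)$ and the definition $I(X) = \{ e^* \in E^*_I : g(e^*) = \{v,w\},\ x_v = x_w \}$, this equals $\exp \big( 2 \sum_{e = \{v,w\} \in E(G),\, x_v = x_w} J_e \big)$. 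Splitting $\sum_{e} J_e x_v x_w$ and $\sum_e J_e$ each into an ``agreeing'' part ($x_v = x_w$) and a ``disagreeing'' part and combining them gives the elementary identity $2 \sum_{e:\, x_v = x_w} J_e = \sum_e J_e x_v x_w + \sum_e J_e$, hence (recalling $\mu = 0$) $\prod_{e^* \in M(X)} c_{e^*} = \mathcal{W}(X) \exp \big( \sum_{e \in E(G)} J_e \big)$. Note the edges added in the triangulation step carry $J_e = 0$, so they affect neither side and it is irrelevant whether the sums run over the original or the triangulated edge set.

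Next I would compute $Z^*$. By Lemma~\ref{lemma:bij}, $M$ restricts to a bijection $\mathcal{C}_+ \to \text{PM}(G^*)$, so re-indexing the sum by $X$ and using the identity just proved, $Z^* = \sum_{E' \in \text{PM}(G^*)} \prod_{e^* \in E'} c_{e^*} = \sum_{X \in \mathcal{C}_+} \mathcal{W}(X) \exp \big( \sum_e J_e \big)$. Since $\mathcal{W}(X) = \mathcal{W}(-X)$ and the involution $X \mapsto -X$ has no fixed points, $\mathcal{C}_+ = \{+1\} \times \{\pm1\}^{N-1}$ picks out exactly one representative of each pair $\{X, -X\}$, whence $\sum_{X \in \mathcal{C}_+} \mathcal{W}(X) = \frac12 Z$. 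This gives $Z^* = \frac12 Z \exp \big( \sum_{e \in E(G)} J_e \big)$, the second assertion of the lemma.

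Finally, for the distributional claim, note that $I(X)$---and therefore $M(X)$---depends on $X$ only through which edges have equal endpoints, so $M(X) = M(-X)$; consequently the full preimage $M^{-1}(E')$ in $\{\pm1\}^N$ of a given $E' \in \text{PM}(G^*)$ is precisely $\{X_0, -X_0\}$, where $X_0$ is the unique element of $\mathcal{C}_+$ with $M(X_0) = E'$ furnished by Lemma~\ref{lemma:bij}. Hence, for $X$ distributed according to (\ref{eq:zfim}), $\mathbb{P}(M(X) = E') = (\mathcal{W}(X_0) + \mathcal{W}(-X_0))/Z = 2 \mathcal{W}(X_0)/Z$; substituting $\mathcal{W}(X_0) = \exp(- \sum_e J_e) \prod_{e^* \in E'} c_{e^*}$ and $Z = 2 Z^* \exp(- \sum_e J_e)$ from the previous steps yields $\mathbb{P}(M(X) = E') = \frac{1}{Z^*} \prod_{e^* \in E'} c_{e^*}$. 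I do not anticipate a genuine difficulty here: the only places needing care are the agreeing/disagreeing bookkeeping in the weight identity and, in the last two steps, correctly tracking the factor $2$ that arises because $M$ is two-to-one on $\{\pm1\}^N$ while being a bijection on $\mathcal{C}_+$.
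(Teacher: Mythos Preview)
Your proposal is correct and follows essentially the same route as the paper: both arguments rest on the weight identity $\prod_{e^*\in M(X)} c_{e^*} = \mathcal{W}(X)\exp\bigl(\sum_e J_e\bigr)$ (obtained via the agreeing/disagreeing split $2\sum_{e:x_v=x_w}J_e=\sum_e J_e x_v x_w+\sum_e J_e$), the two-to-one nature of $M$ on $\{\pm1\}^N$, and the bijection of Lemma~\ref{lemma:bij}. The only cosmetic difference is ordering---you establish the weight identity and $Z^*$ first and then read off the probability, whereas the paper runs the probability computation as a single chain and extracts $Z^*$ at the end---but the content is identical.
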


See proofs of the Lemma \ref{lemma:bij} and Lemma \ref{lemma:zfitopm} in Appendix \ref{sec:lp}. Second transition of (\ref{eq:zstar}) reduces the problem of computing $Z$ to computing $Z^*$. Furthermore, only two equiprobable spin configurations $X'$ and $-X'$ (one of which is in $\mathcal{C}_+$) correspond to $E'$, and they can be recovered from $E'$ in $O(N)$ steps, thus resulting in the statement that one samples from $\mathcal{I}$ if sampling from (\ref{eq:pmprobs}) is known.


The PM model can be defined for an arbitrary graph $\hat{G}$, $\hat{N} = | V(\hat{G}) |$ with positive weights $c_e, e \in E'$, as a probability distribution over $\hat{M} \in \text{PM}(\hat{G})$: $\mathbb{P}(\hat{M}) \propto \prod_{e \in \hat{M}} c_e$.
Our subsequent derivations are based on the following
\begin{theorem} \label{th:pmmodel}
    Given the PM model defined on planar graph $\hat{G}$ of size $\hat{N}$ with positive edge weights $\{ c_e \}$, one can find its partition function and sample from it in $O(\hat{N}^\frac32)$ time (steps).
\end{theorem}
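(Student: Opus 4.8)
The plan is to assemble three classical ingredients: Kasteleyn's reduction of planar perfect-matching counting to a Pfaffian evaluation, the generalized nested-dissection scheme for determinants of matrices with planar nonzero pattern (an elimination ordering built from the planar separator theorem \citep{lipton-tarjan}), and Wilson's separator-based sampler for random planar perfect matchings \citep{wilson}. As a preliminary reduction I would compute a planar embedding of $\hat G$ in $O(\hat N)$ time \citep{boyer}; if $\hat G$ is disconnected, both the partition function and the sampler factor over connected components, so we may assume $\hat G$ is connected and embedded. If $\hat G$ has no perfect matching then $Z^* = 0$ and there is nothing to sample, and this case is detected by the computation below (the Pfaffian vanishes).

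For inference, I would first construct a Kasteleyn (Pfaffian) orientation of $\hat G$: an orientation of $E(\hat G)$ in which every bounded face has an odd number of edges traversed clockwise. Such an orientation exists for every planar graph and can be built in $O(\hat N)$ time from the embedding --- fix a spanning tree, orient its edges arbitrarily, then process the remaining edges in an order in which each one closes a region whose clockwise parity is already forced, orienting it so that this parity becomes odd \citep{kasteleyn,fisher}. Form the skew-symmetric weighted Kasteleyn matrix $K \in \mathbb{R}^{\hat N \times \hat N}$ with $K_{uv} = c_{\{u,v\}}$ if $u \to v$, $K_{uv} = -c_{\{u,v\}}$ if $v \to u$, and $K_{uv} = 0$ otherwise. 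The defining property of a Kasteleyn orientation is that every perfect matching enters the Pfaffian expansion of $K$ with the same sign, so $Z^* = |\pfaffian(K)| = \sqrt{\det K}$. Since the nonzero pattern of $K$ is exactly that of the planar graph $\hat G$, generalized nested dissection performs Gaussian elimination on $K$ with $O(\hat N^{3/2})$ arithmetic operations, yielding $\det K$ (hence $Z^*$, and then $Z$ through (\ref{eq:zstar}) when the model comes from an Ising instance) together with a factorization of $K$ usable for subsequent linear solves.

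For sampling I would use the inverse Kasteleyn matrix. The probability that a prescribed set of disjoint edges lies in the random matching is a ratio of Pfaffians equal, up to sign, to a minor of $K^{-1}$ scaled by the corresponding weights; in particular the single-edge marginal satisfies $\mathbb{P}(\{u,v\}\in M) = |c_{\{u,v\}} (K^{-1})_{uv}|$. Sampling edges one at a time would cost one solve per edge, i.e.\ $O(\hat N^{5/2})$; instead, following \citet{wilson}, I would sample the \emph{restriction} of the matching to a balanced separator $S$ of size $O(\sqrt{\hat N})$ --- a block of $O(\sqrt{\hat N})$ correlated binary decisions resolved using the precomputed factorization of $K$ --- which disconnects $\hat G$ into pieces each of size at most a constant fraction of $\hat N$. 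Deleting the matched separator vertices leaves a positive-weight planar PM model on each piece, so one recurses; accounting for the solves level by level of the recursion (work proportional at each level to the sum over pieces of (piece size)$^{3/2}$, which telescopes) gives the overall $O(\hat N^{3/2})$ bound.

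The step I expect to be the main obstacle is the sampling complexity bookkeeping, namely (i) verifying that conditioning on the separator matching leaves, on each remaining piece, a positive-weight planar PM model --- planarity is preserved under vertex deletion, and a Kasteleyn orientation of a subgraph is inherited or cheaply repaired --- so that the recursion is well founded, and (ii) showing that the linear-algebra updates needed to extract the separator marginals can be amortized across the recursion by reusing the nested-dissection factorization rather than recomputing an inverse, so the total genuinely stays $O(\hat N^{3/2})$ and does not pick up extra logarithmic or polynomial factors. By contrast, the $\pfaffian$-to-$\det$ identity, the existence and $O(\hat N)$ construction of the Kasteleyn orientation, and the $O(\hat N^{3/2})$ determinant evaluation via nested dissection are standard and can be invoked essentially verbatim from the cited literature.
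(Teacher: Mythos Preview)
Your proposal is correct and matches the paper's approach essentially line for line: Kasteleyn orientation plus $Z^*=\sqrt{\det K}$ computed by nested dissection for inference, and Wilson's separator-based recursion driven by minors of $K^{-1}$ for sampling. The one technical point the paper spells out that you gloss over is that $K$ has zero diagonal, so its leading principal submatrices are singular and plain Gaussian elimination fails; the fix (also in \citet{wilson}) is to pick a perfect matching, pair-order the vertices accordingly, swap columns within each pair to obtain $\overline{K}$ with nonsingular leading minors, and run nested dissection on the graph obtained by contracting the matching edges.
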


Algorithms, constructively proving the theorem, are directly inferred from \citet{wilson,thomas-middleton1}, with minor changes/generalizations. We describe the algorithms in Appendix \ref{ch:pl_proof}.

\begin{corollary}
     Exact inference and exact sampling of the PM model over $G^*$ (and, hence, zero-field Ising model $\mathcal{I}$ over the planar graph $G$) take $O(N^\frac32)$ time.
\end{corollary}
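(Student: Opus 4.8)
The plan is to chain the reduction of Sections~\ref{subsec:edg}--\ref{subsec:pmc} with Theorem~\ref{th:pmmodel}, checking only that no intermediate step exceeds $O(N^{3/2})$. First I would bound the size of the derived graph: triangulating $G$ adds $O(N)$ zero-weight edges in $O(N)$ time, the dual $G_F$ has $O(N)$ vertices and is built in $O(N)$, and expanding each degree-$3$ vertex of $G_F$ into a Fisher city gives $G^*$ with $|V(G^*)| = 2|E^*_I| = 2|E(G)| \le 6N - 12$ and, by planarity, $|E(G^*)| = O(N)$, all constructed in $O(N)$ steps. Hence $\hat N \triangleq |V(G^*)| = O(N)$, and applying Theorem~\ref{th:pmmodel} to $\hat G = G^*$ with the weights $c_{e^*}$ defined above returns the value $Z^*$ and an exact sample $E' \in \text{PM}(G^*)$ in $O(\hat N^{3/2}) = O(N^{3/2})$ time.

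For inference, I would then invert the identity in~(\ref{eq:zstar}), namely $Z = 2 Z^* \exp\bigl(-\sum_{e \in E(G)} J_e\bigr)$; since $|E(G)| \le 3N - 6$, evaluating this from $Z^*$ costs $O(N)$, and the zero-weight edges introduced by triangulation leave both $\mathcal W$ and $Z$ unchanged. For sampling, I would use Lemma~\ref{lemma:bij}: the returned $E'$ has a unique preimage $X' \in \mathcal C_+$ under $M$, recoverable in $O(N)$ steps by reading off, for each intercity edge, whether it lies in $E'$ (equal spins across $g(e^*)$) and propagating spin values through $G$ from the distinguished coordinate $x_1 = +1$. By Lemma~\ref{lemma:zfitopm}, $X'$ follows $\mathbb P(\,\cdot \mid x_1 = +1)$, so emitting $X'$ or $-X'$ with probability $\tfrac12$ each (one extra fair coin flip) produces a sample from~(\ref{eq:zfim}), since $\mathcal W(X) = \mathcal W(-X)$ in zero field. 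Summing the costs gives $O(N) + O(N^{3/2}) + O(N) = O(N^{3/2})$.

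I do not expect a genuine obstacle here, as the statement is essentially a bookkeeping corollary of Theorem~\ref{th:pmmodel} together with Lemmas~\ref{lemma:bij}--\ref{lemma:zfitopm}. The points demanding care---rather than true difficulties---are (i) confirming that every preprocessing stage (planar embedding if not given, triangulation, dual construction, Fisher-city expansion, and preimage recovery) is truly linear rather than, say, $O(N\log N)$, which follows from the cited planar routines and the planarity bounds $|E(G)|, |E(G^*)| = O(N)$; and (ii) handling the factor of two and the global sign symmetry so that the recovered configuration is drawn from the correct conditional distribution and then correctly unconditioned---both of which are already encapsulated in the two lemmas.
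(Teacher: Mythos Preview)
Your proposal is correct and follows essentially the same route as the paper, which treats the corollary as immediate: the paper has already recorded that $|V(G^*)|,|E(G^*)|=O(N)$ and that $G^*$ is built in $O(N)$ steps, so applying Theorem~\ref{th:pmmodel} with $\hat G=G^*$ yields the $O(N^{3/2})$ bound, and the translation back to $\mathcal I$ via~(\ref{eq:zstar}) and Lemmas~\ref{lemma:bij}--\ref{lemma:zfitopm} (including the $O(N)$ recovery of $X'$ from $E'$ and the fair-coin symmetrization) is exactly what the surrounding text describes. Your write-up simply makes these bookkeeping steps explicit.
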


\section{\texorpdfstring{$c$}{c}-nice Decomposition of the Topology} \label{sec:main}

We commence by introducing the concept of $c$-nice decomposition of a graph and stating the main result on the tractability of the new family of Ising models in Subsection \ref{sec:dec}. Then we proceed building  a helpful ``conditioning'' machinery in Subsection \ref{sec:cond} and subsequently describing algorithms for the the efficient exact inference (Subsection \ref{sec:inf}) and exact sampling (Subsection \ref{sec:samp}), therefore proving the aforementioned statement constructively.


\subsection{Decomposition tree and the key result (of the manuscript)} \label{sec:dec}

We mainly follow \citet{curticapean,reed} in the definition of the decomposition tree and its properties sufficient for our goals. (Let us also remind that we consider here graphs containing no self-loops or multiple edges.)

Graph $G'$ is a \textit{subgraph} of $G$ whenever $V(G') \subseteq V(G)$ and $E(G') \subseteq E(G)$. For two subgraphs $G'$ and $G''$ of $G$, let $G' \cup G'' = (V(G') \cup V(G''), E(G') \cup E(G''))$ (graph \textit{union}).


Consider a tree decomposition $\mathcal{T} = \langle T, \mathcal{G} \rangle$ of a graph $G$ into a set of subgraphs $\mathcal{G} \triangleq \{ G_t \}$ of $G$, where $t$ are \textit{nodes} of a tree $T$, that is, $t \in V(T)$. One of the nodes of the tree, $r \in V(T)$, is selected as the root. For each node $t \in V(T)$, its \textit{parent} is the first node on the unique path from $t$ to $r$.
$G_{\leq t}$  denotes the graph union of $G_{t'}$ for all the nodes $t'$ in $V(T)$ that are $t$ or its descendants.
$G_{\nleq t}$ denotes the graph union of $G_{t'}$ for all the nodes $t'$ in $V(T)$ that are neither $t$ nor descendants of~$t$.
For two neighboring nodes of the tree, $t, p \in V(T)$ and $\{ t, p \} \in E(T)$, the set of overlapping vertices of $G_t$ and $G_p$, $K \triangleq V(G_t) \cap V(G_p)$, is called an  \textit{attachment set} of~$t$ or~$p$. If $p$ is a parent of~$t$, then $K$ is a \textit{navel} of $t$. We assume that the navel of the root is empty.

$\mathcal{T}$ is a \textit{$c$-nice decomposition} of $G$ if the following requirements are satisfied:
\begin{enumerate}
    \item $\forall t \in V(T)$ with a navel $K$, it holds that $K = V(G_{\leq t}) \cap V(G_{\nleq t})$.

    \item Every attachment set $K$ is of size $0$, $1$, $2$, or $3$.

    \item $\forall t \in V(T)$, either $| V(G_t) | \leq c$ or $G_t$ is planar.

    \item If $t \in V(T)$ is such that $| V(G_t) | > c$, addition of all edges of type $e = \{ v, w \}$, where $v, w$ belong to the same attachment set of $t$ (if $e$ is not yet in $E(G_t)$) does not destroy planarity of $G_t$.
\end{enumerate}


Stating it informally, the $c$-nice decomposition of $G$ is a tree decomposition of $G$ into planar and ``small'' (of size at most $c$) subgraphs $G_t$, ``glued'' via subsets of at most three vertices of $G$. Figure~\ref{fig:k5}(a) shows an example of a $c$-nice decomposition with $c = 8$.
There are various similar ways to define a graph decomposition in literature, and the one presented above is customized (to our purposes) to include only properties significant for our consecutive analysis. 

The remainder of this Section is devoted to a constructive proof of the following key statement of the manuscript.
\begin{theorem} \label{th:main}
    Let $\mathcal{I} = \langle G, 0, J \rangle$ be any zero-field Ising model where there exists a $c$-nice decomposition $\mathcal{T}$ of $G$, where $c$ is an absolute constant. Then, there is an algorithm which, given $\mathcal{I}, \mathcal{T}$ as an input: (1) finds $Z$ and (2) samples a configuration from $\mathcal{I}$ in time $O( \sum\limits_{t \in V(T)} | V(G_t) |^\frac32 )$. 
\end{theorem}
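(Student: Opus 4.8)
The plan is to proceed by dynamic programming over the decomposition tree $T$, processing nodes from the leaves toward the root, and then, for sampling, making a second pass from the root toward the leaves. The central object to track at each node $t$ with navel $K$ (where $|K|\le 3$) is a small table of \emph{conditional partition functions}: for each of the at most $2^{|K|}$ spin assignments $\sigma$ to the vertices in $K$, the quantity $Z_{\le t}(\sigma)\triangleq\sum_{X}\mathcal{W}_{\le t}(X)$ where the sum is over configurations $X$ on $V(G_{\le t})$ that restrict to $\sigma$ on $K$, and $\mathcal{W}_{\le t}$ is the weight (\ref{eq:Z}) restricted to the edges of $G_{\le t}$. Because the field is zero, $Z_{\le t}(\sigma)=Z_{\le t}(-\sigma)$, so really only $4$ numbers are ever needed. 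Property~1 of the $c$-nice decomposition is exactly what makes this work: it guarantees that $G_{\le t}$ and $G_{\nleq t}$ share only the navel $K$, so conditioning on the spins of $K$ decouples the ``inside'' from the ``outside'' and the tables compose cleanly.

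The recursion at an internal node $t$ with children $t_1,\dots,t_m$ (navels $K_1,\dots,K_m\subseteq V(G_t)$) goes as follows. First I would build the local model on $G_t$: its own edge weights, plus for each child $t_i$ a ``message'' that encodes the already-computed table $Z_{\le t_i}(\cdot)$ as an effective interaction among the $\le 3$ vertices of $K_i$. Here the key subtlety is that a table over $3$ binary variables is not in general realizable by pairwise interactions on a triangle alone — but by Property~4 the triangle on $K_i$ can be added to $G_t$ while preserving planarity, and one checks that any symmetric (under global sign flip) function of three spins $\sigma_a,\sigma_b,\sigma_c$ is a linear combination of $1, \sigma_a\sigma_b, \sigma_b\sigma_c, \sigma_a\sigma_c$, hence is representable (up to an overall multiplicative constant that is tracked separately) by a choice of $J$ on that triangle; for $|K_i|\le 2$ this is even more immediate. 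So the child messages get absorbed as extra (zero-field) Ising interactions on $G_t$, keeping it planar and zero-field. Then, to extract $Z_{\le t}(\sigma)$ for the navel $K$ of $t$ itself, I condition on the $\le 8$ spin patterns of $K$; conditioning a zero-field planar Ising model on a constant number of spins yields again a zero-field planar Ising model on a graph of size $O(|V(G_t)|)$ (merge/contract the fixed vertices, or equivalently fix them and push their effect onto incident edges, splitting into the $2^{|K|}$ cases), to which Theorem~\ref{th:main}'s planar engine — the Corollary after Theorem~\ref{th:pmmodel}, giving $O(|V(G_t)|^{3/2})$ inference — applies as a black box. If instead $|V(G_t)|\le c$, we simply brute-force over $2^{c}=O(1)$ configurations. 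Summing the per-node cost $O(|V(G_t)|^{3/2})$ (times the constant $2^{|K|}\cdot2^{\max_i|K_i|}\le$ const) over all $t\in V(T)$ gives the claimed bound; the root has empty navel, so $Z_{\le r}=Z$.

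For exact sampling, I would do the standard second (downward) pass. Having stored all the tables, start at the root: $G_r$ together with the absorbed child-messages is a zero-field planar (or $O(1)$-sized) Ising model, and we can sample a configuration on $V(G_r)$ from it exactly using Theorem~\ref{th:pmmodel} (resp. brute force) — this is why the conditioning machinery must produce genuine Ising models and not just numbers. This fixes, in particular, the spins on each child navel $K_i$; recurse into child $t_i$ with those spins held fixed, i.e. sample from $G_{t_i}$'s local model \emph{conditioned} on the now-known values on $K_i$, which (as above) is again a zero-field planar or small Ising model of size $O(|V(G_{t_i})|)$. Proceeding down to the leaves produces a configuration on all of $V(G)=\bigcup_t V(G_t)$, and a short induction using Property~1 shows its law is exactly (\ref{eq:zfim}). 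The cost of the downward pass is again $\sum_t O(|V(G_t)|^{3/2})$.

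The main obstacle I expect is the ``message realizability'' step: showing that the conditional-partition-function table of a child — a nonnegative function on $\{\pm1\}^{K_i}/\{\pm1\}$ — can be re-expressed as a zero-field Ising interaction on the (possibly newly added, but by Property~4 planarity-preserving) clique on $K_i$, up to a tracked scalar, and that substituting this message for the child subgraph does not change any partition function or conditional distribution on the rest. For $|K_i|=3$ this is a $4\times 4$ linear-algebra identity ($1,\sigma_a\sigma_b,\sigma_b\sigma_c,\sigma_c\sigma_a$ vs. the four sign-symmetric patterns of three spins) whose invertibility must be checked, and one must confirm the resulting ``$J$'' values are finite real numbers (they are, since the table entries are strictly positive). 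The bookkeeping of the multiplicative constants accumulated from triangulation (Section~\ref{ch:planar}), from conditioning, and from these message reductions is routine but must be done carefully so that the final product reconstructs $Z$ exactly. Everything else — the leaf base case, Property~1 giving the decoupling, Property~2 keeping all tables of constant size, Property~3 selecting which black-box engine to call, and the complexity sum — is straightforward.
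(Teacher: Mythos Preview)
Your proposal is correct and matches the paper's proof essentially step for step: the upward DP with conditional partition functions $Z^{\le t}_{|S}$ (Section~\ref{sec:inf}), the absorption of child messages as pairwise interactions on the navel triangle via the $4\times 4$ system (the paper's Eq.~(\ref{eq:3vsystem}) is precisely your ``linear-algebra identity''), the use of Property~4 to add navel edges while preserving planarity, the contraction-based conditioning (Lemma~\ref{lemma:cond}), and the downward sampling pass (Section~\ref{sec:samp}). The only detail you leave implicit is that the paper's Lemma~\ref{lemma:cond} requires the conditioning set $K$ to be \emph{connected} in $G_t$, which is ensured by first adding zero-interaction edges inside $K$ (again licensed by Property~4); your ``merge/contract the fixed vertices'' remark is the right idea but needs this connectivity to keep the contracted graph planar.
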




\begin{figure}[!th]
    \centering
    \includegraphics[width=0.9\linewidth]{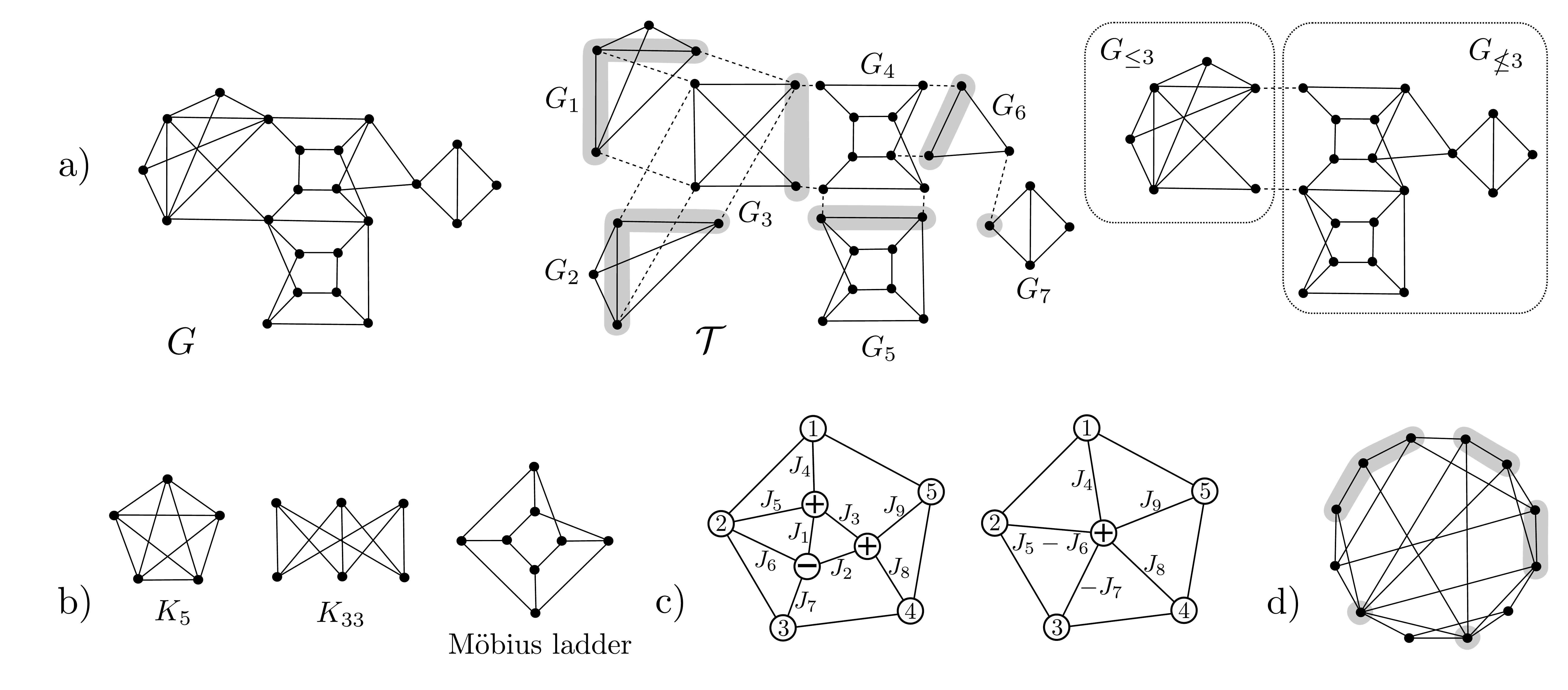}
    \caption{a) An exemplary graph $G$ and its $8$-nice decomposition $\mathcal{T}$, where $t\in \{1, \cdots, 7\}$ labels nodes of the decomposition tree $T$ and node $4$ is chosen as the root ($r = 4$). Identical vertices of $G$ in its subgraphs $G_t$ are shown connected by dashed lines. Navels of size $1$, $2$, and $3$ are highlighted. Component $G_5$ is nonplanar, and $G_4$ becomes nonplanar when all attachment edges are added (according to the fourth item of the definition of the $c$-nice decomposition). $G_{\leq 3}$ and $G_{\nleq 3}$ are shown with dotted lines. Note that the decomposition is non-unique for the graph. For instance, edges that belong  to the attachment set can go to either of the two subgraphs containing this set or even repeat in both. b) Minors $K_5$ and $K_{33}$ are forbidden in the planar graphs. M\"{o}bius ladder and its subgraphs are the only nonplanar graphs allowed in the $8$-nice decomposition of a $K_5$-free graph. c) The left panel is an example of conditioning on three vertices/spins in the center of a graph. The right panel shows a modified graph where the three vertices (from the left panel) are reduced to one vertex, then leading to a modification of the pairwise interactions within the associated zero-field Ising model over the reduced graph. 
    d) Example of a graph that contains $K_5$ as a minor: by contracting the highlighted groups of vertices and deleting the remaining vertices, one arrives at the $K_5$ graph.}
    \label{fig:k5}
\end{figure}

\subsection{Inference and sampling conditioned on 1, 2, or 3 vertices/spins} \label{sec:cond}

Before presenting the algorithm that proves Theorem \ref{th:main} constructively, let us introduce the auxiliary machinery of ``conditioning'', which describes the partition function of a zero-field Ising model over a planar graph conditioned on $1$, $2$, or $3$ spins. 
Consider a zero-field Ising model $\mathcal{I} = \langle G, 0, J \rangle$ defined over a planar graph $G$. We intend to use the algorithm for efficient inference and sampling of $\mathcal{I}$ as a black box in our subsequent derivations.

Let us now introduce the notion of \textit{conditioning}. Consider a spin configuration $X \in \{ \pm 1 \}^N$, a subset $V' = \{ v^{(1)}, \dots, v^{(\omega)} \} \subseteq V(G)$, and define a \textit{condition} $S = \{ x_{v^{(1)}} = s^{(1)}, \dots, x_{v^{(\omega)}} = s^{(\omega)} \}$ \textit{on} $V'$, where $s^{(1)}, \dots, s^{(\omega)} = \pm 1$ are fixed values. Conditional versions of the probability distribution (\ref{eq:zfim}--\ref{eq:Z}) and the \textit{conditional} partition function become 
\begin{eqnarray}
    && \mathbb{P}(X | S) \triangleq \frac{\mathcal{W}(X) \times \mathbbm{1}(X | S)}{Z_{|S}}, \quad \mathbbm{1} (X | S) \triangleq 
    \left\{ \begin{array} {cc}1, & x_{v^{(1)}} = s^{(1)}, \dots, x_{v^{(\omega)}} = s^{(\omega)} \\ 0, & \mbox{otherwise} \end{array} \right.
, 
    \label{eq:zfim_cond} \\ &&
    \text{where}~Z_{|S} \triangleq \sum_{X \in \{\pm 1\}^N} \mathcal{W}(X) \times \mathbbm{1} (X | S). \label{eq:Z_cond}
\end{eqnarray}

Notice that when $\omega = 0$, $S = \{ \}$ and (\ref{eq:zfim_cond}--\ref{eq:Z_cond}) is reduced to (\ref{eq:zfim}--\ref{eq:Z}). The subset of $V(G)$ is \textit{connected} whenever the subgraph, induced by this subset is connected. Inference and sampling of $\mathcal{I}$ can be extended as follows (a formal proof can be found in the Appendix \ref{sec:lp}).
\begin{lemma} \label{lemma:cond}
Given $\mathcal{I} = \langle G, 0, J \rangle$ where $G$ is planar and a condition $S$ on a connected subset $V' \subseteq V(G)$, $| V' | \leq 3$, computing the conditional partition function $Z_{|S}$ and sampling from $\mathbb{P}(X | S)$
are tasks of $O(N^\frac32)$ complexity.
\end{lemma}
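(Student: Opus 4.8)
The plan is to reduce conditioning on a connected set $V'$ of size at most $3$ to a (polynomial number of) unconditioned planar zero-field Ising inference/sampling problems, each of which is handled by the Corollary following Theorem \ref{th:pmmodel}. The key idea is that a zero-field Ising model is invariant under the global spin flip $X \mapsto -X$; this is what lets us ``pin'' one spin for free, and, combined with edge contraction, lets us pin a connected cluster.

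First I would treat the base cases. If $|V'| = 0$ there is nothing to do. If $|V'| = 1$, say $S = \{x_v = s\}$, then by the $X \mapsto -X$ symmetry $\mathcal{W}(X) = \mathcal{W}(-X)$, so exactly half of the total weight sits on configurations with $x_v = +1$ and half on those with $x_v = -1$; hence $Z_{|S} = \tfrac12 Z$ regardless of $s$, and sampling from $\mathbb{P}(X \mid S)$ is done by drawing $X$ from $\mathbb{P}$ and flipping the sign of the whole configuration if $x_v \neq s$. This is $O(N^{3/2})$ by the Corollary.

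Next, the cases $|V'| = 2, 3$. Here I would use the connectedness hypothesis: the subgraph induced on $V'$ is connected, so I can pick a spanning tree of it (a single edge for $|V'|=2$; two edges for $|V'|=3$) and, for each of the $2^{|V'|-1}$ sign patterns of $V'$ that are consistent up to global flip, ``force'' the spins along that spanning tree to take the prescribed relative values. Concretely, to condition on $x_v = x_w$ across an edge $e = \{v,w\}$, one contracts $v$ and $w$ into a single vertex, merging their incident edges and adding the corresponding couplings (parallel edges $\{v,u\},\{w,u\}$ become one edge with coupling $J_{\{v,u\}} + J_{\{w,u\}}$; this is exactly the reduction sketched in Figure \ref{fig:k5}(c)). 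To condition on $x_v = -x_w$ one first negates the couplings on all edges incident to $w$ (a gauge transformation $x_w \mapsto -x_w$, under which the zero-field partition function is unchanged and samples transform by flipping $x_w$), and then contracts. Since $G$ is planar and contraction of an edge and the gauge transformation preserve planarity, the contracted graph $G'$ is again planar with $|V(G')| < N$; its planar embedding is obtained from that of $G$ in $O(N)$ time (or recomputed via \citet{boyer}). Doing this along the chosen spanning tree of $V'$ produces, for each admissible sign pattern $\sigma$ of $V'$, a planar zero-field Ising model $\mathcal{I}_\sigma$ whose partition function equals (up to the accumulated gauge, which does not change the value) the conditional weight-sum $\sum_{X : X|_{V'} \sim \sigma} \mathcal{W}(X)$ over configurations whose restriction to $V'$ agrees with $\sigma$ up to a global flip. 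Because there are only $O(1)$ such patterns, computing $Z_{|S}$ costs $O(N^{3/2})$; for sampling, one picks the single pattern $\sigma$ compatible with $S$, draws a configuration of $\mathcal{I}_\sigma$, undoes the contractions and gauge transformations, and finally flips the global sign if needed so that the pinned vertices match $S$ exactly rather than only up to global flip — all $O(N^{3/2})$.

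The main obstacle — and the place where I would be most careful in the full proof — is bookkeeping the bijection between configurations: I must check that contraction together with the per-edge gauge flips sets up an exact measure-preserving correspondence between $\{X \in \{\pm1\}^N : x_{v^{(i)}} = s^{(i)}\ \forall i\}$ (or its union with the globally flipped copy) and $\{\pm1\}^{V(G')}$, with matching weights, so that the sampled $G'$-configuration pushes forward to a correct draw from $\mathbb{P}(\cdot \mid S)$. The connectedness of $V'$ is essential precisely here: it guarantees the spanning tree exists so that a single consistent sign pattern determines all relative signs inside $V'$, and the global-flip symmetry then accounts for the remaining factor of two. Everything else is routine: planarity is stable under the operations used, each operation is $O(N)$, and only $O(1)$ planar inferences/samplings are invoked, giving the claimed $O(N^{3/2})$ bound.
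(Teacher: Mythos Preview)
Your proposal is correct and follows essentially the same route as the paper's proof: handle $|V'|\le 1$ via the global spin-flip symmetry, and for $|V'|=2,3$ contract $V'$ to a single vertex along its spanning edges (your gauge flip $x_w\mapsto -x_w$ is exactly the paper's rule $J_{\{z,v\}}=J_e\, s^{(i)}$), reducing to the one-pinned-vertex case on a smaller planar zero-field model. The only cosmetic difference is that the paper proceeds inductively ($\omega=3\to\omega=2\to\omega=1$) and tracks the multiplicative edge factor $\exp(J_{e^0}s^{(1)}s^{(2)})$ explicitly, whereas you contract the whole spanning tree at once and defer that constant to the ``bookkeeping'' you flag; make sure in the write-up that this factor (and the $\tfrac12$ from the final global flip) is carried through so that $Z_{|S}$, not just a proportional quantity, is recovered.
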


Intuitively, the conditioning algorithm proving the Lemma takes the subset of connected vertices and ``collapses'' them into a single vertex. The graph remains planar and the task is reduced to conditioning on one vertex, which is an elementary operation given the algorithm from section \ref{ch:planar}.  (See Figure \ref{fig:k5}(c) for an illustration.)

\subsection{Inference algorithm} \label{sec:inf}


This subsection constructively proves the inference part of Theorem \ref{th:main}.
For each $t \in V(T)$, let $\mathcal{I}_{\leq t} \triangleq \langle G_{\leq t}, 0, \{ J_e \, | \, e \in E(G_{\leq t}) \subseteq E(G) \} \rangle$ denote a zero-field Ising \textit{submodel} \textit{induced} by $G_{\leq t}$. Denote the partition function and subvector of $X$ related to $\mathcal{I}_{\leq t}$ as $Z^{\leq t}$ and $X_{\leq t} \triangleq \{ x_v | v \in V(G_{\leq t}) \}$, respectively.

Further, let $K$ be $t$'s navel and let $S = \{ \forall v \in K: x_v = s^{(v)} \}$ denote some condition on $K$.
Recall that $| K | \leq 3$.
For each $t$, the algorithm computes conditional partition functions $Z^{\leq t}_{| S}$ for all choices of condition spin values $\{ s^{(v)} = \pm 1 \}$.
Each $t$ is processed only when its children have already been processed, so the algorithm starts at the leaf and ends at the root. If $r \in G(T)$ is a root, its navel is empty and $G_{\leq r} = G$, hence $Z = Z^{\leq r}_{| \{  \}}$ is computed after $r$'s processing.

Suppose all children of $t$, $c_1, ..., c_m \in V(T)$ with navels $K_1, ..., K_m \subseteq V(G_t)$ have already been processed, and now $t$ itself is considered.  Denote a spin configuration on $G_t$ as $Y_t \triangleq \{ y_v = \pm 1 \, | \, v \in V(G_t) \}$.
$\mathcal{I}_{\leq c_1}, ..., \mathcal{I}_{\leq c_m}$ are $\mathcal{I}_{\leq t}$'s submodels induced by $G_{\leq c_1}, ..., G_{\leq c_m}$, which can only intersect at their navels in $G_t$. Based on this, one states the following dynamic programming relation:
\begin{equation}
Z^{\leq t}_{| S} = \sum_{Y_t \in \{ \pm 1 \}^{| V(G_t) |} } \mathbbm{1} (Y_t | S) \exp \left( \sum_{e = \{ v, w \} \in E(G_t)} J_e y_v y_w \right) \cdot \prod_{i = 1}^m 
Z^{\leq c_i}_{| S_i [ Y_t ]}.
    \label{eq:dp}
\end{equation}
Here, $S_i [ Y_t ]$ denotes a condition $\{ \forall v \in K_i: x_v = y_v \}$ on $K_i$. The goal is to efficiently perform summation in~\eqref{eq:dp}. Let $I^{(0)}, I^{(1)}, I^{(2)}, I^{(3)}$ be a partition of $\{ 1, ..., m \}$ by navel sizes. Figure \ref{fig:infsamp}(a,b) illustrates inference in $t$.

\begin{enumerate}
    \item \textbf{Navels of size 0, 1.} Notice that if $i \in I^{(0)}$, then $Z^{\leq c_i}_{| \{  \}} = Z^{\leq c_i}$ is constant, which was computed before. The same is true for $i \in I^{(1)}$ and $Z^{\leq c_i}_{| S^{(i)} [Y_t]} = \frac{1}{2} Z^{\leq c_i}$.
%
    \item \textbf{Navels of size 2.} Let $i \in I^{(2)}$ denote $K_i = \{ u^i, q^i \}$ and simplify notation $Z^{\leq c_i}_{y_1, y_2} \triangleq Z^{\leq c_i}_{x_{u^i} = y_1, x_{q^i} = y_2}$ for convenience. Notice that $Z^{\leq c_i}_{|S_i [ Y_t ]}$ is strictly positive, and due to the zero-field nature of $\mathcal{I}_{\leq c_i}$, one finds $Z^{\leq c_i}_{| +1, +1} = Z^{\leq c_i}_{| -1, -1}$ and $Z^{\leq c_i}_{| +1, -1} = Z^{\leq c_i}_{| -1, +1}$. Then, one arrives at $\log Z^{\leq c_i}_{| S_i [ Y_t ]} = A_i + B_i y_{u^i} y_{q^i} $, where $A_i \triangleq \log Z^{\leq c_i}_{| +1, +1} + \log Z^{\leq c_i}_{| +1, -1}$ and $B_i \triangleq \log Z^{\leq c_i}_{| +1, +1} - \log Z^{\leq c_i}_{| +1, -1}$.
    \item \textbf{Navels of size 3.} Let $i \in I^{(3)}$, and as above, denote $K_i = \{ u^i, q^i, h^i \}$ and $Z^{\leq c_i}_{y_1, y_2, y_3} \triangleq Z^{\leq c_i}_{x_{u^i} = y_1, x_{q^i} = y_2, x_{h^i} = y_3}$. 
   \begin{figure}
    \centering
    \includegraphics[width=\linewidth]{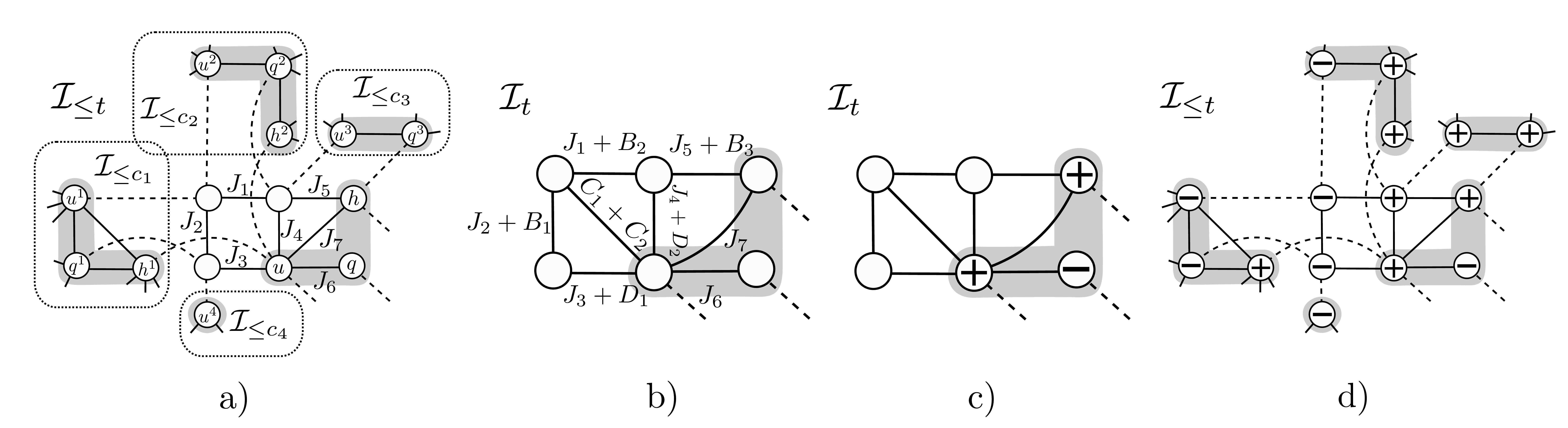}
    \caption{a) Example of inference at node $t$ with children $c_1, c_2, c_3, c_4$. Navels $K_1 = \{ u^1, q^1, h^1 \}, K_2 = \{ u^2, q^2, h^2 \}, K_3 = \{ u^2, q^2 \}, K_4 = \{ u^4 \}$, and $K = \{ u, q, h \}$ are highlighted. Fragments of $\mathcal{I}_{\leq c_i}$ are shown with dotted lines. Here, $I^{(0)} = \varnothing, I^{(1)} = \{ 4 \}, I^{(2)} = \{ 3 \}, \text{and}~ I^{(3)} = \{ 1, 2 \}$, indicating that one child is glued over one node, one child is glued over two nodes, and two children are glued over three nodes. b) ``Aggregated'' Ising model $\mathcal{I}_t$ and its pairwise interactions are shown. Both c) and d) illustrate sampling over $\mathcal{I}_t$. One sample spins in $\mathcal{I}_t$ conditioned on $S^{(t)}$ and then repeats the procedure at the child nodes.}
    \label{fig:infsamp}
\end{figure} 
    
    Due to the zero-field nature of $\mathcal{I}_{\leq c_i}$, it holds that $Z^{\leq c_i}_{| +1, y_2, y_3} = Z^{\leq c_i}_{| -1, y_2, y_3}$. Observe that there are such $A_i, B_i, C_i, D_i$ that $ \log Z^{\leq c_i}_{| y_1, y_2, y_3} = A_i + B_i 
    y_1 y_2+ C_i y_1 y_3 + D_i y_2 y_3$ for all $y_1, y_2, y_3=\pm 1$, which is guaranteed since the following system of equations has a solution:
    \begin{equation}
        \begin{bmatrix} \log Z^{\leq c_i}_{|+1, +1, +1} \\ \log Z^{\leq c_i}_{|+1, +1, -1 } \\ \log Z^{\leq c_i}_{|+1, -1, +1} \\ \log Z^{\leq c_i}_{|+1, -1, -1} \end{bmatrix} = \begin{bmatrix} +1 & +1 & +1 & +1 \\ +1 & +1 & -1 & -1 \\ +1 & -1 & +1 & -1 \\ +1 & -1 & -1 & +1 \end{bmatrix} \times \begin{bmatrix} A_i \\ B_i \\ C_i \\ D_i \end{bmatrix}.
        \label{eq:3vsystem}
    \end{equation}
\end{enumerate}

Considering three cases, one rewrites Eq.~(\ref{eq:dp}) as
\begin{align}
    Z^{\leq t}_{| S} &= M \cdot \sum_{Y_t} \mathbbm{1} (Y_t | S) \exp \biggl( \sum_{e = \{ v, w \} \in E(G_t)} J_e y_v y_w + \sum_{i \in I^{(2)} \cup I^{(3)}} B_i y_{u^i} y_{q^i} \nonumber \\
    &+ \sum_{i \in I^{(3)}} ( C_i y_{u^i} y_{h^i} + D_i y_{q^i} y_{h^i}) \biggr),
    \label{eq:dp4}
\end{align}
where $M \triangleq 2^{- | I^{(1)} |} \cdot \left( \prod_{i \in I^{(0)} \cup I^{(1)}} Z^{\leq c_i} \right) \cdot \exp( \sum_{i \in I^{(2)} \cup I^{(3)}} A_i)$. The sum in Eq.~(\ref{eq:dp4}) is simply a conditional partition function of a zero-field Ising model $\mathcal{I}_t$ defined over a graph $G_t$ with pairwise interactions of $\mathcal{I}$ adjusted by the addition of $B_i, C_i, \text{and}~ D_i$ summands at the appropriate navel edges (if a corresponding edge is not present in $G_t$, it has to be added).
If $| V(G_t) | \leq c$, then (\ref{eq:dp4}) is computed a maximum of four times (depending on navel size) by brute force ($O(1)$ time). Otherwise, if $K$ is a disconnected set in $G_t$, we add zero-interaction edges inside it to make it connected. Possible addition of edges inside $K, K_1, \dots, K_m$ doesn't destroy planarity according to the fourth item in the definition of the $c$-nice decomposition above. Finally, we compute (\ref{eq:dp4}) using Lemma \ref{lemma:cond} in time $O( | V(G_t) |^\frac32 )$.


The inference part of Theorem \ref{th:main} follows directly from the procedure just described.

\subsection{Sampling algorithm} \label{sec:samp}

Next, we address the sampling part of Theorem \ref{th:main}.
We extend the algorithm from section \ref{sec:inf} so that it supports efficient sampling from $\mathcal{I}$.
Assume that the inference pass through $T$ (from leaves to root) has been done so that $\mathcal{I}_t$ for all $t \in V(T)$ are computed. 
Denote $X_t \triangleq \{ x_v \, | \, v \in V(G_t) \}$.
The sampling algorithm runs backwards, first drawing spin values $X_r$ at the root $r$ of $T$ from the marginal distribution $\mathbb{P}(X_r)$, and then processing each node $t$ of $T$ after its parent $p$ is processed. Processing consists of drawing spins $X_t$ from $\mathbb{P}(X_t \, | \, X_p) = \mathbb{P}(X_t \, | \, X^{(t)} \triangleq \{ x_v \, | \, v \in K \})$,
where $K$ is a navel of $t$. This marginal-conditional scheme generates the correct sample $X$ of spins over $G$.

Let 
$\mathbb{P}_{\leq t} (X_{\leq t})$ define a spin distribution of $\mathcal{I}_{\leq t}$. Because the Ising model is an example of Markov Random Field, it holds that $\mathbb{P}_{\leq t} (X_{\leq t} \, | \, X^{(t)} ) = \mathbb{P} (X_{\leq t} \, | \, X^{(t)} )$. We further derive 
\begin{align}
    &\mathbb{P}(X_t \, | \, X^{(t)}) = \mathbb{P}_{\leq t}(X_t \, | \, X^{(t)}) = \frac{1}{Z^{\leq t}} \sum_{X_{\leq t}\setminus X_t} \exp \biggl( \sum_{e = \{ v, w \} \in E(G_{\leq t})} J_e x_v x_w \biggr) \nonumber \\
    &= \frac{1}{Z^{\leq t}} \cdot \exp \biggl( \sum_{e = \{ v, w \} \in E(G_t)} J_e x_v x_w \biggr) \cdot \prod_{i = 1}^m Z^{\leq c_i}_{|S_i [X_t]} \nonumber \\
    &\propto \exp \biggl( \sum_{e = \{ v, w \} \in E(G_t)} J_e x_v x_w + \sum_{i \in I^{(2)} \cup I^{(3)}} B_i x_{u^i} x_{q^i} + \sum_{i \in I^{(3)}} ( C_i x_{u^i} x_{h^i} + D_i x_{q^i} x_{h^i}) \biggr). 
\end{align}

In other words, sampling from $\mathbb{P}(X_t \, | \, X^{(t)})$ is reduced to sampling from $\mathcal{I}_t$ conditional on spins $X^{(t)}$ in the navel $K$. It is done via brute force if $| V(G_t) | \leq c$; otherwise, Lemma \ref{lemma:cond} allows one to draw $X_t$ in $O(| V(G_t) |^\frac32)$, since $| K | \leq 3$. Sampling efforts cost as much as inference, which concludes the proof of Theorem \ref{th:main}. Figure \ref{fig:infsamp}(c,d) illustrates sampling in $t$.

\section{Minor-free Extension of Planar Zero-field Ising Models} \label{sec:minorfree}

\textit{Contraction} is an operation of removing two adjacent vertices $v$ and $u$ (and all edges incident to them) from the graph and adding a new vertex $w$ adjacent to all neighbors of $v$ and $u$.
For two graphs $G$ and $H$, $H$ is $G$'s \textit{minor}, if it is isomorphic to a graph obtained from $G$'s subgraph by a series of contractions (Figure \ref{fig:k5}(d)). $G$ is \textit{$H$-free}, if $H$ is not $G$'s minor.

According to Wagner's theorem \citep[chap.~4.4]{diestel}, a set of planar graphs coincides with an intersection of $K_{33}$-free graphs and $K_5$-free graphs. Some nonplanar graphs are $K_{33}$-free ($K_5$-free), for example, $K_5$ ($K_{33}$). $K_{33}$-free ($K_5$-free) graphs are neither genus-bounded (a disconnected set of $g$ $K_5$ ($K_{33}$) graphs is  $K_{33}$-free ($K_5$-free) and has a genus of $g$ \citep{battle}). $K_{33}$-free ($K_5$-free) graphs are treewidth-unbounded as well (planar square grid of size $t \times t$ is $K_{33}$-free and $K_5$-free and has a treewidth of $t$ \citep{bodlaender}).
In the remainder of the section we show that a $c$-nice decomposition of $K_{33}$-free graphs and $K_5$-free graphs can be computed in polynomial time and, hence, inference and sampling of zero-field Ising models on these graph families can be performed efficiently.

\subsection{Zero-field Ising Models over \texorpdfstring{$K_{33}$}{K33}-free Graphs}

Even though $K_{33}$-free graphs are Pfaffian-orientable (with the Pfaffian orientation computable in polynomial time, see \citet{vazirani}), the expanded dual graph---introduced to map the zero-field Ising model to the respective PM problem---is not necessarily $K_{33}$-free. Therefore, the latter is generally not Pfaffian-orientable. Hence, the reduction to a well-studied perfect matching counting problem is not straightforward.

\begin{theorem} \label{th:k33dec}
Let $G$ be $K_{33}$-free graph of size $N$ with no loops or multiple edges. Then the $5$-nice decomposition $\mathcal{T}$ of $G$ exists and can be computed in time $O(N)$.
\end{theorem}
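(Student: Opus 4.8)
The plan is to invoke the classical structural decomposition theorem for $K_{3,3}$-free graphs, due to Wagner, and then verify that the resulting decomposition tree satisfies the four requirements of a $5$-nice decomposition. Recall Wagner's theorem on $K_{3,3}$-free graphs: every $K_{3,3}$-free graph can be built, by "$k$-clique-sums" for $k \le 2$, from (i) planar graphs and (ii) copies of $K_5$. More precisely, any $3$-connected $K_{3,3}$-free graph is either planar or is $K_5$ itself; and a general $K_{3,3}$-free graph is obtained by gluing such pieces along vertex sets of size at most two. First I would recall/cite the standard decomposition machinery (e.g.\ from \citet{reed} or the "SPQR tree" / $3$-connected-components framework referenced via \citet{curticapean}) that produces, in $O(N)$ time, a tree $T$ whose nodes are the $3$-connected components, bonds, and cycles arising from splitting $G$ along its $1$- and $2$-cuts. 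This is exactly the tree decomposition $\mathcal{T} = \langle T, \mathcal{G}\rangle$ with all attachment sets of size $\le 2$.

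Next I would check the four $c$-nice conditions with $c = 5$. Condition~2 (attachment sets of size $0,1,2,3$) holds since the $3$-connected-components decomposition only uses cuts of size $\le 2$. Condition~1 (the navel $K$ equals $V(G_{\le t}) \cap V(G_{\nleq t})$) is the defining "clean separation" property of the SPQR/$3$-block tree — the piece below $t$ and the rest of the graph meet exactly in the separating set — and I would cite it rather than reprove it. For conditions~3 and~4 I would apply Wagner's characterization to each node $G_t$: after adding the "virtual" edge(s) joining the vertices of each attachment set (which is what the $3$-connected-components construction does anyway), each $G_t$ is a $3$-connected $K_{3,3}$-free graph, a cycle, or a bond; by Wagner it is therefore planar or equal to $K_5$. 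If $G_t = K_5$ then $|V(G_t)| = 5 \le c$, giving condition~3; if $G_t$ is planar then condition~4 holds because adding the attachment-set edges is precisely the $3$-connected-components construction, which keeps the planar pieces planar (the virtual edges of a planar $3$-connected component can be drawn inside the component). Some care is needed because the $c$-nice definition allows attachment sets of size $3$ and requires adding all edges within each attachment set of every node; but since every attachment set here has size $\le 2$, there is at most one such edge per attachment set, and the argument above covers it.

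One subtlety I would handle explicitly: the $c$-nice definition demands that for planar nodes $G_t$, adding attachment edges for \emph{all} attachment sets of $t$ simultaneously preserves planarity, not just one at a time. For the SPQR-type decomposition this is automatic, since the virtual edges incident to a given $3$-connected planar component are precisely the edges one adds, and they are part of a planar embedding of that component by construction; still, I would spell out why distinct attachment sets of the same node do not interfere (they correspond to distinct tree edges at $t$, hence distinct virtual edges, and the $3$-connected component with all its virtual edges is a fixed planar graph). Finally, the $O(N)$ running time: finding the block-cut tree and the SPQR trees of the blocks is linear time by the classical Hopcroft–Tarjan algorithm (as used in \citet{curticapean}), and relabeling/assembling this into $\mathcal{T}$ with the root chosen arbitrarily is another $O(N)$ pass.

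The main obstacle is not any single hard estimate but rather marshalling the correct form of the structural theorem and the decomposition algorithm from the literature and matching their conventions to our bespoke $c$-nice definition — in particular, being precise about which virtual/attachment edges are added at each node, confirming that $K_5$ is the \emph{only} non-planar $3$-connected $K_{3,3}$-free block (so $c = 5$ suffices and is tight), and verifying Condition~1 in our exact phrasing. I would expect the companion result for $K_5$-free graphs (where the non-planar $3$-connected blocks are the M\"obius ladder $V_8$ on $8$ vertices, forcing $c = 8$) to need the analogous but heavier Wagner-type structure theorem, which is presumably why the manuscript states the $K_{3,3}$-free case, with its cleaner $c = 5$ bound, separately.
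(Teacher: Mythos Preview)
Your proposal is correct and matches the paper's approach: the paper also reduces to the biconnected case via the block-cut tree, then builds the tree of triconnected components (the SPQR tree) in $O(N)$ time via Hopcroft--Tarjan/Gutwenger--Mutzel, invokes the structural result (attributed there to Hall) that every nonplanar triconnected component of a $K_{3,3}$-free graph is exactly $K_5$, and obtains the $5$-nice decomposition by stripping the virtual edges. Your write-up is in fact more explicit than the paper's in checking each of the four $c$-nice conditions, particularly the simultaneous-planarity subtlety in Condition~4, which the paper leaves implicit.
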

\begin{proof}(Sketch)
    An equivalent decomposition is constructed by \citet{hopcroft2,gutwenger,vo} in time $O(N)$. We put a formal proof into Appendix \ref{ch:k33dec}.
\end{proof}

\begin{remark}
    The $O(N)$ construction time of $\mathcal{T}$ guarantees that $\sum_{t \in V(T)} | V(G_t) | = O(N)$. All nonplanar components in $\mathcal{T}$ are isomorphic to $K_5$ or its subgraph.
\end{remark}



Therefore, if $G$ is $K_{33}$-free, it satisfies all the conditions needed for efficient inference and sampling, described in section \ref{sec:main}.

\begin{theorem} \label{th:k33comp}
For any $\mathcal{I} = \langle G, 0, J \rangle$ where $G$ is $K_{33}$-free, inference or sampling of $\mathcal{I}$ takes $O(N^\frac32)$ steps.
\end{theorem}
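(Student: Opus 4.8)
The plan is to combine Theorem~\ref{th:main} with Theorem~\ref{th:k33dec} and then estimate the resulting running time. Theorem~\ref{th:k33dec} guarantees that for a $K_{33}$-free graph $G$ of size $N$ a $5$-nice decomposition $\mathcal{T} = \langle T, \mathcal{G} \rangle$ exists and can be computed in $O(N)$ time. Since $c = 5$ is an absolute constant, $\mathcal{T}$ meets the hypotheses of Theorem~\ref{th:main}, so there is an algorithm that, given $\mathcal{I}$ and $\mathcal{T}$, computes $Z$ and draws an exact sample in time $O\bigl(\sum_{t \in V(T)} |V(G_t)|^{3/2}\bigr)$. It therefore remains to bound this sum and to account for the cost of producing $\mathcal{T}$ itself.

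First I would invoke the Remark following Theorem~\ref{th:k33dec}: because $\mathcal{T}$ is built in $O(N)$ time, the total size of all components satisfies $\sum_{t \in V(T)} |V(G_t)| = O(N)$. (Intuitively, each vertex of $G$ appears in only a bounded number of components and the number of tree nodes is $O(N)$, so the aggregate vertex count is linear.) Next I would pass from the $\ell_1$-type bound to the needed $\ell_{3/2}$-type bound via the elementary inequality $\sum_i a_i^{3/2} \le \bigl(\sum_i a_i\bigr)^{3/2}$ valid for nonnegative reals (equivalently, $\|a\|_{3/2} \le \|a\|_1$); applying it with $a_t = |V(G_t)|$ yields
\begin{equation}
\sum_{t \in V(T)} |V(G_t)|^{3/2} \;\le\; \Bigl( \sum_{t \in V(T)} |V(G_t)| \Bigr)^{3/2} \;=\; O(N^{3/2}).
\end{equation}
Adding the $O(N)$ preprocessing cost of constructing $\mathcal{T}$ (which is dominated by $O(N^{3/2})$) gives the claimed $O(N^{3/2})$ bound for both inference and sampling.

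The only genuinely delicate point is the justification that $\sum_{t} |V(G_t)| = O(N)$; everything else is bookkeeping. I would establish it by noting that the decomposition produced in Theorem~\ref{th:k33dec} is essentially the triconnected-component (SPQR-tree-style) decomposition of \citet{hopcroft2,gutwenger,vo}, in which every edge of $G$ lies in a bounded number of components, every component is either a bounded-size piece or a planar "rigid" piece whose edge set is (up to the bounded number of virtual attachment edges) a subset of $E(G)$, and the number of tree nodes is $O(N)$. Hence $\sum_t |E(G_t)| = O(N + |E(G)|) = O(N)$ using $|E(G)| = O(N)$ for $K_{33}$-free (indeed minor-closed, sparse) graphs, and since each $G_t$ has at most a bounded number of isolated vertices beyond those covered by its edges, $\sum_t |V(G_t)| = O(N)$ as well. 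With this in hand the time bound follows immediately from Theorem~\ref{th:main} and the power-mean inequality above.
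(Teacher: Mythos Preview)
Your proposal is correct and follows essentially the same route as the paper: invoke Theorem~\ref{th:k33dec} for an $O(N)$ construction of a $5$-nice decomposition, feed it into Theorem~\ref{th:main}, use the Remark to get $\sum_t |V(G_t)| = O(N)$, and then apply the inequality $\sum_i a_i^{3/2} \le (\sum_i a_i)^{3/2}$ (which the paper phrases as ``convexity of $f(z)=z^{3/2}$'') to conclude $O(N^{3/2})$. Your extra paragraph justifying $\sum_t |V(G_t)| = O(N)$ via the SPQR-style triconnected decomposition is more detail than the paper gives, but it is exactly the content of the cited Remark, so the arguments coincide.
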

\begin{proof}
    Finding $5$-nice $\mathcal{T}$ for $G$ is the $O(N)$ operation. Provided with $\mathcal{T}$, inference and sampling take at most
    \begin{equation}
        O\left(\sum_{t \in V(T)} | V(G_t) |^\frac32\right) = O\left(\left(\sum_{t \in V(T)} | V(G_t) |\right)^\frac32\right) = O(N^\frac32)
    \end{equation}
    where we apply convexity of $f(z) = z^\frac32$ and the Remark after Theorem \ref{th:k33dec}.
\end{proof}

\subsection{\texorpdfstring{$K_{33}$}{K33}-free Zero-field Ising Models: Implementation and Tests} \label{sec:imp}

In addition to theoretical justification, which is fully presented in this manuscript, we perform emprical simulations to validate correctness of inference and sampling algorithm for $K_{33}$-free zero-field Ising models.

To test the correctness of inference, we generate random $K_{33}$-free models of a given size and then compare the value of PF computed in a brute force way (tractable for sufficiently small graphs) and by our algorithm. See the graph generation algorithm in Appendix \ref{sec:grgen}. We simulate samples of sizes from $\{ 10, ..., 15 \}$ ($1000$ samples per size) and verify that respective expressions coincide.

When testing sampling implementation, we take for granted that the produced samples do not correlate given that the sampling procedure accepts the Ising model as input and uses independent random number generator inside. The construction does not have any memory, therefore, it generates statistically independent samples. To test that the empirical distribution is approaching a theoretical one (in the limit of the infinite number of samples), we draw different numbers $m$ of samples from a model of size $N$. Then we find Kullback-Leibler divergence between the probability distribution of the model (here we use our inference algorithm to compute the normalization, $Z$) and the empirical probability, obtained from samples. Fig.~\ref{fig:kl} shows that KL-divergence converges to zero as the sample size increases. Zero KL-divergence corresponds to equal distributions.

Finally, we simulate inference and sampling for random models of different size $N$ and observe that the computational time (efforts) scales as $O(N^\frac32)$ (Figure~\ref{fig:time}).\footnote{Implementation of the algorithms is available at \href{https://github.com/ValeryTyumen/planar_ising}{https://github.com/ValeryTyumen/planar\_ising}.}

\begin{figure}
\centering
\subfigure[]{
  \centering
  \includegraphics[width=0.45\linewidth]{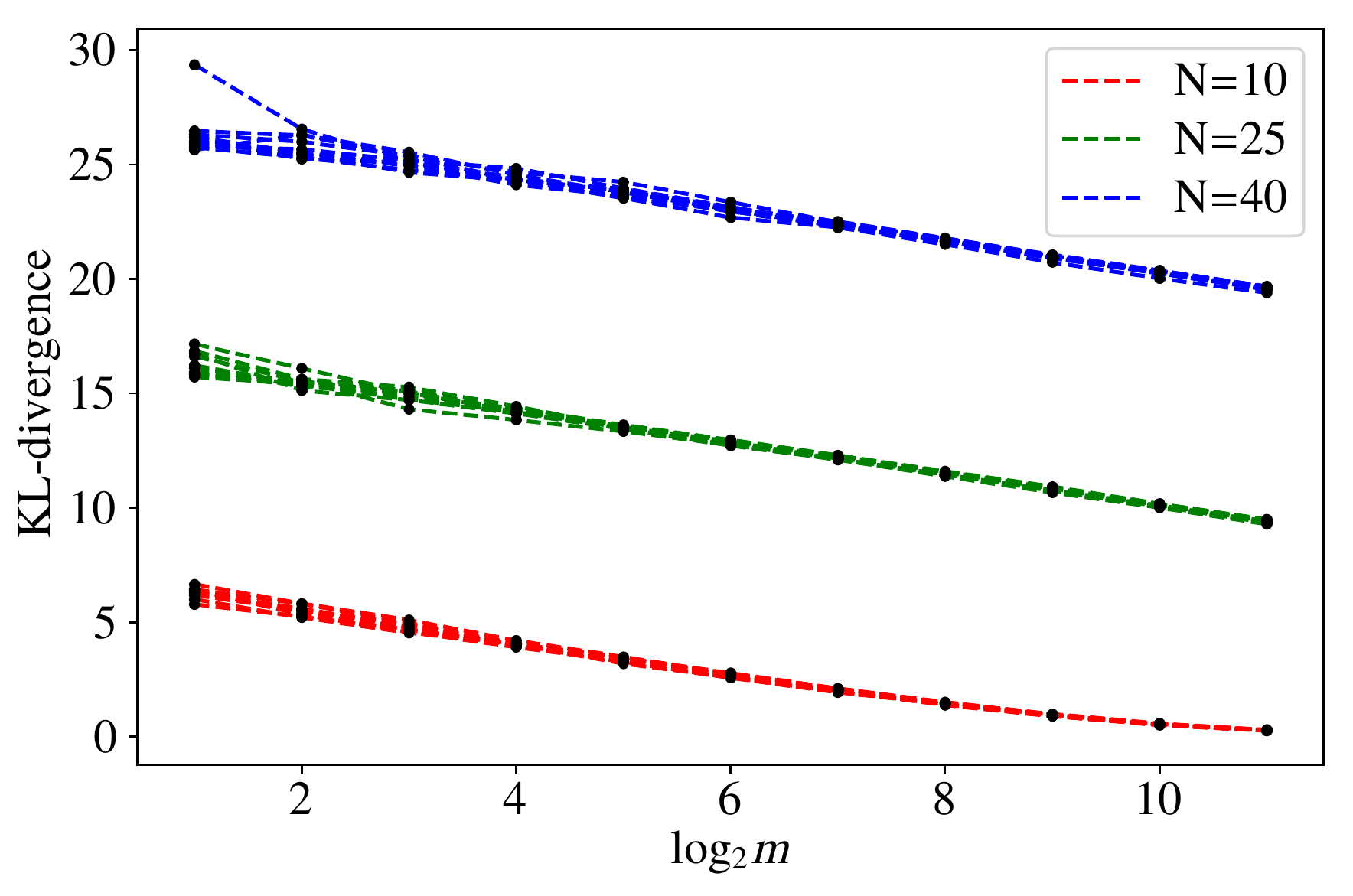}
  \label{fig:kl}
}%
\subfigure[]{
  \centering
  \includegraphics[width=0.45\linewidth]{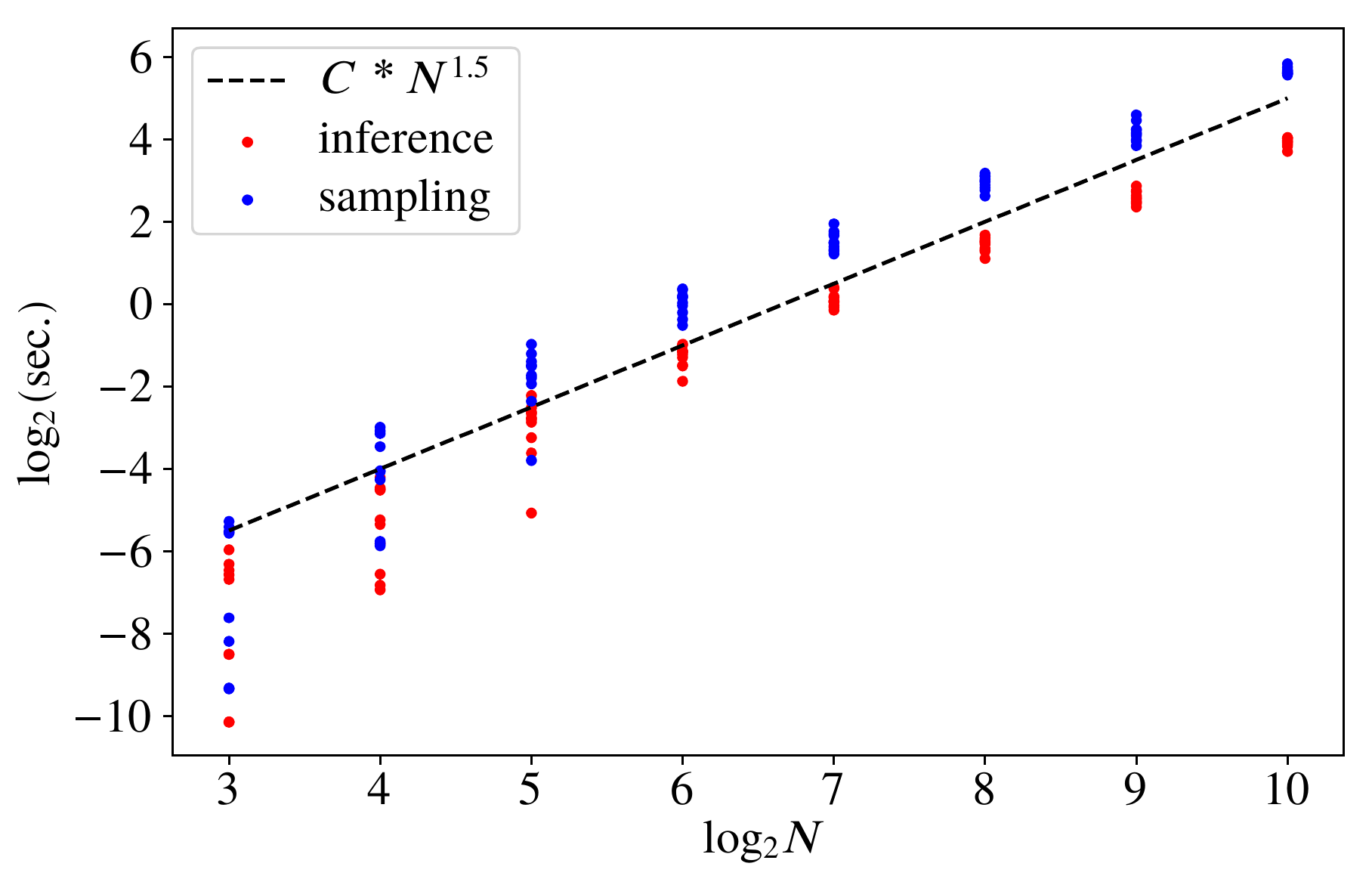}
  \label{fig:time}
}%
\caption{
(a) KL-divergence of the model probability distribution compared with the empirical probability distribution. $N, m$ are the model's size and the number of samples, respectively. (b) Execution time of inference (red dots) and sampling (blue dots) depending on $N$, shown on a logarithmic scale. Black line corresponds to $O(N^\frac32)$.}
\end{figure}

\subsection{Zero-field Ising Models over \texorpdfstring{$K_5$}{K5}-free Graphs} \label{sec:k5}

It can be shown that result similar to the one described above for the $K_{33}$-free graphs also holds for the $K_5$-free graphs as well.
\begin{theorem} \label{th:k5dec}
Let $G$ be a $K_5$-free graph of size $N$ with no loops or multiple edges. Then, the $8$-nice decomposition $\mathcal{T}$ of $G$ exists and can be computed in time $O(N)$.
\end{theorem}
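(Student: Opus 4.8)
The plan is to read off the decomposition $\mathcal{T}$ from the classical structure theory of $K_5$-minor-free graphs, in which the M\"{o}bius ladder on eight vertices---the Wagner graph $V_8$---is the only non-planar ``atom''; this is exactly why the constant is $c=8$ here, just as $K_5$ (on five vertices) is the non-planar atom behind the $5$-nice decomposition of Theorem~\ref{th:k33dec}. The structural fact I would invoke is Wagner's theorem \citep{diestel,reed}: a graph has no $K_5$ minor if and only if it can be built by repeatedly gluing along vertex sets of size at most $3$ (clique-sums of order $\leq 3$, with edges of the shared set allowed to be added or deleted) from planar graphs and copies of $V_8$; equivalently, every $4$-connected $K_5$-minor-free graph is planar.

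Concretely, I would build $T$ in three phases. First, decompose $G$ into $2$-connected blocks using its block-cut tree; the gluings at cut vertices are $1$-clique-sums and give navels of size $1$. Second, inside each block take the Tutte/SPQR decomposition into triconnected components, which realizes the $2$-clique-sums (navels of size $2$): polygons and bonds are planar, and each $3$-connected component $R$ is $K_5$-free, since a $K_5$ minor of a torso lifts to one in $G$. Third, such an $R$ that is planar becomes a node directly, an $R$ isomorphic to $V_8$ becomes a node with $|V(R)|=8=c$, and otherwise $R$ is $3$-connected, non-planar and $K_5$-free but not $V_8$; being not $4$-connected it has a cut of size $3$, and I would split $R$ along it (turning the cut into a triangle in each side) and recurse, until every piece is planar or a copy of $V_8$. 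These last splits are the $3$-clique-sums and give navels of size $3$. The nodes $t\in V(T)$ are the resulting pieces, $G_t$ is the piece carrying only the edges actually present in $G$, an edge of $T$ joins two pieces sharing a gluing set, the navel of a non-root node is that gluing set, and the root is chosen so that its navel is empty.

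It then remains to verify the four requirements of an $8$-nice decomposition. Requirement~2 holds by construction (every gluing set has size $0$, $1$, $2$ or $3$); requirement~3 holds since every $G_t$ is planar or has at most $8$ vertices; requirement~1 ($K=V(G_{\leq t})\cap V(G_{\nleq t})$ for the navel $K$ of $t$) follows from the clean clique-sum structure, i.e.\ the only vertices shared between a subtree and its complement are those of the gluing clique; and requirement~4 is precisely the statement that the \emph{torso} of each planar node---$G_t$ together with all of its attachment-set edges---stays planar, which holds because that torso is the planar block occurring in Wagner's clique-sum decomposition. For the running time I would invoke the linear-time construction of block-cut trees and triconnected/SPQR decompositions \citep{hopcroft2,gutwenger,vo}, linear-time planarity testing \citep{boyer}, and the linear-time algorithms for the $K_5$-free clique-sum decomposition underlying the perfect-matching results of \citet{curticapean,straub}; recognizing $V_8$ is an $O(1)$ check per component. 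Since the pieces have total size $O(N)$, this yields the $O(N)$ bound and also $\sum_{t\in V(T)}|V(G_t)|=O(N)$, mirroring the remark following Theorem~\ref{th:k33dec}; as there, the full details are deferred to the appendix.

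The step I expect to be the main obstacle is the decomposition \emph{beyond} the SPQR level: one must show that every $3$-connected, non-planar, $K_5$-minor-free graph other than $V_8$ can be carved along $3$-cuts into planar pieces and copies of $V_8$ while keeping every torso $K_5$-free (so Wagner's dichotomy keeps applying), keeping every planar torso planar (requirement~4), preserving the navel-intersection identity (requirement~1), and maintaining total size $O(N)$---in effect re-deriving, in the language of the $c$-nice decomposition, the algorithmic form of Wagner's theorem, since the SPQR machinery only resolves $\leq 2$-cuts. Everything after that---the verification of requirements~1--4 and the bookkeeping behind the $O(N)$ bound---is then routine.
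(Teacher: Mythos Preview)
Your approach is essentially the same as the paper's: both implement Wagner's structure theorem for $K_5$-minor-free graphs (planar pieces and copies of $V_8$ glued along $\leq 3$-clique-sums) and then massage the resulting tree into an $8$-nice decomposition. The difference is in which off-the-shelf algorithmic result carries the load. You propose block-cut tree $+$ SPQR $+$ a hand-rolled recursive $3$-cut phase, and you correctly flag the latter as the real obstacle (termination with planar-or-$V_8$ torsos, planarity of torsos under attachment edges, and the linear total-size bound). The paper sidesteps exactly this obstacle by invoking the linear-time decomposition of \citet{reed}: first a \emph{$2$-block tree} (splitting at $(2,2)$-cuts), then for each $3$-connected piece a \emph{$(3,3)$-block tree} whose color-$1$ leaves are already guaranteed to be planar or the M\"{o}bius ladder; the proof then only has to contract the auxiliary color-$2$ nodes and strip the ``virtual'' edges added during construction to read off the $8$-nice decomposition. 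So where you would need to re-derive an algorithmic Wagner theorem beyond the SPQR level, the paper simply cites Reed for it; the citations you give (\citet{curticapean,straub}) indeed rely on the same structural input but are not themselves the source of the linear-time decomposition.
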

\begin{proof}(Sketch)
    An equivalent decomposition is constructed by \citet{reed} in time $O(N)$. See Appendix \ref{sec:k5proof} for formal proof.
\end{proof}
\begin{remark}
    The $O(N)$ construction time of $\mathcal{T}$ guarantees that 
    $\sum_{t \in V(T)} | V(G_t) | = O(N)$. All nonplanar components in $\mathcal{T}$ are isomorphic to the M\"{o}bius ladder (Figure \ref{fig:k5}(b)) or its subgraph.
\end{remark}

The graph in Figure \ref{fig:k5}(a) is actually $K_5$-free. Theorems \ref{th:main} and \ref{th:k5dec} allow us to conclude:
\begin{theorem} \label{th:inf}
    Given $\mathcal{I} = \langle G, 0, J \rangle$ with $K_5$-free $G$ of size $N$, finding $Z$ and sampling from $\mathcal{I}$ take $O(N^\frac32)$ total time.
\end{theorem}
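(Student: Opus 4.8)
The plan is to obtain Theorem~\ref{th:inf} as an immediate consequence of the two machinery results already established: the constructive tractability statement for $c$-nice decompositions (Theorem~\ref{th:main}) and the existence-and-construction statement for $K_5$-free graphs (Theorem~\ref{th:k5dec}), together with the size bound in the Remark following the latter. Concretely, first I would invoke Theorem~\ref{th:k5dec} to produce, in $O(N)$ time, an $8$-nice decomposition $\mathcal{T} = \langle T, \mathcal{G} \rangle$ of $G$. Since $c = 8$ is an absolute constant independent of $N$, the hypotheses of Theorem~\ref{th:main} are met, so feeding $\mathcal{I}$ and $\mathcal{T}$ into the inference/sampling algorithm of Section~\ref{sec:main} computes $Z$ and draws a sample in time $O\!\bigl(\sum_{t \in V(T)} |V(G_t)|^{\frac32}\bigr)$.

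The second step is to bound this sum. By the Remark after Theorem~\ref{th:k5dec}, the $O(N)$ construction time forces $\sum_{t \in V(T)} |V(G_t)| = O(N)$ (each vertex of $G$ appears in only $O(1)$ bags on average, and the total number of bags is $O(N)$). Then superadditivity of the map $z \mapsto z^{\frac32}$ on the nonnegative reals — i.e. $\sum_i a_i^{\frac32} \le \bigl(\sum_i a_i\bigr)^{\frac32}$ for $a_i \ge 0$, which is the same convexity-type inequality used in the proof of Theorem~\ref{th:k33comp} — gives
\begin{equation}
\sum_{t \in V(T)} |V(G_t)|^{\frac32} \le \Bigl( \sum_{t \in V(T)} |V(G_t)| \Bigr)^{\frac32} = O(N^{\frac32}).
\end{equation}
Adding the $O(N)$ cost of building $\mathcal{T}$ (dominated by $O(N^{\frac32})$), the total running time for both inference and sampling is $O(N^{\frac32})$, as claimed.

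I do not expect a genuine obstacle here, since the statement is a corollary; the only points requiring (minor) care are verifying that $c=8$ qualifies as the absolute constant demanded by Theorem~\ref{th:main} — it does, because $8$ does not grow with $N$, and moreover the nonplanar bags are Möbius ladders or their subgraphs, hence of bounded size and handled by brute force — and ensuring the aggregated interaction adjustments along navel edges in Section~\ref{sec:inf} never break the planarity of the large bags, which is precisely guaranteed by item~4 of the $c$-nice decomposition definition and is already part of Theorem~\ref{th:main}. If one wanted to be fully self-contained, I would also remark that the graph of Figure~\ref{fig:k5}(a) is itself $K_5$-free, illustrating that the class is strictly larger than the planar one, but this is not needed for the proof.
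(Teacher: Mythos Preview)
Your proposal is correct and matches the paper's own proof essentially verbatim: the paper simply writes ``Analogous to the proof of Theorem~\ref{th:k33comp},'' which is precisely the argument you spell out (build the $8$-nice decomposition via Theorem~\ref{th:k5dec}, apply Theorem~\ref{th:main}, then use $\sum_t |V(G_t)|^{3/2} \le (\sum_t |V(G_t)|)^{3/2} = O(N^{3/2})$ from the Remark). Your additional remarks about $c=8$ being constant and item~4 of the decomposition definition are accurate but already absorbed into the hypotheses of Theorem~\ref{th:main}, so they need not appear in the proof.
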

\begin{proof}
    Analogous to the proof of Theorem \ref{th:k33comp}.
\end{proof}


\section{Approximate Inference of Square-grid Ising Model} \label{sec:emp}

In this section, we consider $\mathcal{I} = \langle G, \mu, J \rangle$ such that $G$ is a square-grid graph of size $H \times H$. Finding $Z (G, \mu, J)$ for arbitrary $\mu$, $J$ is an NP-hard problem \citep{barahona} in such a setting. Construct $G'$ by adding an \textit{apex} vertex connected to all $G$'s vertices by edge (Figure \ref{fig:grid}(a)). Now it can easily be seen that $Z (G, \mu, J) = \frac{1}{2} Z(G', 0, J' = (J_\mu \cup J))$, where $J_\mu = \mu$ are interactions assigned for apex edges.

Let $\{ G^{(r)} \}$ be a family of spanning graphs ($V(G^{(r)}) = V(G')$, $E(G^{(r)}) \subseteq E(G')$) and $J^{(r)}$ be interaction values on $G^{(r)}$. Also, denote $\hat{J}^{(r)} = J^{(r)} \cup \{ 0, e \in E(G') \setminus E(G^{(r)}) \}$. Assuming that $\log Z (G^{(r)}, 0, J^{(r)})$ are tractable, the convexity of $\log Z (G', 0, J')$ allows one to write the following upper bound:
\begin{equation} \label{eq:ub}
    \log Z (G', 0, J') \leq \min_{\substack{\rho(r) \geq 0, \sum_r \rho(r) = 1 \\ \{ J^{(r)} \}, \sum_r \rho(r) \hat{J}^{(r)} = J'}} \sum_r \rho(r) \log Z(G^{(r)}, 0, J^{(r)}).
\end{equation}

After graph set $\{ G^{(r)} \}$ has been fixed, one can numerically optimize the right-hand side of (\ref{eq:ub}), as shown in \citet{globerson} for planar $G^{(r)}$.  The extension of the basic planar case is straightforward and is detailed in the Appendix \ref{sec:mcas}. The Appendix also contains description of marginal probabilities approximation suggested in \citet{globerson,wainwright}.

The choice for a planar spanning graph (PSG) family $\{ G^{(r)} \}$ of \citet{globerson} is illustrated in Figure \ref{fig:grid}(b). A tractable decomposition-based extension of the planar case presented in this manuscript suggests a more advanced construction---decomposition-based spanning graphs (DSG) (Figure \ref{fig:grid}(c)). We compare performance of both PSG and DSG approaches as well as the performance of tree-reweighted approximation (TRW) \citep{wainwright} in the following setting of \textit{Varying Interaction}: $\mu \sim \mathcal{U}(-0.5, 0.5)$, $J \sim \mathcal{U}(-\alpha, \alpha)$, where $\alpha \in \{ 1, 1.2, 1.4, \dots, 3 \}$. We opt optimize for grid size $H = 15$ ($225$ vertices, $420$ edges) and compare upper bounds and marginal probability approximations (superscript \textit{alg}) with exact values obtained using a junction tree algorithm \citep{jensen} (superscript \textit{true}). We compute three types of error:
\begin{enumerate}
    \item normalized log-partition error $\frac{1}{H^2} (\log Z^{alg} - \log Z^{true})$,
    \item error in pairwise marginals $\frac{1}{| E(G) |} \sum_{e = \{ v, w \} \in E(G)} | \mathbb{P}^{alg} (x_v x_w = 1) - \mathbb{P}^{true} (x_v x_w = 1) |$, and
    \item error in singleton central marginal $| \mathbb{P}^{alg} (x_v = 1) - \mathbb{P}^{true} (x_v = 1)|$ where $v$ is a vertex of $G$ with coordinates $(8, 8)$.
\end{enumerate}

We average results over $100$ trials (see Fig.~\ref{fig:plot}).\footnote{Hardware used: 24-core Intel\textregistered \, Xeon\textregistered \, Gold 6136 CPU @ 3.00 GHz}\footnote{Implementation of the algorithms is available at \url{https://github.com/ValeryTyumen/planar\_ising}} We use the same quasi-Newton algorithm \citep{bertsekas} and parameters when optimizing (\ref{eq:ub}) for PSG and DSG, but for most settings, DSG outperforms PSG and TRW. Cases with smaller TRW error can be explained by the fact that TRW implicitly optimizes~\eqref{eq:ub}  over the family of \textit{all} spanning trees which can be exponentially big in size, while for PSG and DSG we only use $O(H)$ spanning graphs.

Because PSG and DSG approaches come close to each other, we additionally test for each value of $\alpha$ on each plot, whether the difference $err_{PSG} - err_{DSG}$ is bigger than zero. We apply a one-sided Wilcoxon's test \citep{wilcoxon} together with the Bonferroni correction because we test $33$ times \citep{bonferroni}. In most settings, the improvement is statistically significant (Figure \ref{fig:plot}). 


\section{Conclusion} \label{sec:concl}

In this manuscript, we, first of all, describe an algorithm for $O(N^\frac32)$ inference and sampling of planar zero-field Ising models on $N$ spins. Then we introduce a new family of zero-field Ising models composed of planar components and graphs of $O(1)$ size. For these models, we describe a polynomial algorithm for exact inference and sampling provided that the decomposition tree is also in the input. 
A theoretical application is $O(N^\frac32)$ inference and sampling algorithm for $K_{33}$-free or $K_5$-free zero-field Ising models--- both families are supersets of the family of planar zero-field models, and they are both neither treewidth- nor genus-bounded. We show that our scheme offers an improvement of the approximate inference scheme for arbitrary topologies.  The suggested improvement is based on the planar spanning graph ideas from \citet{globerson} but we use tractable spanning decomposition-based graphs instead of planar graphs. (That is we keep the algorithm of \citet{globerson}, but substitute planar graphs with a family of spanning decomposition-based graphs that are tractable.) This improvement of \citet{globerson} results in a tighter upper bound on the true partition function and a more precise approximation of marginal probabilities.

\begin{figure}
    \centering
    \includegraphics[width=0.9\linewidth]{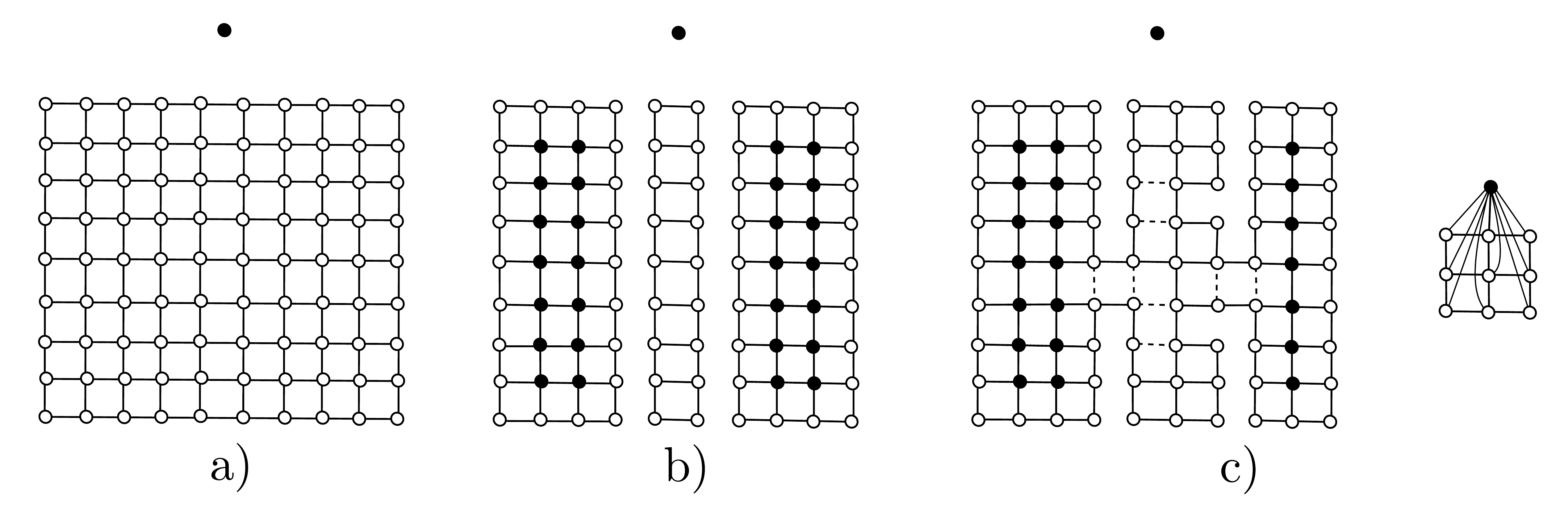}
    \caption{Construction of graphs used for approximate inference on a rectangular lattice. For better visualization, vertices connected to an apex are colored white. a) $G'$ graph. b) One of planar $G^{(r)}$ graphs used in \protect\citet{globerson}. Such ``separator'' pattern is repeated for each column and row, resulting in $2(H - 1)$ graphs in $\{ G^{(r)} \}$. In addition, \protect\citet{globerson} adds an \textit{independent variables} graph where only apex edges are drawn. c) A modified ``separator'' pattern we propose. Again, the pattern is repeated horizontally and vertically resulting in $2(H - 2)$ graphs $+$ independent variables graph. This pattern covers more magnetic fields and connects separated parts. Dashed edges indicate the structure of $10$-nice decomposition used for inference. (Nonplanar node of size $10$ is illustrated on the right.)}
    \label{fig:grid}
\end{figure}

\begin{figure}
    \centering
    \includegraphics[width=0.9\linewidth]{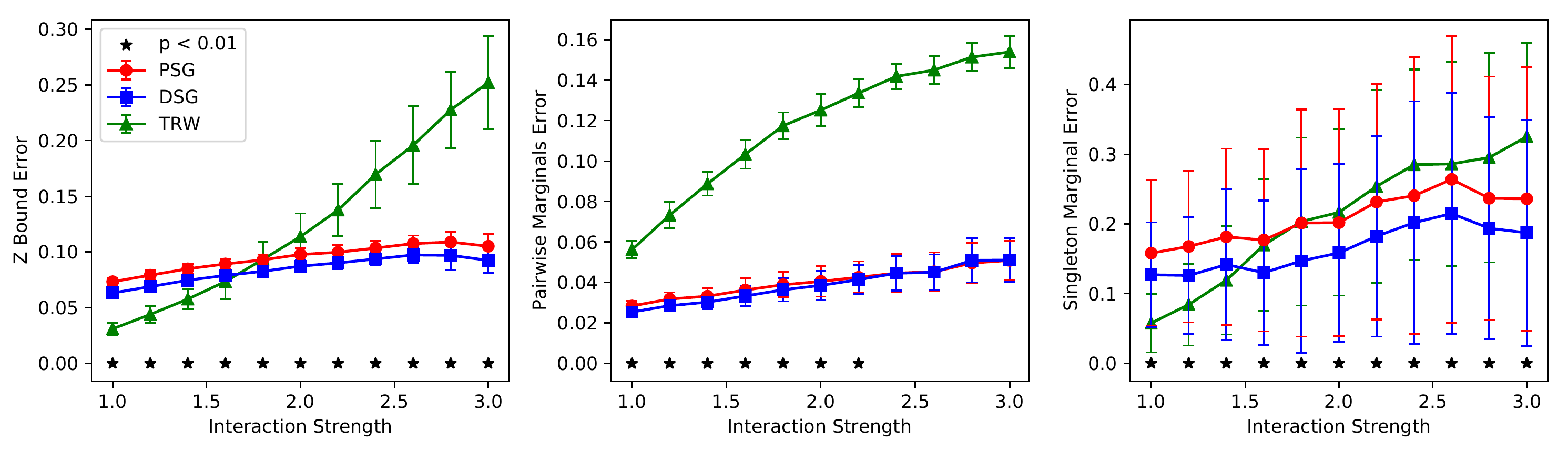}
    \caption{Comparison of tree-reweighted approximation (TRW), planar spanning graph (PSG), and decomposition-based spanning graph (DSG) approaches. The first plot is for normalized log-partition error, the second is for error in pairwise marginals, and the third is for error in singleton central marginal. Standard errors over $100$ trials are shown as error bars. An asterisk ``*'' indicates the statistically significant improvement of DSG over PSG, with a p-value smaller than $0.01$ according to the Wilcoxon test with the Bonferroni correction 
    \protect\citep{wilcoxon}. 
    }
    \label{fig:plot}
\end{figure}



\newpage

\appendix

\section{Lemma Proofs} \label{sec:lp}

\subsection{Lemma \ref{lemma:bij}}

\begin{proof} Let $E' \in \text{PM}(G^*)$. Call $e \in E$ \textit{saturated}, if it intersects an edge from $E' \cap E^*_I$. Each Fisher city is incident to an odd number of edges in $E' \cap E^*_I$. Thus, each face of $G$ has an even number of unsaturated edges. This property is preserved, when two faces/cycles are merged into one by evaluating respective symmetric difference. Therefore, one gets that any cycle in $G$ has an even number of unsaturated edges.

For each $i$ define $x_i := -1^{r_i}$, where $r_i$ is the number of unsaturated edges on the path connecting $v_1$ and $v_i$. The definition is consistent due to aforementioned cycle property. Now for each $e = \{ v, w \} \in E(G)$, $x_v = x_w$ if and only if $e$ is saturated. To conclude, we constructed $X$ such that $E' = M(X)$. Such $X$ is unique, because parity of unsaturated edges on a path between $v_1$ and $v_i$ uniquely determines relationship between $x_1$ and $x_i$, and $x_1$ is always $+1$.
\end{proof}

\subsection{Lemma \ref{lemma:zfitopm}}

\begin{proof} Let  $X' = (x'_1, ..., x'_N) \in \mathcal{C}_+$, $M(X') = E'$. The statement is justified by the following chain of transitions:
\begin{align}
\mathbb{P} ( M(S) = E' )
&= \mathbb{P} ( S = X') + \mathbb{P} ( S = -X' ) \nonumber \\
&= \frac2Z \exp \left(\sum_{e = \{ v, w \} \in E(G)} J_e x'_v x'_w \right) \nonumber \\
&= \frac2Z \exp \left(\sum_{e^* \in E' \cap E^*_I} 2 J_{g(e^*)} - \sum_{e \in E(G)} J_e\right) \nonumber \\
&= \frac2Z \exp \left(- \sum_{e \in E(G)} J_e\right) \prod_{e^* \in E' \cap E^*_I} c_{e^*} \nonumber \\
&= \frac2Z \exp \left(- \sum_{e \in E(G)} J_e\right) \prod_{e^* \in E'} c_{e^*} \nonumber \\
&= \frac{1}{Z^*} \prod_{e^* \in E'} c_{e^*}
\end{align}
\end{proof}

\subsection{Lemma \ref{lemma:cond}}

\begin{proof} We consider cases depending on $\omega$ and consequently reduce each case to a simpler one. For convenience in cases where applies we denote $u \triangleq v^{(1)}, h \triangleq v^{(2)}, q \triangleq v^{(3)}$:

\begin{enumerate}

\item \textbf{Conditioning on $\omega = 0$ spins.} Trivial given the algorithm described in section \ref{ch:planar}.

\item \textbf{Conditioning on $\omega = 1$ spin.} Since configurations $X$ and $-X$ have the same probability in $\mathcal{I}$, one deduces that $Z_{\,| \, x_u = s^{(1)}} = \frac12 Z$.

One also deduces that sampling $X$ from $\mathbb{P}(X\, | \, x_u = s^{(1)})$ is reduced to 1) drawing $\overline{X} = \{ \overline{x}_v = \pm 1 \}$ from $\mathbb{P}(X)$ and then 2) returning $X = (s^{(1)} \overline{x}_u) \cdot \overline{X}$ as a result.

\item \textbf{Conditioning on $\omega = 2$ spins.} There is an edge $e^{0} = \{ u, h \} \in E(G)$. The following expansion holds:
\begin{align}
    Z_{\,| \, x_u = s^{(1)}, x_h = s^{(2)}} &= \sum_{X, \, x_u = s^{(1)}, \, x_h = s^{(2)}} \exp\bigl( \sum_{ e = \{ v, w \} \in E(G)} J_e x_v x_w \bigr) \nonumber \\
    & = \exp (J_{e^0} s^{(1)} s^{(2)}) \cdot \sum_{X, \, x_u = s^{(1)}, \, x_h = s^{(2)}} \exp\bigl( \sum_{\substack{e = \{ v, w \} \in E(G) \\ e \neq e^0}} J_e x_v x_w\bigr) \nonumber \\
    &= \exp (J_{e^0} s^{(1)} s^{(2)}) \cdot \sum_{X, \, x_u = s^{(1)}, \, x_h = s^{(2)}} \exp\bigl( \sum_{\substack{e = \{ v, w \} \in E(G) \\ e \cap e^0 = \varnothing}} J_e x_v x_w \nonumber \\
    & + \sum_{\substack{e = \{ u, v \} \in E(G) \\ v \neq h}} (J_e s^{(1)}) x_v \cdot 1 + \sum_{\substack{e = \{ h, v \} \in E(G) \\ v \neq u}} (J_e s^{(2)}) x_v \cdot 1\bigr) \label{eq:cond2v}
\end{align}

Obtain graph $G'$ from $G$ by contracting $u, h$ into $z$. $G'$ is still planar and has $N - 1$ vertices. Preserve pairwise interactions of edges which were not deleted after contraction. For each edge $e = \{u, v\}$, $v \neq h$ set $J_{\{ z, v \}} = J_e s^{(1)}$, for each edge $e = \{ h, v \}$, $v \neq u$ set $J_{\{ z, v \}} = J_e s^{(2)}$. Collapse double edges in $G'$ which were possibly created by transforming into single edges. A pairwise interaction of the result edge is set to the sum of collapsed interactions.

Define a zero-field Ising model $\mathcal{I}'$ on the resulted graph $G'$ with its pairwise interactions, inducing a distribution $\mathbb{P}' (X' = \{ x'_v = \pm 1 | v \in V(G') \})$. Let $Z'$ denote $\mathcal{I}'$'s partition function. A closer look at (\ref{eq:cond2v}) reveals that
\begin{equation}
    Z_{\, | \, x_u = s^{(1)}, x_h = s^{(2)}} = \exp (J_{e^0} s^{(1)} s^{(2)}) \cdot Z'_{\,|\, x'_z = 1}
    \label{eq:ztrans}
\end{equation}
where $Z'_{\,|\, z'_y = 1}$ is a partition function conditioned on a single spin and can be found efficiently as shown above.

Since the equality of sums (\ref{eq:ztrans}) holds summand-wise, for a given $X'' = \{x''_v = \pm 1 \, | \, v \in V(G) \setminus \{ u, h \} \}$ the probabilities $\mathbb{P}(X'' \cup \{ x_u = s^{(1)}, x_h = s^{(2)} \} \, | \, x_u = s^{(1)}, x_h = s^{(2)})$ and $\mathbb{P}'(X'' \cup \{ x'_z = 1 \} \, | \, x'_z = 1)$ are the same. Hence, sampling from $\mathbb{P}(X \, | \, x_u = s^{(1)}, x_h = s^{(2)})$ is reduced to conditional sampling from planar zero-field Ising model $\mathbb{P}'(X' \, | \, x'_z = 1)$ of size~$N - 1$.

\item \textbf{Conditioning on $w = 3$ spins.} Without loss of generality assume that $u, h$ are connected by an edge $e^0$ in $G$. A derivation similar to (\ref{eq:cond2v}) and (\ref{eq:ztrans}) reveals that (preserving the notation of Case 2)
\begin{equation}
    Z_{\, | \, x_u = s^{(1)}, x_h = s^{(2)}, x_q = s^{(3)}} = \exp (J_{e^0} s^{(1)} s^{(2)}) \cdot Z'_{\,|\, x'_z = 1, x'_q = s^{(3)}}
\end{equation}
which reduces inference conditional on $3$ vertices to a simpler case of 2 vertices. Again, sampling from $\mathbb{P}(X \, | \, x_u = s^{(1)}, x_t = s^{(2)}, x_q = s^{(3)})$ is reduced to a more basic sampling from $\mathbb{P}'(X' \, | \, x'_z = 1, x'_q = s^{(3)})$.
\end{enumerate}
\end{proof}
In principle, Lemma \ref{lemma:cond} can be extended to arbitrarily large $\omega$ leaving a certain freedom for the Ising model conditioning framework. However, in this manuscript we focus on a given special case which is enough for our goals.

\section{Theorem \ref{th:pmmodel} Proof} \label{ch:pl_proof}

\subsection{Counting PMs of Planar \texorpdfstring{$\hat{G}$}{G} in \texorpdfstring{$O(\hat{N}^\frac32)$}{O} time}

This section addresses inference part of Theorem \ref{th:pmmodel}.

\subsubsection{Pfaffian Orientation} \label{subseq:pf}

Consider an orientation on $\hat{G}$. $\hat{G}$'s cycle of even length (built on an even number of vertices) is said to be \textit{odd-oriented}, if, when all edges along the cycle are traversed in any direction, an odd number of edges are directed along the traversal. For $X \subseteq V(\hat{G})$ let $\hat{G} (X)$ denote a graph $( X, \{ e \in E(\hat{G}) | e \subseteq X \} )$. An orientation of $\hat{G}$ is called \textit{Pfaffian}, if all cycles $C$, such that $\text{PM}(\hat{G}(V(\hat{G}) - C)) \neq \varnothing$, are odd-oriented.

We will need $\hat{G}$ to contain a Pfaffian orientation, moreover the construction is easy.
\begin{theorem}
Pfaffian orientation of $\hat{G}$ can be constructed in $O(\hat{N})$.
\end{theorem}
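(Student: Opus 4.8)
The plan is to exploit the fact that $\hat{G}$ is planar, since Theorem~\ref{th:pmmodel} assumes a planar graph, and to show that any planar graph admits a Pfaffian orientation constructible in linear time. The classical result here is Kasteleyn's theorem: a planar graph has an orientation in which every face (bounded by an even-length cycle) is \emph{clockwise odd}, i.e.\ traversing the face boundary clockwise, an odd number of edges point along the traversal. So first I would recall (or quote from \citet{kasteleyn}) the key combinatorial lemma: if an orientation makes every bounded face clockwise odd, then it is Pfaffian. The proof of this lemma is the standard induction on the number of edges enclosed by a cycle $C$ with $\text{PM}(\hat{G}(V(\hat{G}) - C)) \neq \varnothing$: one uses the fact that the number of edges of $\hat{G}$ strictly inside $C$ plus the number of vertices strictly inside $C$ has a fixed parity (because that interior subgraph has a perfect matching, hence an even number of interior vertices), and that the parity of ``edges oriented along $C$'' equals the sum over the faces inside $C$ of their clockwise-odd contributions, modulo a correction counting interior edges and vertices. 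I would state this but not re-derive it in full, citing \citet{kasteleyn} or \citet[chap.~4]{diestel}.

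Second, I would give the linear-time construction of a clockwise-odd orientation. Assume the planar embedding of $\hat{G}$ is given (or compute it in $O(\hat{N})$ by \citet{boyer}). Take a spanning tree $\mathcal{S}$ of $\hat{G}$; orient the non-tree edges arbitrarily. Then process the faces in an order dual to a spanning tree of the dual graph, starting from the outer face: each bounded face, when we reach it, has exactly one undecided edge on its boundary (its ``parent'' edge in the dual spanning tree, which is the last edge bordering both this face and a not-yet-processed region), and we orient that edge in whichever of the two directions makes the face clockwise odd. Since each bounded face fixes exactly one new edge, and there are $O(\hat{N})$ faces and $O(\hat{N})$ edges (planarity), the whole procedure is $O(\hat{N})$. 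I would note that the outer face need not be clockwise odd, which is fine because the Pfaffian lemma only requires bounded faces.

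The main obstacle — or rather the part needing the most care — is the parity bookkeeping in the Pfaffian lemma: making precise why ``every bounded face clockwise odd'' propagates to ``every separating even cycle $C$ with a perfectly-matchable exterior/interior is odd-oriented.'' One has to handle the distinction between the interior having a perfect matching (so an even number of interior vertices) and carefully sum face parities with the Euler-formula correction term. I expect to present this as a short lemma with the induction sketch and a pointer to \citet{kasteleyn}, rather than a from-scratch argument, since it is classical. A secondary, purely bookkeeping point is confirming that the dual-spanning-tree processing order indeed leaves exactly one free edge per face at the moment it is processed; this follows because contracting the processed faces keeps the unprocessed dual region a tree, but it is worth stating explicitly.

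Overall: (1) reduce to planar $\hat{G}$ with a given embedding; (2) state Kasteleyn's lemma that clockwise-odd-on-all-bounded-faces implies Pfaffian; (3) construct a clockwise-odd orientation in $O(\hat{N})$ via a dual spanning tree, orienting one edge per face; (4) conclude. The hard/delicate part is step (2)'s parity argument, which I would cite rather than fully reprove.
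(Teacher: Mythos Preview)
Your proposal is essentially correct and, in fact, far more detailed than the paper's own ``proof,'' which is nothing more than a pointer to \citet{wilson,vazirani,schraudolph-kamenetsky}. What you sketch---Kasteleyn's clockwise-odd face condition plus a linear-time sweep over faces using a spanning tree of the dual---is exactly the construction those references contain, so in spirit you match the paper.

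One small technical slip worth fixing: you take a spanning tree $\mathcal{S}$ of $\hat{G}$, orient the \emph{non-tree} edges arbitrarily, and then claim each bounded face has exactly one undecided (tree) edge when reached. This is backwards. By tree--cotree duality, it is the \emph{non-tree} edges of a primal spanning tree that form a spanning tree of the dual, so each bounded face has exactly one \emph{non-tree} edge linking it to its dual parent, while its boundary may contain several primal tree edges. The clean statement is: pick a spanning tree $T^*$ of the dual graph rooted at the outer face; the primal edges crossed by $T^*$ are the ``undecided'' ones (exactly one per bounded face); orient all other primal edges arbitrarily; then process bounded faces from the leaves of $T^*$ toward the root, fixing the single undecided edge on each. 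With that correction your argument goes through and runs in $O(\hat{N})$ as claimed.
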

\begin{proof}
This theorem is proven constructively, see e.g. \citet{wilson,vazirani}, or \citet{schraudolph-kamenetsky}, where the latter construction is based on specifics of the expanded dual graph.
\end{proof}

Construct a skew-symmetric sparse matrix $\mathcal{K} \in \mathbb{R}^{ \hat{N} \times \hat{N} }$ ($\to$ denotes orientation of edges):
\begin{equation}
\mathcal{K}_{ij} = \begin{cases} c_e & \text{if } \{ v_i, v_j \} \in E(\hat{G}), v_i \to v_j \\ -c_e & \text{if } \{ v_i, v_j \} \in E(\hat{G}), v_j \to v_i \\ 0 & \text{if } \{ v_i, v_j \} \notin E(\hat{G}) \end{cases}
\label{eq:kdef}
\end{equation}

The next result allows to compute PF $\hat{Z}$ of PM model on $\hat{G}$ in a polynomial time.
\begin{theorem}
$\det \mathcal{K} > 0$, $\hat{Z} = \sqrt{\det \mathcal{K}}$.
\end{theorem}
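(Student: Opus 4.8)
The plan is to connect the determinant of the skew-symmetric Kasteleyn-type matrix $\mathcal{K}$ to the partition function $\hat{Z} = \sum_{M \in \mathrm{PM}(\hat{G})} \prod_{e \in M} c_e$ via the Pfaffian expansion. The key algebraic fact is that for a skew-symmetric matrix, $\det \mathcal{K} = \pfaffian(\mathcal{K})^2$, so it suffices to analyze $\pfaffian(\mathcal{K})$ and then argue positivity of the determinant is automatic. First I would recall the combinatorial formula for the Pfaffian: $\pfaffian(\mathcal{K}) = \sum_{\text{perfect matchings } M \text{ of } K_{\hat{N}}} \mathrm{sgn}(M) \prod_{\{i,j\} \in M, i<j} \mathcal{K}_{ij}$, where $\mathrm{sgn}(M)$ is the sign of the permutation reading off the matched pairs. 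Terms where some matched pair $\{v_i, v_j\}$ is a non-edge of $\hat{G}$ vanish because $\mathcal{K}_{ij} = 0$ there; hence only genuine perfect matchings $M \in \mathrm{PM}(\hat{G})$ survive, each contributing $\pm \prod_{e \in M} c_e$ since $|\mathcal{K}_{ij}| = c_e$ for edges (the $c_e > 0$ by hypothesis).

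The heart of the argument — and the step I expect to be the main obstacle — is showing that all surviving terms carry the \emph{same} sign, which is exactly the role of the Pfaffian orientation. Here I would take two perfect matchings $M_1, M_2 \in \mathrm{PM}(\hat{G})$ and consider their symmetric difference $M_1 \triangle M_2$, which decomposes into a disjoint union of even alternating cycles $C_1, \dots, C_k$ (together with shared edges). The relative sign between the two Pfaffian terms is a product, over these cycles, of a factor determined by how the orientation behaves along each $C_\ell$. For each such cycle $C_\ell$, the complement $\hat{G}(V(\hat{G}) \setminus C_\ell)$ contains a perfect matching (namely the edges of $M_1$, equivalently $M_2$, not touched by $C_\ell$), so by the definition of a Pfaffian orientation $C_\ell$ is odd-oriented. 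The standard lemma (as in \citet{kasteleyn}, \citet{vazirani}) then says that an odd-oriented even cycle contributes a sign factor that makes the two matching terms agree; multiplying over all $\ell$ gives $\mathrm{sgn}(M_1)\prod_{e \in M_1}(\cdots) = \mathrm{sgn}(M_2)\prod_{e \in M_2}(\cdots)$ up to the common positive weight. I would cite this combinatorial sign lemma rather than rederive the parity bookkeeping, since it is classical.

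Once all terms share a common sign $\epsilon \in \{+1, -1\}$, we get $\pfaffian(\mathcal{K}) = \epsilon \sum_{M \in \mathrm{PM}(\hat{G})} \prod_{e \in M} c_e = \epsilon\, \hat{Z}$. Since the underlying PM model of interest is nontrivial (there is at least one perfect matching — in our application $E^*_I$ is one) and all $c_e > 0$, we have $\hat{Z} > 0$, hence $\pfaffian(\mathcal{K}) \neq 0$ and $\det \mathcal{K} = \pfaffian(\mathcal{K})^2 = \hat{Z}^2 > 0$. Taking the positive square root yields $\hat{Z} = \sqrt{\det \mathcal{K}}$, as claimed. (If $\mathrm{PM}(\hat{G}) = \varnothing$ then $\hat{Z} = 0$ and $\det \mathcal{K} = 0$ trivially, so the statement still holds vacuously; I would note this edge case but it is irrelevant for the graphs arising from the expanded dual construction.) Finally, computing $\det \mathcal{K}$ for the sparse $O(\hat{N}) \times O(\hat{N})$ matrix is where the $O(\hat{N}^{3/2})$ running time enters, via nested-dissection Gaussian elimination exploiting the planar separator structure of $\hat{G}$; I would defer the complexity accounting to the algorithmic discussion and treat the present theorem as the correctness statement.
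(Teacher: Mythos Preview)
Your proposal is correct and is precisely the classical Kasteleyn Pfaffian argument; the paper itself does not give a proof but simply cites \citet{wilson} and \citet{kasteleyn}, whose content your sketch faithfully reproduces. So your approach is exactly the one the paper defers to, just spelled out in more detail than the paper chose to include.
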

\begin{proof}
See, e.g., \citet{wilson} or \citet{kasteleyn}. 
\end{proof}

\subsubsection{Computing \texorpdfstring{$\det \mathcal{K}$}{K}}

LU-decomposition of a matrix $A = LU$, found via Gaussian elimination, where $L$ is a lower-triangular matrix with unit diagonals and $U$ is an upper-triangular matrix, would be a standard way of computing $\det A$, which is then equal to a product of the diagonal elements of $U$. However, this standard way of constructing the LU decomposition applies only if all $A$'s leading principal submatrices are nonsingular (See e.g. \citet{horn-johnson}, section 3.5, for detailed discussions). And already the $1\times 1$  leading principal submatrix of $\mathcal{K}$ is zero/singular.

Luckily, this difficulty can be resolved through the following construction. Take $\hat{G}$'s arbitrary perfect matching $E' \in \text{PM}(\hat{G})$. In the case of a general planar graph $E'$ can be found via e.g. Blum's algorithm \citep{blum} in $O(\sqrt{\hat{N}} | E(\hat{G}) |) = O(\hat{N}^\frac32)$ time, while for graph $G^*$ appearing in this paper $E'$ can be found in $O(N)$ from a spin configuration using $M$ mapping (e.g. $E' = E^*_I = M(\{ +1, ..., +1 \}) \in \text{PM}(G^*)$). Modify ordering of vertices, $V(\hat{G}) = \{ v_1, v_2, ..., v_{\hat{N}} \}$, so that $E' = \{ \{ v_1, v_2 \}, ..., \{ v_{\hat{N} - 1}, v_{\hat{N}} \} \}$. Build $\mathcal{K}$ according to the definition (\ref{eq:kdef}). Obtain $\overline{\mathcal{K}}$ from $\mathcal{K}$ by swapping column $1$ with column $2$, $3$ with $4$ and so on. This results in $\det \mathcal{K} = | \det \overline{\mathcal{K}} |$,  where the new $\overline{\mathcal{K}}$ is properly conditioned.
\begin{lemma}
$\overline{\mathcal{K}}$'s leading principal submatrices are nonsingular.
\end{lemma}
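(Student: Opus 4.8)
The plan is to establish nonsingularity of every leading principal submatrix $\overline{\mathcal{K}}[1{:}k,1{:}k]$, $k=1,\dots,\hat{N}$, by identifying its determinant — up to an irrelevant sign — with a product of Pfaffians of principal submatrices of $\mathcal{K}$ supported on the initial blocks $\{v_1,\dots,v_{2\ell}\}$ of the reordered vertex set, and then showing that each such Pfaffian is nonzero. The first step is to undo the column swaps: within the leading $k\times k$ block the columns of $\overline{\mathcal{K}}$ are those of $\mathcal{K}$ in the order $v_2,v_1,v_4,v_3,\dots$, so a column permutation (of sign $\pm 1$) turns this block, when $k=2m$ is even, into the principal submatrix $\mathcal{K}[\{v_1,\dots,v_{2m}\}]$, and, when $k=2m+1$ is odd, into the minor of the skew-symmetric matrix $\mathcal{K}[\{v_1,\dots,v_{2m+2}\}]$ obtained by deleting row $v_{2m+2}$ and column $v_{2m+1}$. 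For even $k$, $\det\mathcal{K}[\{v_1,\dots,v_{2m}\}]=\pfaffian\big(\mathcal{K}[\{v_1,\dots,v_{2m}\}]\big)^2$; for odd $k$, the standard fact that the minor of a skew-symmetric matrix obtained by deleting one row $i$ and one column $j\neq i$ equals, up to sign, the Pfaffian of the submatrix with rows and columns $i,j$ removed times the Pfaffian of the whole matrix gives $\pm\,\pfaffian\big(\mathcal{K}[\{v_1,\dots,v_{2m}\}]\big)\cdot\pfaffian\big(\mathcal{K}[\{v_1,\dots,v_{2m+2}\}]\big)$. In all cases,
\[
\det\overline{\mathcal{K}}[1{:}k,1{:}k]=\pm\,\pfaffian\big(\mathcal{K}[\{v_1,\dots,v_{2\lfloor k/2\rfloor}\}]\big)\cdot\pfaffian\big(\mathcal{K}[\{v_1,\dots,v_{2\lceil k/2\rceil}\}]\big).
\]

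It then remains to show $\pfaffian(\mathcal{K}[X])\neq 0$ for every initial block $X=\{v_1,\dots,v_{2\ell}\}$. The crucial point is that the restriction of the given Pfaffian orientation of $\hat{G}$ to the induced subgraph $\hat{G}(X)$ is again Pfaffian: if $C$ is an even cycle of $\hat{G}(X)$ with $\text{PM}(\hat{G}(X\setminus C))\neq\varnothing$, then $C$ is also an even cycle of $\hat{G}$, and any perfect matching of $\hat{G}(X\setminus C)$ together with the block $\{\{v_{2\ell+1},v_{2\ell+2}\},\dots,\{v_{\hat{N}-1},v_{\hat{N}}\}\}$ of $E'$ — which lies inside $\hat{G}(V(\hat{G})\setminus X)$ and is vertex-disjoint from $X\setminus C$ — is a perfect matching of $\hat{G}(V(\hat{G})\setminus C)$, so $C$ is odd-oriented. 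Given this, the Pfaffian expansion of $\mathcal{K}[X]$ contains only terms indexed by $M\in\text{PM}(\hat{G}(X))$, and — this being precisely the algebraic content of a Pfaffian orientation that already underlies the Kasteleyn-type theorem cited above — all of them carry a common sign $\epsilon\in\{\pm 1\}$, so $\pfaffian(\mathcal{K}[X])=\epsilon\sum_{M\in\text{PM}(\hat{G}(X))}\prod_{e\in M}c_e$. Since $\hat{G}(X)$ contains the perfect matching $\{\{v_1,v_2\},\dots,\{v_{2\ell-1},v_{2\ell}\}\}\subseteq E'$ and every $c_e$ is positive, this sum is nonzero, which completes the argument.

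\textbf{Main obstacle.} The single genuine subtlety is that being a Pfaffian orientation is a global property of $\hat{G}$, so identifying the leading principal minors of $\overline{\mathcal{K}}$ with honest (consistently signed, hence non-cancelling) weighted matching sums of induced subgraphs requires that the orientation remains Pfaffian on the relevant induced subgraphs. This is exactly where the choice of vertex ordering is used: having relabeled $V(\hat{G})$ so that $E'$ is the block-diagonal matching $\{\{v_1,v_2\},\{v_3,v_4\},\dots\}$, every initial block $\{v_1,\dots,v_{2\ell}\}$ has a complement that still carries a sub-matching of $E'$, which is precisely the hypothesis needed above. The remaining ingredients — the column-swap sign bookkeeping and the skew-symmetric adjugate/Pfaffian-minor identity — are routine.
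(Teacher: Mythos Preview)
Your argument is correct. The paper itself does not give a self-contained proof but simply cites Wilson for the unit-weight case and asserts that the argument extends to arbitrary positive $c_e$; your write-up is essentially that extension made explicit, resting on the same two points---(i) the Pfaffian orientation of $\hat{G}$ restricts to a Pfaffian orientation on each initial block $\{v_1,\dots,v_{2\ell}\}$ because the complement is perfectly matched by the tail of $E'$, and (ii) positivity of the weights $c_e$ prevents cancellation in the resulting matching sum, so each relevant Pfaffian is nonzero.
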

\begin{proof}
The proof, presented in \citet{wilson} for the case of unit weights~$c_e$, generalizes to arbitrary positive~$c_e$.
\end{proof}

Notice, that in the general case (of a matrix represented in terms of a general graph) complexity of the LU-decomposition is cubic in the size of the matrix. 
Fortunately, \textit{nested dissection} technique, discussed in the following subsection, allows to reduce complexity of computing $\hat{Z}$ to $O(\hat{N}^\frac32)$.

\subsubsection{Nested Dissection} \label{subsec:nd}

The partition $P_1, P_2, P_3$ of set $V(\hat{G})$ is a \textit{separation} of $\hat{G}$, if for any $v \in P_1, w \in P_2$ it holds that $\{ v, w \} \notin E(\hat{G})$. We refer to $P_1, P_2$ as the \textit{parts}, and to $P_3$ as the \textit{separator}.

Lipton and Tarjan (LT) \citep{lipton-tarjan} found an $O(\hat{N})$ algorithm, which finds a separation $P_1, P_2, P_3$ such that $\max ( | P_1 |, | P_2 | ) \leq \frac23 \hat{N}$ and $| P_3 | \leq 2^\frac32 \sqrt{ \hat{N} }$. The LT algorithm can be used to construct the so called \textit{nested dissection ordering} of $V(\hat{G})$. The ordering is built recursively, by first placing vertices of $P_1$, then $P_2$ and $P_3$, and finally permuting indices of $P_1$ and $P_2$ recursively according to the ordering of $\hat{G}(P_1)$ and $\hat{G}(P_2)$ (See \citet{lipton-rose-tarjan} for accurate description of details, definitions and analysis of the nested dissection ordering). As shown by \citet{lipton-rose-tarjan} the complexity of finding the nested dissection ordering is $O( \hat{N} \log \hat{N} )$.

Let $A$ be a $\hat{N} \times \hat{N}$ matrix with a \textit{sparsity pattern} of $\hat{G}$. That is, $A_{ij}$ can be nonzero only if $i = j$ or $\{ v_i, v_j \} \in \hat{E}$.
\begin{theorem}
\citep{lipton-rose-tarjan} If $\hat{V}$ is ordered according to the nested dissection and $A$'s leading principal submatrices are nonsingular, computing the LU-decomposition of $A$ becomes a problem of the $O(N^\frac32)$ complexity.
\end{theorem}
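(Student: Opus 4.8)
The plan is to translate the cost of running Gaussian elimination on $A$ into a purely combinatorial fill-in count on the planar sparsity graph $\hat G$ of $A$, and then to bound that count using the recursive separator structure that the nested dissection ordering is built from.

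First I would reduce the arithmetic cost to a fill-in count. Since every leading principal submatrix of $A$ is nonsingular, Gaussian elimination runs to completion with no pivoting, so the set of positions that can become nonzero during elimination is determined by $\hat G$ alone (we want only an upper bound, so accidental cancellations are harmless). Writing $\eta(i)$ for the number of subdiagonal nonzeros in column $i$ of $L$ --- equivalently, the degree of $v_i$ in the elimination graph at the moment it is eliminated, since eliminating $v_i$ turns its still-present neighbours into a clique --- eliminating $v_i$ costs $O(\eta(i))$ divisions and $O(\eta(i)^2)$ multiply-add operations to form the Schur complement, and introduces $O(\eta(i))$ new nonzeros. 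Hence the LU-decomposition runs in time $O\big(\sum_i \eta(i)^2\big)$ and uses $O\big(\sum_i \eta(i)\big)$ storage, and it suffices to prove $\sum_i \eta(i)^2 = O(\hat N^{3/2})$.

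Next I would localize the fill. The nested dissection ordering is obtained by recursively applying the Lipton--Tarjan planar separator theorem \citep{lipton-tarjan}: one splits the vertex set into parts $P_1, P_2$ and a separator $C$ with $\max(|P_1|,|P_2|) \le \tfrac23\hat N$ and $|C| = O(\sqrt{\hat N})$, numbers $P_1$ first (recursively ordered), then $P_2$ (recursively ordered), then $C$, and continues. This induces a binary \emph{region tree}: a region $R$ of size $n_R$ owns a separator $C_R$ with $|C_R| = O(\sqrt{n_R})$, has two child regions whose sizes sum to at most $n_R$ and are each at most $\tfrac23 n_R$, and every vertex belongs to exactly one $C_R$. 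Combining the separator property with the classical fill characterization --- $\{v_i,v_j\}$ is a fill edge of $L+U$ exactly when $\hat G$ has a $v_i$--$v_j$ path through vertices numbered below $\min(i,j)$ --- I would prove the crucial lemma: when $v \in C_R$ is eliminated, all of its current neighbours lie in $C_R$ together with the ``boundary'' $\partial R$ of $R$ (the ancestor-separator vertices joined to $R$ in the fill graph), because any fill edge out of $v$ is certified by a path through already-eliminated vertices, all of which sit inside $R$, so it cannot leave the shell of separators enclosing $R$. Running the recursion so that boundaries are split together with the interior keeps $|\partial R| = O(\sqrt{n_R})$ for every $R$, and then $\eta(v) \le |C_R| + |\partial R| = O(\sqrt{n_R})$ for each $v \in C_R$.

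Finally I would sum over the region tree. Grouping the cost by the separator a vertex belongs to,
\[
\sum_i \eta(i)^2 = \sum_R \sum_{v \in C_R} \eta(v)^2 = \sum_R O\big(|C_R|\, n_R\big) = \sum_R O\big(n_R^{3/2}\big),
\]
and if $R$ has size $n$ with children of sizes $n_1, n_2$ obeying $n_1 + n_2 \le n$ and $\max(n_1,n_2) \le \tfrac23 n$, then $n_1^{3/2} + n_2^{3/2} \le (n_1+n_2)\max(n_1,n_2)^{1/2} \le \sqrt{2/3}\, n^{3/2}$, so the total of $n_R^{3/2}$ over one level of the tree is at most $\sqrt{2/3}$ times that of the level above; the geometric series sums to $\sum_R n_R^{3/2} = O(\hat N^{3/2})$. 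This gives $\sum_i \eta(i)^2 = O(\hat N^{3/2})$, hence the $O(\hat N^{3/2})$ operation count (and, as a byproduct, $O(\hat N\log\hat N)$ fill), and since the nested dissection ordering itself is computable in $O(\hat N\log\hat N) = O(\hat N^{3/2})$ time, the theorem follows. The main obstacle is the boundary estimate $|\partial R| = O(\sqrt{n_R})$: applying the separator theorem naively to induced subgraphs controls separator sizes but not the accumulated boundary passed down to small subregions, so one must instead bisect the boundary along with the interior at every step --- this bookkeeping is exactly the technical content of \citet{lipton-rose-tarjan}, whose construction we invoke.
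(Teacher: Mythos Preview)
The paper does not prove this theorem; it merely states it and cites \citet{lipton-rose-tarjan} as the source, treating the result as a black box for its subsequent $O(\hat N^{3/2})$ complexity claims. Your proposal is therefore not comparable to a paper proof---you have supplied a faithful outline of the original Lipton--Rose--Tarjan argument (reduction to $\sum_i \eta(i)^2$, the fill-localization lemma that bounds $\eta(v)$ by the size of the enclosing separator plus boundary, and the geometric-series sum over the region tree), which is strictly more than the paper offers. Your identification of the boundary estimate $|\partial R| = O(\sqrt{n_R})$ as the only nontrivial step, and your explicit deferral of it to the cited construction, is accurate and appropriate.
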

Notice, however, that we cannot directly apply the Theorem to $\overline{\mathcal{K}}$, because the sparsity pattern of $\mathcal{K}$ is asymmetric and does not correspond, in general, to any graph.

Let $G^{**}$ be a planar graph, obtained from $\hat{G}$, by contracting each edge in $E'$, $| V(G^{**}) | = | E' | = \frac12 \hat{N}$. Find and fix a nested dissection ordering  over $V(G^{**})$ (it takes $O(\hat{N} \log \hat{N})$ steps) and let the $\{ v_1, v_2 \}, \dots, \{ v_{\hat{N} - 1}, v_{\hat{N}} \}$ enumeration of $E'$ correspond to this ordering. Split $\mathcal{K}$ into  $2 \times 2$ cells and consider the sparsity pattern of the nonzero cells. One observes that the resulting sparsity pattern coincides with the sparsity patterns of $\overline{\mathcal{K}}$ and $G^{**}$.  Since LU-decomposition can be stated in the $2 \times 2$ block elimination form, its complexity is reduced down to $O(\hat{N}^\frac32)$.

This concludes construction of an efficient inference (counting) algorithm for planar PM model.

\subsection{Sampling PMs of Planar \texorpdfstring{$\hat{G}$}{G} in \texorpdfstring{$O(\hat{N}^\frac32)$}{O} time (Wilson's Algorithm)} \label{app:wilson}

This section addresses sampling part of Theorem \ref{th:pmmodel}. In this section we assume that degrees of $\hat{G}$'s vertices are upper-bounded by $3$.
This is true for $G^*$ - the only type of PM model appearing in the paper. Any other constant substituting $3$ wouldn't affect the analysis of complexity. Moreover, \citet{wilson} shows that any PM model on a planar graph can be reduced to bounded-degree planar model without affecting $O(\hat{N}^\frac32)$ complexity.

\subsubsection{Structure of the Algorithm}

Denote a sampled PM as $M$, $\mathbb{P}(M) = \hat{Z}^{-1} \prod_{e \in M} c_e$. Wilson's algorithm first applies LT algorithm of \citet{lipton-tarjan} to find a separation $P_1, P_2, P_3$ of $\hat{G}$ ($\max ( | P_1 |, | P_2 | ) \leq \frac23 \hat{N}$, $| P_3 | \leq 2^\frac32 \sqrt{ \hat{N} }$). Then it iterates over $v \in P_3$ and for each $v$ it draws an edge of $M$, saturating $v$. Then it appears that, given this intermediate result, drawing remaining edges of $M$ may be split into two independent drawings over $\hat{G}(P_1)$ and $\hat{G}(P_2)$, respectively, and then the process is repeated recursively. 

It takes $O(\hat{N}^\frac32)$ steps to sample edges attached to $P_3$ at the first step of the recursion, therefore the overall complexity of the Wilson's algorithm is also $O(\hat{N}^\frac32)$.

Subsection \ref{subsec:probs} introduces probabilities required to draw the aforementioned PM samples. Subsections \ref{subsec:step1} and \ref{subsec:step2} describe how to sample edges attached to the separator, while Subsection \ref{subsec:step3} focuses on describing the recursion.

\subsubsection{Drawing Perfect Matchings} \label{subsec:probs}

For some $Q \in E(\hat{G})$ consider the probability of getting $Q$ as a subset of $M$:
\begin{align}
    \mathbb{P}(Q \subseteq M) &= \frac{1}{\hat{Z}} \sum_{\substack{M' \in \text{PM}(\hat{G}) \\ Q \subseteq M'}} \biggl( \prod_{e \in M'} c_e \biggr) \nonumber \\
    &= \frac{1}{\hat{Z}} \biggl( \prod_{e \in Q} c_e \biggr) \cdot \sum_{M' \in \text{PM}(\hat{G})} \biggl( \prod_{e \in M' \setminus Q} c_e \biggr)
\label{eq:qprob}
\end{align}

Let $\hat{V}_Q = \cup_{e \in Q} e$ and $\hat{G}_{\setminus Q} = \hat{G} (V(\hat{G}) \setminus \hat{V}_Q)$. Then the set $\{ M' \setminus Q \, | \, M' \in \text{PM}(\hat{G}) \}$ coincides with $\text{PM}(\hat{G}_{\setminus Q})$. This yields the following expression
\begin{equation}
    \mathbb{P}(Q \subseteq M) = \frac{\hat{Z}_{\setminus Q}}{\hat{Z}} \biggl( \prod_{e \in Q} c_e \biggr)
\end{equation}
where
\begin{equation}
    \hat{Z}_{\setminus Q} = \sum_{M'' \in \text{PM}(\hat{G}_{\setminus Q})} \biggl( \prod_{e \in M''} c_e \biggr)
\end{equation}
is a PF of the PM model on $\hat{G}_{\setminus Q}$ induced by the edge weights $c_e$.

For a square matrix $A$ let $A_{c_1, ..., c_l}^{r_1, ..., r_l}$ denote the matrix obtained by deleting rows $r_1, ..., r_l$ and columns $c_1, ..., c_l$ from $A$. Let $[A]_{c_1, ..., c_l}^{r_1, ..., r_l}$ be obtained by leaving only rows $r_1, ..., r_l$ and columns $c_1, ..., c_l$ of $A$ and placing them in this order.

Now let $\hat{V}_Q = \{ v_{i_1}, ..., v_{i_r} \}, i_1 < ... < i_r$. A simple check demonstrates that deleting vertex from a graph preserves the Pfaffian orientation. By induction this holds for any number of vertices deleted. From that it follows that $\mathcal{K}_{i_1, ..., i_r}^{i_1, ..., i_r}$ is a Kasteleyn matrix for $\hat{G}_{\setminus Q}$ and then
\begin{equation}
    \hat{Z}_{\setminus Q} = \pfaffian \mathcal{K}_{i_1, ..., i_r}^{i_1, ..., i_r} =  \sqrt{\det \mathcal{K}_{i_1, ..., i_r}^{i_1, ..., i_r}}
\end{equation}
resulting in
\begin{equation}
    \mathbb{P}(Q \subseteq M) = \sqrt{ \frac{\det \mathcal{K}_{i_1, ..., i_r}^{i_1, ..., i_r}}{\det \mathcal{K}} } \cdot \biggl( \prod_{e \in Q} c_e \biggr)
\end{equation}

Linear algebra transformations, described by \citet{wilson}, suggest that if $A$ is non-singular, then
\begin{equation}
    \frac{\det A_{c_1, ..., c_l}^{r_1, ..., r_l}}{\det A} = \pm \det [A^{-1}]_{r_1, ..., r_l}^{c_1, ..., c_l}
\end{equation}
This observation allows us to express probability (\ref{eq:qprob}) as
\begin{equation}
    \mathbb{P}(Q \subseteq M) = \sqrt{ | \det [\mathcal{K}^{-1}]_{i_1, ..., i_r}^{i_1, ..., i_r} |} \cdot \biggl( \prod_{e \in Q} c_e \biggr)
\end{equation}

Now we are in the position to describe the first step of the Wilson's recursion.

\subsubsection{Step 1: Computing Lower-Right Submatrix of \texorpdfstring{$\overline{\mathcal{K}}^{-1}$}{K}} \label{subsec:step1}

Find a separation $P_1, P_2, P_3$ of $\hat{G}$. The goal is to sample an edge from every $v \in P_3$.

Let $W$ be a set of vertices from $P_3$ and their neighbors, then $| W | \leq 3 | P_3 |$ because each vertex in $\hat{G}$ is of degree at most $3$. Let $W^{**} \subseteq V(G^{**})$ be a set of the contracted edges (recall $G^{**}$ definition from Subsection \ref{subsec:nd}), containing at least one vertex from $W$, $| W^{**} | \leq | W |$. Then $W^{**}$ is a separator of $G^{**}$ such that
\begin{equation} \label{eq:tstarstar}
    | W^{**} | \leq | W | \leq 3 | P_3 | \leq 3 \cdot 2^\frac32 \sqrt{\hat{N}} \leq 3 \cdot 2^2 \sqrt{| V(G^{**}) |}
\end{equation}
where one uses that, $| V(G^{**}) | = \frac{\hat{N}}{2}$.
Find a nested dissection ordering (Subsection \ref{subsec:nd}) of $V(G^{**})$ with $W^{**}$ as a top-level separator. This is a correct nested dissection due to Eq.~(\ref{eq:tstarstar}).

Utilizing this ordering, construct $\overline{\mathcal{K}}$. Compute $L$ and $U$ - LU-decomposition of $\overline{\mathcal{K}}$ ($O(\hat{N}^\frac32)$ time).
Let $\gamma = 2 |W^{**}| \leq 3 \cdot 2^\frac52 \sqrt{\hat{N}}$ and let $\mathcal{I}$ be a shorthand notation for $(\hat{N} - \gamma + 1, ..., \hat{N})$. Using $L$ and $U$, find $D = [ \overline{\mathcal{K}}^{-1} ]_\mathcal{I}^\mathcal{I}$, which is a lower-right $\overline{\mathcal{K}}^{-1}$'s submatrix of size $\gamma \times \gamma$.

It is straightforward to observe that the $i$-th column of $D$, $d_i$, satisfies
\begin{equation}
    [L]_\mathcal{I}^\mathcal{I} \times \biggl( [U]_\mathcal{I}^\mathcal{I} \times d_i \biggr) = e_i,
\end{equation}
where $e_i$ is a zero vector with unity at the $i$-th position. Therefore constructing $D$ is reduced to solving $2 \gamma$ triangular systems, each  of size $\gamma \times \gamma$, resulting in $O(\gamma^3) = O(\hat{N}^\frac32)$ required steps.

\subsubsection{Step 2: Sampling Edges in the Separator} \label{subsec:step2}

Now, progressing iteratively, one finds $v \in P_3$ which is not yet paired and draw an edge emanating from it. Suppose that the edges, $e_1 = \{ v_{j_1}, v_{j_2} \}, ..., e_k = \{ v_{j_{2 k - 1}}, v_{j_{2 k}} \}$, are already sampled. We assume that by this point we have also computed LU-decomposition $A_k = [\mathcal{K}^{-1}]_{j_1, ..., j_{2 k}}^{j_1, ..., j_{2 k}} = L_k U_k$ and we will update it to $A_{k + 1}$ when the new edge is drawn. Then
\begin{equation}
    \mathbb{P}(e_1, ..., e_k \in M) = \sqrt{| \det A_k |} \prod_{j = 1}^k c_{e_j}
\end{equation}

Next we choose $j_{2 k + 1}$ so that $v_{j_{2 k + 1}}$ is not saturated yet. We iterate over $v_{j_{2 k + 1}}$'s neighbors considered as candidates for becoming $v_{j_{2 k + 2}}$. Let $v_j$ to become the next candidate, denote $e_{k + 1} = \{ v_{j_{2 k + 1}}, v_j \}$. For $n \in \mathbb{N}$ let $\alpha(n) = n + 1$ if $n$ is odd and $\alpha(n) = n - 1$ if $n$ is even. Then the identity
\begin{equation}
    \mathcal{K}^{-1} = [\overline{\mathcal{K}}^{-1}]_{1, 2, ..., \hat{N}}^{\alpha(1), \alpha(2), ..., \alpha(\hat{N})},
    \label{eq:app_K^-1}
\end{equation}
follows from the definition of $\overline{\mathcal{K}}$. One deduces from Eq.~(\ref{eq:app_K^-1})
\begin{equation}
    A_{k + 1} = [\mathcal{K}^{-1}]_{j_1, ..., j_{2 k + 1}, j}^{j_1, ..., j_{2 k + 1}, j} = [\overline{\mathcal{K}}^{-1}]_{j_1, ..., j_{2 k + 1}, j}^{\alpha (j_1), ..., \alpha(j_{2 k + 1}), \alpha(j)}
\end{equation}

Constructing $W^{**}$ one has $j_1, ..., j_{2 k + 1}, j, \alpha(j_1), ..., \alpha(j_{2 k + 1}), \alpha(j) > \hat{N} - t$. It means that $A_{k + 1}$ is a submatrix of $D$ with permuted rows and columns, hence $A_{k + 1}$ is known.

We further observe that
\begin{equation}
    A_{k + 1} = \begin{bmatrix} A_k & y \\ r & d \end{bmatrix} = \begin{bmatrix} L_k & 0 \\ R & 1 \end{bmatrix} \begin{bmatrix} U_k & Y \\ 0 & z \end{bmatrix} = L_{k + 1} U_{k + 1}
\end{equation}
Therefore to update $L_{k + 1}$ and $U_{k + 1}$, one just solves the triangular system of equations $R U_k = r$ and $L_k Y = y$, where $R^\top, r^\top, Y, y$ are of size $2 k \times 2$ (this is done in $O(k^2)$ steps), and then compute $z = d - R Y$ which is of the size $2 \times 2$, then set, $u = \det z$.

The probability to pair $v_{j_{2 k + 1}}$ and $v_j$ is
\begin{align}
    \mathbb{P}(e_{k + 1} \in M \, | \, e_1, ..., e_k \in M) &= \frac{\mathbb{P}(e_1, ..., e_{k + 1} \in M)}{\mathbb{P}(e_1, ..., e_k \in M)} \nonumber \\
    &= \frac{ \sqrt{| \det A_{k + 1} |} \prod_{j = 1}^{k + 1} c_{e_j}}{ \sqrt{| \det A_k |} \prod_{j = 1}^k c_{e_j}} \nonumber \\
    &= \frac{ c_{e_{k + 1}} \sqrt{ | u | | \det A_k |} }{\sqrt{ | \det A_k | }} \nonumber \\
    &= c_{e_{k + 1}} \sqrt{| u |}
\end{align}

Therefore maintaining $U_{k + 1}$ allows us to compute the required probability and draw a new edge from $v_{j_{2 k + 1}}$. By construction of $\hat{G}$, $v_{j_{2 k + 1}}$ has only $3$ neighbors, therefore the complexity of this step is $O(\sum_{k = 1}^{| P_3 |} k^2) = O(\hat{N}^\frac32)$ because $| P_3 | \leq 2^\frac32 \sqrt{\hat{N}}$.

\subsection{Step 3: Recursion} \label{subsec:step3}

Let $M_{sep} = \{ e_1, e_2, ... \}$ be a set of edges drawn on the previous step, and $\hat{V}_{sep}$ be a set of vertices saturated by $M_{sep}$, $P_3 \subseteq \hat{V}_{sep}$. Given $M_{sep}$, the task of sampling $M \in \text{PM}(\hat{G})$ such that $M_{sep} \subseteq M$ is reduced to sampling perfect matchings $M_1$ and $M_2$ over $\hat{G}(P_1 \setminus \hat{V}_{sep})$ and $\hat{G}(P_2 \setminus \hat{V}_{sep})$, respectively. Then $M = M_1 \cup M_2 \cup M_{sep}$ becomes the result of the perfect matching drawn from (\ref{eq:pmprobs}).

Even though only the first step of the Wilson's recursion was discussed so far, any further step in the recursion is done in exactly the same way with the only exception that vertex degrees may become less than $3$, while in $\hat{G}$ they are exactly $3$. Obviously, this does not change the iterative procedure and it also does not affect the complexity analysis.

\section{Theorem \ref{th:k33dec} Proof} \label{ch:k33dec}

Prior to the proof we introduce a series of definitions and results. We follow \citet{hopcroft2,gutwenger}, see also \citet{08Mader} to define the tree of triconnected components. The definitions apply for a biconnected graph $G$ (see the definition of biconnected graph and biconnected component e.g. in Appendix \ref{sec:k5proof}.)

Let $v, w \in V(G)$. Divide $E(G)$ into equivalence classes $E_1, ..., E_k$ so that $e_1, e_2$ are in the same class if they lie on a common simple path that has $v, w$ as endpoints. $E_1, ..., E_k$ are referred to as \textit{separation classes}. If $k \geq 2$, then $\{ v, w \}$ is a \textit{separation pair} of $G$, unless (a) $k = 2$ and one of the classes is a single edge or (b) $k = 3$ and each class is a single edge. Graph $G$ is called \textit{triconnected} if it has no separation pairs.

\begin{figure*}
\centering
\includegraphics[width=0.9\linewidth]{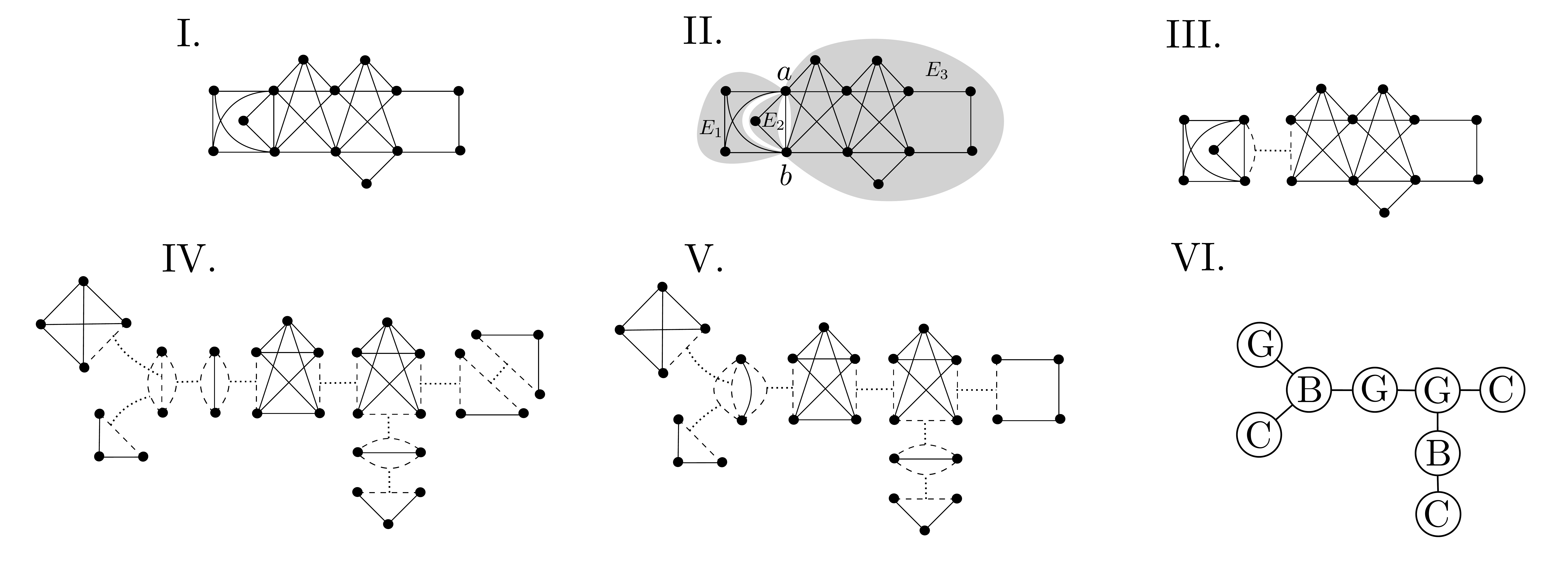}
\caption{(I) An example biconnected graph $G$. (II) A separation pair $\{ a, b \}$ of $G$ and separation classes $E_1, E_2, E_3$ associated with $\{ a, b \}$. (III) Result of split operation with $E' = E_1 \cup E_2, E'' = E_3$. Dashed lines indicate virtual edges and dotted lines connect equivalent virtual edges in split graphs. (IV) Split components of $G$ (non-unique). (V) Triconnected components of $G$. (VI) Triconnected component tree $T$ of $G$; spacial alignment of V is preserved. ``G," ``B," and ``C" are examples of the  ``triconnected graph,"  ``multiple bond," and  ``cycle," respectively.}
\label{fig:gr_tree}
\end{figure*}

Let $\{ v, w \}$ be a separation pair in $G$ with equivalence classes $E_1, ..., E_k$. Let $E' = \cup_{i = 1}^l E_l$, $E'' = \cup_{i = l + 1}^k E_l$ be such that $| E' | \geq 2$, $| E'' | \geq 2$. Then, graphs $G_1 = (\cup_{e \in E'} e, E' \cup \{ e_\mathcal{V} \}), G_2 = (\cup_{e \in E''} e, E'' \cup \{ e_\mathcal{V} \})$ are called \textit{split graphs} of $G$ with respect to $\{ v, w \}$, and $e_\mathcal{V}$ is a \textit{virtual edge}, which is a  new edge between $v$ and $w$, identifying the split operation. Due to the addition of $e_\mathcal{V}$, $G_1$ and $G_2$ are not normal in general.

Split $G$ into $G_1$ and $G_2$. Continue splitting $G_1, G_2$, and so on, recursively, until no further split operation is possible. The resulting graphs are \textit{split components} of $G$. They can either be $K_3$ (triangles), triple bonds, or triconnected normal graphs.

Let $e_\mathcal{V}$ be a virtual edge. There are exactly two split components containing $e_\mathcal{V}$: $G_1 = (V_1, E_1)$ and $G_2 = (V_2, E_2)$. Replacing $G_1$ and $G_2$ with $G' = (V_1 \cup V_2, (E_1 \cup E_2) \setminus \{ e_\mathcal{V} \})$ is called \textit{merging} $G_1$ and $G_2$. Do all possible mergings of the cycle graphs (starting from triangles), and then do all possible mergings of multiple bonds starting from triple bonds. Components of the resulting set are referred to as the \textit{triconnected components} of $G$. We emphasize again that some graphs (i.e., cycles and bonds) in the set of triconnected components are not necessarily triconnected.

\begin{lemma}
\citep{hopcroft2} Triconnected components are unique for $G$. Total number of edges within the triconnected components is at most $3 | E | - 6$.
\label{lemma:3}
\end{lemma}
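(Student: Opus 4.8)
The plan is to treat the two claims separately: the uniqueness of the triconnected components is a structural theorem (Tutte's $3$-block theory, made algorithmic by \citet{hopcroft2}), while the edge bound is a short counting argument over the split/merge process. For uniqueness I would first localize the source of non-determinism in the construction. Given a fixed separation pair $\{v,w\}$ with separation classes $E_1,\dots,E_k$, the only freedom when performing a split is how to group these classes into $E'$ and $E''$; iterating splits at the same pair produces a ``fan'' of multiple bonds, and splits at a sequence of separation pairs lying on a common cycle produce a chain of cycle components. Merging all multiple bonds into maximal bonds and all cycles into maximal cycles is designed precisely to collapse this freedom. I would then argue by induction on $|E(G)|$ that (i) every separation pair of $G$ is realized as a virtual-edge pair in the final decomposition, (ii) each virtual-edge pair either is a separation pair or sits canonically inside a maximal bond or cycle, and (iii) therefore the incidence tree of the triconnected components, together with the type label (triconnected graph / cycle / bond) of each node, depends only on $G$. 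The base case is $G$ triconnected, where there is nothing to do. The genuinely delicate point --- that components produced by two different orders of splitting coincide after the canonical merges --- is exactly the content of \citet{hopcroft2} (see also \citet{08Mader}), so I would cite it rather than reproduce the full argument.

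For the edge bound, let $m = |E(G)|$ and track the total number of edges summed over all components currently present in the construction. A split replaces one component by two and, since the virtual edge $e_{\mathcal{V}}$ is inserted into both of the resulting split graphs, it increases this total by exactly $2$ while increasing the number of components by $1$; starting from the single component $G$ with $m$ edges, after $s$ successive splits there are $k = s + 1$ split components with $m + 2s = m + 2(k-1)$ edges in total. A merge is the reverse bookkeeping: it replaces two components sharing a virtual edge by a single component and deletes both copies of that edge, so it decreases the component count by $1$ and the edge total by $2$. Hence if the process ends with $t$ triconnected components after $r = k - t$ merges, the total number of edges over all triconnected components equals $m + 2(k-1) - 2r = m + 2t - 2$.

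It remains to bound $t$. Every triconnected component has at least $3$ edges: a multiple bond has at least $3$ parallel edges, a cycle has length at least $3$, and a triconnected simple graph has minimum degree $3$ on at least $4$ vertices and hence at least $6$ edges. Therefore $m + 2t - 2 = \sum_i |E(C_i)| \ge 3t$, which gives $t \le m - 2$, and substituting back yields $m + 2t - 2 \le m + 2(m-2) - 2 = 3m - 6$, which is the claimed bound $3|E| - 6$. The main obstacle in this whole argument is the uniqueness statement; the edge count is routine once the split/merge accounting is in place, whereas uniqueness requires the structural comparison of different split sequences, which is the substance of \citet{hopcroft2} and which I would invoke as a black box.
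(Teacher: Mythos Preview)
Your proposal is correct. In the paper this lemma carries no proof at all; it is simply attributed to \citet{hopcroft2}, so your treatment already exceeds what the paper does. Your handling of uniqueness---sketch the source of non-determinism, then defer the full structural comparison to \citet{hopcroft2}---is exactly the right stance here. Your edge-count argument is the standard one (and is essentially Lemma~2 of \citet{hopcroft2}): each split adds two edges and one component, each merge removes two edges and one component, so with $t$ final components the total is $m+2t-2$; since every component (bond, cycle, or triconnected simple graph) has at least three edges, $t\le m-2$ and the bound $3m-6$ follows. Nothing is missing.
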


Consider a graph $T'$, where vertices (further referred to as \textit{nodes} for disambiguation) are triconnected components, and there is an edge between $a$ and $b$ in $T'$, when $a$ and $b$ share a (copied) virtual edge.

\begin{lemma}
\citep{hopcroft2} $T'$ is a tree.
\end{lemma}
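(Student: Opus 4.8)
The plan is to show that $T'$ is connected and acyclic. First I would establish connectivity: since $G$ is biconnected (hence connected), and every split operation replaces a graph by two graphs sharing a virtual edge, one can argue by induction on the number of splits that the split components, connected via shared virtual edges, form a connected structure; merging cycles and bonds only contracts edges of this structure and preserves connectivity. Concretely, pick any two triconnected components $a$ and $b$; their vertex sets trace back through the split/merge history to a common ancestor, yielding a path in $T'$. Thus $T'$ is connected.

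Next I would show $T'$ has no cycle, which is the crux. The key structural fact (Lemma \ref{lemma:3}, from \citet{hopcroft2}) is that each virtual edge $e_\mathcal{V}$ belongs to \emph{exactly two} triconnected components. So every edge of $T'$ corresponds bijectively to a virtual edge, and removing that edge from $T'$ separates the two components sharing it. I would make this precise by a counting/separation argument: if $T'$ contained a cycle $a_0, a_1, \dots, a_{k-1}, a_0$, then the corresponding virtual edges $e_1, \dots, e_k$ would, when traced back through the merge operations that undid some splits, describe a cyclic chain of separation pairs in $G$; but the original sequence of split operations produces a laminar (tree-like) family of separations — each split is applied to a single current component and partitions its edge classes — so the adjacency structure it induces cannot contain a cycle. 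Equivalently: the split tree on split components is a tree by construction (each split adds exactly one node and one edge to the structure, starting from a single node), and the merge operations contract connected subtrees (all cycle components into single nodes, then all bond components into single nodes) of this tree, and contracting connected subtrees of a tree yields a tree.

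The cleanest route, which I would adopt, is the second one: define the \emph{split tree} $T_s$ whose nodes are split components and whose edges are (copies of) virtual edges, observe that $T_s$ is a tree because the recursive splitting procedure builds it by starting with one node and, at each step, replacing one node by two nodes joined by a new edge (this strictly preserves the "tree with $n$ nodes and $n-1$ edges, connected" invariant), and then note that forming triconnected components amounts to contracting the edges internal to each maximal set of merged cycles and each maximal set of merged bonds — and these sets are connected in $T_s$ (a merge only ever joins two components across a shared virtual edge, so iterated merges act on connected subgraphs). Contracting connected subgraphs of a tree yields a tree, hence $T'$ is a tree.

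The main obstacle I anticipate is justifying rigorously that the subsets of split components that get merged together form \emph{connected} subtrees of $T_s$, and that the merging is well-defined independent of order — i.e., that "do all possible mergings of cycles, then all possible mergings of bonds" yields a canonical result. This relies on the uniqueness statement in Lemma \ref{lemma:3} and on the fact that two adjacent cycle components (sharing a virtual edge) merge into a cycle again while a cycle and a bond never merge, so within $T_s$ the "cycle" nodes induce a subgraph whose connected components are exactly the sets to be contracted, and likewise for "bond" nodes after the cycle-contractions; verifying these adjacency invariants is the delicate bookkeeping step. Everything else — connectivity of $T'$ and the "contracting connected subtrees of a tree gives a tree" principle — is routine.
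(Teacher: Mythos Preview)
The paper does not supply a proof of this lemma at all: it is stated with a citation to \citet{hopcroft2} and invoked as a known structural fact, with no argument given. So there is nothing in the paper to compare your proposal against.

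That said, your proposal is sound. The route you settle on --- build the split tree $T_s$ by observing that each split replaces one node by two nodes joined by a new edge (so $T_s$ is a tree by a trivial induction), then obtain $T'$ by contracting the connected subtrees of like-typed components and use that contracting connected subtrees of a tree yields a tree --- is exactly the standard argument, and it works. Your anticipated obstacle is not serious: each individual merge acts across a single shared virtual edge, hence across an edge of $T_s$, so the sets being merged are automatically connected in $T_s$; the order-independence you worry about is subsumed by the uniqueness assertion of Lemma~\ref{lemma:3}, which you are entitled to invoke since the paper cites it separately.
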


We will also use the following celebrated result:
\begin{lemma} \label{th:k33free}
\citep{hall} Biconnected graph $G$ is $K_{33}$-free if and only if its nonplanar triconnected components are exactly~$K_5$.
\end{lemma}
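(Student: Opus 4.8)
The plan is to prove the two implications by passing a subdivision of $K_{33}$ back and forth between the $2$-connected graph $G$ and its triconnected components, using throughout that ``$G$ has a $K_{33}$-minor'' and ``$G$ contains a subdivision of $K_{33}$'' are equivalent (because $K_{33}$ is subcubic, \citep[chap.~4]{diestel}) and that planarity is closed under taking minors. First I would record two transfer facts. (a) Every triconnected normal component $H$ of $G$ is a topological minor of $G$: when $G$ is split along a separation pair $\{v,w\}$ into split graphs $G_1,G_2$ with virtual edge $e_{\mathcal V}$, each $G_i-e_{\mathcal V}$ is connected and contains a $v$--$w$ path (every separation class induces a connected subgraph through $v$ and $w$), so $e_{\mathcal V}$ can be realized by a path through the opposite side; iterating over all splits, $H$ appears as a subdivision inside $G$ with its real edges kept and its virtual edges routed through pairwise disjoint parts of $G$. (b) Conversely, if $G$ contains a subdivision $S$ of $K_{33}$ and $\{v,w\}$ is a separation pair, then after rerouting at most one subdivision path of $S$ through $e_{\mathcal V}$, $S$ is contained in $G_1$ or in $G_2$; iterating over all splits, some triconnected component of $G$ contains a subdivision of $K_{33}$.

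For the forward implication, assume $G$ is $K_{33}$-free and let $H$ be a nonplanar triconnected component. Cycles and bonds are planar, so $H$ is a triconnected normal graph, hence $3$-connected, and by (a) it inherits $K_{33}$-freeness. Being nonplanar, $H$ contains a subdivision of $K_5$ or of $K_{33}$ by Kuratowski's theorem; the latter is excluded, so $H$ contains a subdivision of $K_5$. I would then invoke the classical lemma (a standard ingredient in the proof of Wagner's theorem, \citep[chap.~4]{diestel}) that a $3$-connected graph containing a subdivision of $K_5$ but not equal to $K_5$ also contains a subdivision of $K_{33}$; since $H$ is $K_{33}$-free, this forces $H=K_5$. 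For the backward implication, assume every nonplanar triconnected component of $G$ equals $K_5$ and suppose, for contradiction, that $G$ contains a subdivision of $K_{33}$. By (b) some triconnected component $H$ contains a subdivision of $K_{33}$, hence is nonplanar with at least $6$ vertices; but every triconnected component of $G$ is a cycle (planar), a bond (at most two vertices), a planar normal graph, or $K_5$ (five vertices), none of which has a $K_{33}$-minor -- a contradiction. So $G$ is $K_{33}$-free.

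I expect the main obstacle to be fact (b). One must show that the six branch vertices of $S$ cannot be split by the $2$-set $\{v,w\}$ -- using that $K_{33}$ is $3$-connected, so any two of its branch vertices are joined in $S$ by three internally disjoint paths, at most two of which can meet $\{v,w\}$, hence the third stays on one side -- and then that at most one subdivision path of $S$ can traverse the interior of the ``far'' side $G_i$, a short case analysis on whether $v$ and $w$ are branch vertices, internal vertices, or non-vertices of $S$, so that a single use of $e_{\mathcal V}$ repairs $S$ on the near side. The $K_5$-subdivision lemma used in the forward direction is a similarly localized surgery argument, and everything else is assembling Kuratowski's theorem, Wagner's theorem and minor-closedness of planarity. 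Since the statement is due to \citet{hall} (see also \citet{08Mader}), one may alternatively just cite it.
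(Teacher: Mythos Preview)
The paper does not prove this lemma at all: it is stated with the citation \citep{hall} and immediately used as a black box in the proof of Theorem~\ref{th:k33dec}. Your proposal therefore supplies an argument where the paper offers none, and the outline you give is the standard one and is correct. Facts (a) and (b) are exactly the transfer lemmas underlying the Tutte/Hopcroft--Tarjan decomposition: each $3$-connected component is a topological minor of $G$ (virtual edges realized by paths on the far side), and any subdivision of a $3$-connected graph sitting in $G$ can be pushed into a single split graph by rerouting at most one branch through the virtual edge. Your case analysis for (b)---all six branch vertices forced to one side by $3$-connectivity of $K_{33}$, then at most one subdivision path crossing the $2$-cut depending on whether $v,w$ are branch, internal, or absent---is the right shape and goes through once written out.

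One caution: the ``$K_5$ lemma'' you invoke in the forward direction (a $3$-connected graph containing a $TK_5$ and not equal to $K_5$ contains a $TK_{33}$) is, modulo Kuratowski, equivalent to the $3$-connected special case of the very statement you are proving. It is indeed a standalone result with its own surgery proof (attach a vertex outside the $TK_5$ by three disjoint paths using $3$-connectivity, or reroute around an extra edge), but the attribution to \citep[chap.~4]{diestel} is a bit loose---it is not stated there verbatim, though it is implicit in that development. Be aware that this is where the real content sits; your facts (a) and (b) are the comparatively mechanical reduction from biconnected to $3$-connected. As you yourself note, simply citing \citet{hall} is precisely what the paper does.
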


The graph on Figure \ref{fig:gr_tree} is actually $K_{33}$-free according to the Lemma. Now we are in the position to give a proof of the Theorem \ref{th:k33dec}.

\begin{proof}
Since $G$ is $K_{33}$-free and has no loops or multiple edges, it holds that $| E(G) | = O(N)$ \citep{thomason}. In time $O(N)$ we can find a forest of $G$'s biconnected components \citep{tarjan}. If we find the $5$-nice decomposition of each biconnected component, we can trivially combine them into a single $5$-nice decomposition in time $O(N)$ using navels of size $0$ and $1$. Hence, we can assume that $G$ is biconnected.

Build a tree of triconnected components for $G$ in time $O(N)$ \citep{hopcroft2,gutwenger,vo}. Now delete virtual edges, which results in a $5$-nice decomposition of $G$, given the Lemma \ref{th:k33free}.
\end{proof}

\section{
Proof for Theorem \ref{th:k5dec}} \label{sec:k5proof}

Prior to the proof, we introduce a series of definitions used by \citet{reed}. It is assumed that a graph $G = (V, E)$ (no loops and multiple edges) is given.

For any $X \subseteq V(G)$ let $G - X$ denote a graph $(V(G) \setminus X, \{ e = \{ v, w \} \in E(G) \, | \, v, w \notin X \})$. $X \subseteq V(G)$ is a $(i, j)$-\textit{cut} whenever $| X | = i$ and $G - X$ has at least $j$ connected components.

The graph is \textit{biconnected} whenever it has no $(1, 2)$-cut. A \textit{biconnected component} of the graph is a maximal biconnected subgraph. Clearly, a pair of biconnected components can intersect in at most one vertex and a graph of components' intersections is a tree when $G$ is connected (\textit{a tree of biconnected components}). The graph is \textit{$3$-connected} whenever it has no $(2, 2)$-cut.

A \textit{$2$-block tree} of a biconnected graph $G$, written $\langle T', \mathcal{G}' \rangle$, is a tree $T'$ with a set $\mathcal{G}' = \{ G'_t \}_{t \in V(T')}$ with the following properties:
\begin{itemize}
    \item[--] $G'_t$ is a graph (possibly with multiple edges) for each $t \in V(T')$.
    \item[--] If $G$ is $3$-connected then $T'$ has a single node $r$ which is colored $1$ and $G'_r = G$.
    \item[--] If $G$ is not $3$-connected then there exists a color $2$ node $t \in V(T')$ such that
    \begin{enumerate}
        \item $G'_t$ is a graph with two vertices $u$ and $v$ and no edges for some $(2, 2)$-cut $\{ u, v \}$ in $G$.
        \item Let $T'_1, \dots, T'_k$ be the connected components (subtrees) of $T' - t$. Then $G - \{ u, v \}$ has $k$ connected components $U_1, \dots, U_k$ and there is a labelling of these components such that $T'_i$ is a $2$-block tree of $G'_i = (V(U_i) \cup \{ u, v \}, E(U_i) \cup \{ \{ u, v \} \})$.
        \item For each $i$, there exists exactly one color $1$ node $t_i \in V(T'_i)$ such that $\{ u, v \} \subseteq V(G'_{t_i})$.
        \item For each $i$, $\{ t, t_i \} \in E(T)$.
    \end{enumerate}
\end{itemize}

A \textit{$(3, 3)$-block tree} of a $3$-connected graph $G$, written $\langle T'', \mathcal{G}'' \rangle$, is a tree $T''$ with a set $\mathcal{G}'' = \{ G''_t \}_{t \in V(T'')}$ with the following properties:
\begin{itemize}
    \item[--] $G''_t$ is a graph (possibly with multiple edges) for each $t \in V(T'')$.
    \item[--] If $G$ has no $(3, 3)$-cut then $T$ has a single node $r$ which is colored $1$ and $G_r = G$.
    \item[--] If $G$ has a $(3, 3)$-cut then there exists a color $2$ node $t \in V(T'')$ such that
    \begin{enumerate}
        \item $G''_t$ is a graph with vertices $u$, $v$ and $w$ and no edges for some $(3, 3)$-cut $\{ u, v, w \}$ in $G$.
        \item Let $T''_1, \dots, T''_k$ be the connected components (subtrees) of $T'' - t$. Then $G - \{ u, v, w \}$ has $k$ connected components $U_1, \dots, U_k$ and there is a labelling of these components such that $T_i$ is a $(3, 3)$-block tree of $G''_i = (V(U_i) \cup \{ u, v, w \}, E(U_i) \cup \{ \{ u, v \}, \{ v, w \}, \{ u, w \} \})$.
        \item For each $i$, there exists exactly one color $1$ node $t_i \in V(T''_i)$, such that $\{ u, v, w \} \subseteq V(G''_{t_i})$.
        \item For each $i$, $\{ t, t_i \} \in E(T'')$.
    \end{enumerate}
\end{itemize}

\begin{proof}
Since $G$ is $K_5$-free and has no loops or multiple edges, it holds that $| E(G) | = O(N)$ \citep{thomason}. In time $O(N)$ we can find a forest of $G$'s biconnected components \citep{tarjan}. If we find an $8$-nice decomposition for each biconnected component, join them into a single $8$-nice decomposition by using attachment sets of size $1$ for decompositions inside $G$'s connected component and attachment sets of size $0$ for decompositions in different connected components. Hence, further we assume that $G$ is biconnected.

The $O(N)$ algorithm of \citet{reed} finds a $2$-block tree $\langle T', \mathcal{G}' \rangle$ for $G$ and then for each color $1$ node $G'_t \in \mathcal{G}'$ it finds $(3, 3)$-block tree $\langle T'', \mathcal{G}'' \rangle$ where all components are either planar or M\"{o}bius ladders. To get an $8$-nice decomposition from each $(3, 3)$-block tree, 1) for each color $2$ node contract an edge between it and one of its neighbours in $T''$ and 2) remove all edges which were only created during $\langle T'', \mathcal{G}'' \rangle$ construction (2nd item of $(3, 3)$-block tree definition).

Now we have to draw additional edges in the forest $F$ of obtained $8$-nice decompositions so that to get a single $8$-nice decomposition $\mathcal{T}$ of $G$. Notice that for each pair of adjacent nodes $G'_t, G'_s \in \mathcal{G}'$ where $G'_t$ is color $1$ node and $G'_s = (\{ u, v \}, \varnothing)$ is a color $2$ node, $u, v$ are in $V(G'_t)$ and $\{ u, v \} \in E(G'_t)$. Hence, there is at least one component $G''_{r}$ of $8$-nice decomposition of $G'_t$ where both $u$ and $v$ are present. For each pair of $s$ and $t$ draw an edge between $s$ and $r$ in $F$. Then 1) for each color $2$ node in $F$ (such as $s$) contract an edge between it and one of its neighbors (such as $r$) and 2) remove all edges which were created during $\langle T', \mathcal{G}' \rangle$ construction (2nd item of $2$-block tree definition). This results is a correct $c$-nice decomposition for biconnected $G$.
\end{proof}

\section{Random Graph Generation} \label{sec:grgen}

As our derivations cover the most general case of planar and $K_{33}$-free graphs, we want to test them on graphs which are as general as possible. Based on Lemma \ref{th:k33free} (notice, that it provides necessary and sufficient conditions for a graph to be $K_{33}$-free) we implement a randomized construction of $K_{33}$-free graphs, which is assumed to cover most general $K_{33}$-free topologies.

Namely, one generates a set of $K_5$'s and random planar graphs, attaching them by edges to a tree-like structure.
Our generation process consists of the following two steps.
\begin{enumerate}
\item \textbf{Planar graph generation.} This step accepts $N \geq 3$ as an input and generates a normal biconnected planar graph of size $N$ along with its embedding on a plane. The details of the construction are as follows.

First, a random embedded tree is drawn iteratively. We start with a single vertex, on each iteration choose a random vertex of an already ``grown'' tree, and add a new vertex connected only to the chosen vertex. Items I-V in Fig.~\ref{fig:planargen} illustrate this step.

Then we triangulate this tree by adding edges until the graph becomes biconnected and all faces are triangles, as in the Subsection \ref{subsec:edg} (VI in Figure \ref{fig:planargen}). Next, to get a normal graph, we remove multiple edges possibly produced by triangulation (VII in Fig.~\ref{fig:planargen}). At this point the generation process is complete.

\begin{figure}
\centering
\includegraphics[width=0.95\linewidth]{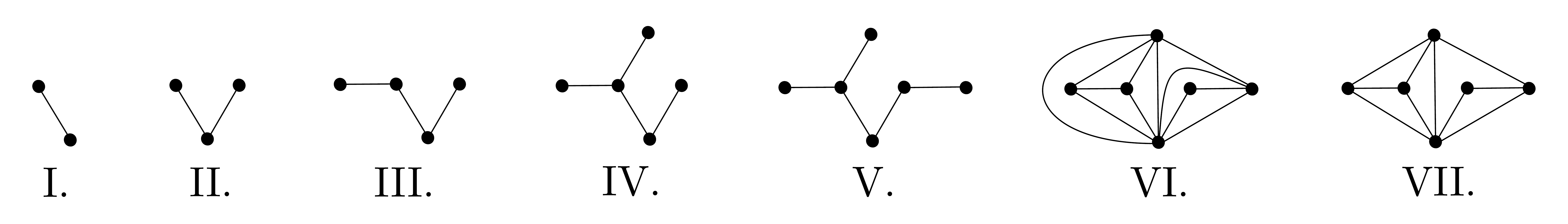}
\caption{Steps of planar graph generation. I-V refers to random tree construction on a plane, VI is a triangulation of a tree, VII is a result after multiple edges removal.}
\label{fig:planargen}
\end{figure}

\item \textbf{$K_{33}$-free graph generation.} Here we take $N \geq 5$ as the input and generate a normal biconnected $K_{33}$-free graph $G$ in a form of its partially merged decomposition $T$. Namely, we generate a tree $T$ of graphs where each node is either a normal biconnected planar graph or $K_5$, and every two adjacent graphs share a virtual edge. 

The construction is greedy and is essentially a tree generation process from Step 1. We start with $K_5$ root and then iteratively create and attach new nodes. Let $N' < N$ be a size of the already generated graph, $N' = 5$ at first. Notice, that when a node of size $n$ is generated, it contributes $n - 2$ new vertices to $G$.

An elementary step of iteration here is as follows. If $N - N' \geq 3$, a coin is flipped and the type of new node is chosen - $K_5$ or planar. If $N - N' < 3$, $K_5$ cannot be added, so a planar type is chosen. If a planar node is added, its size is drawn uniformly in the range between $3$ and $N - N' + 2$ and then the graph itself is drawn as described in Step 1. Then we attach a new node to a randomly chosen free edge of a randomly chosen node of $T'$. We repeat this process until $G$ is of the desired size $N$. Fig.~\ref{fig:k33freegen} illustrates the algorithm.
\end{enumerate}

To obtain an Ising model from $G$, we sample pairwise interactions for each edge of $G$ independently from $\mathcal{N}(0, 0.1^2)$.

Notice that the tractable Ising model generation procedure is designed in this section solely for the convenience of testing and it is not claimed to be sampling models of any particular practical interest (e.g. in statistical physics or computer science).

\begin{figure}
\centering
\includegraphics[width=0.69\linewidth]{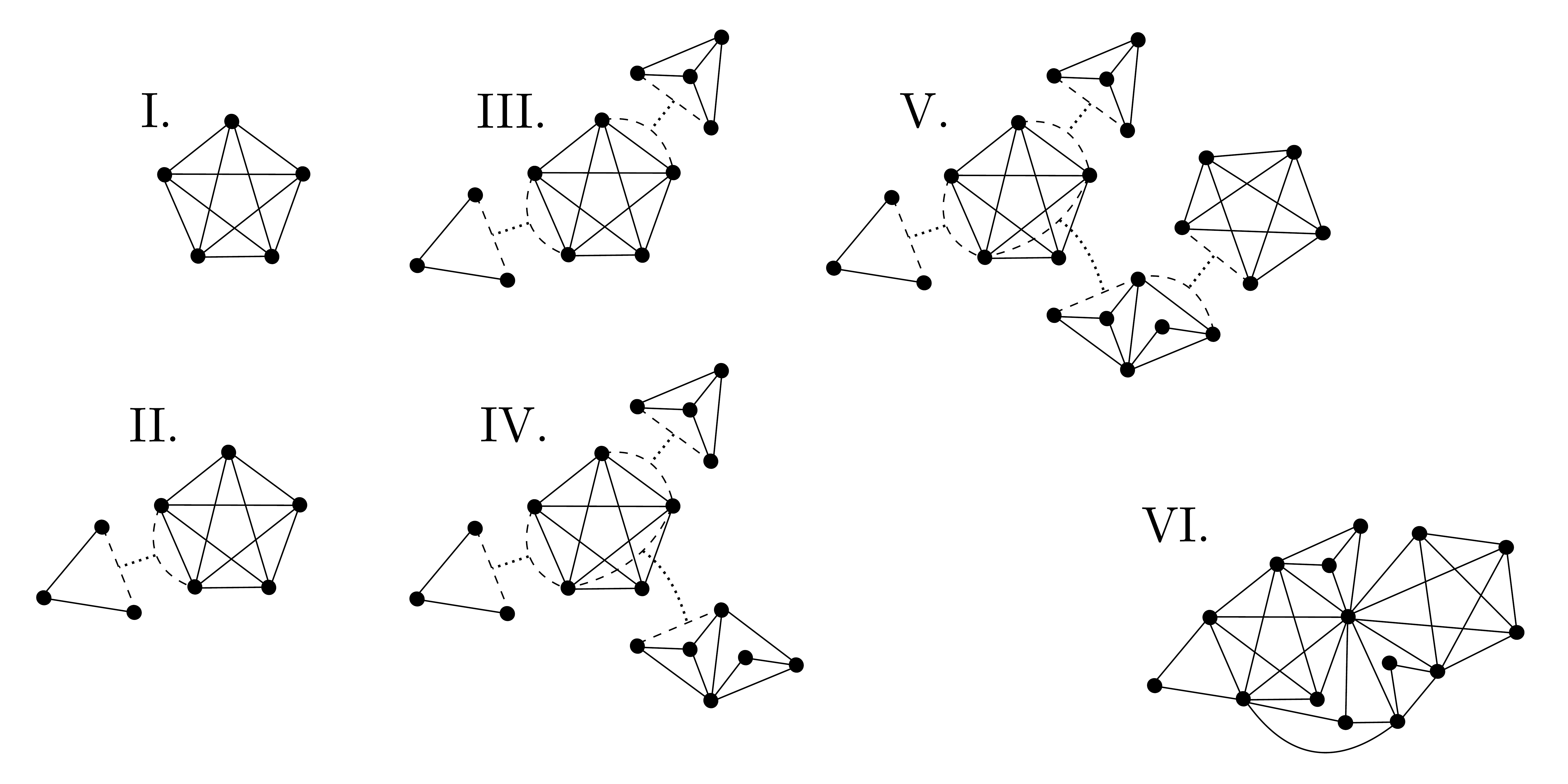}
\caption{Generation of $K_{33}$-free graph $G$ and its partially merged decomposition $T'$. Starting with $K_5$ (I), new components are generated and attached to random free edges (II-V). VI is a result graph $G$ obtained by merging all components in $T'$.}
\label{fig:k33freegen}
\end{figure}

\section{Upper Bound Minimization and Marginal Computation in Approximation Scheme} \label{sec:mcas}

Denote:
\begin{equation*}
h(J') \triangleq \min_{\rho(r) \geq 0, \sum_r \rho(r) = 1} g(J', \rho), \qquad g(J', \rho) \triangleq \min_{\{ J^{(r)} \}, \sum_r \rho(r) \hat{J}^{(r)} = J'} \sum_r \rho(r) \log Z(G^{(r)}, 0, J^{(r)})
\end{equation*}
where $h(J')$ is a tight upper bound for $\log Z (G', 0, J')$.

Given a fixed $\rho$, we compute $g(J', \rho)$ using L-BFGS-B optimization \citep{lbfgsb} by back-propagating through $Z(G^{(r)}, 0, J^{(r)})$ and projecting gradients on the constraint linear manifold. On the upper level we also apply L-BFGS-B algorithm to compute $h(J')$, which is possible since \citep{wainwright,globerson}
\begin{equation*}
    \frac{\partial}{\partial \rho(r)} g(J', \rho) = \log Z(G^{(r)}, 0, J^{(r)}_{min}) - (M^{(r)})^\top J^{(r)}_{min}, \,\,\, M^{(r)} \triangleq \frac{\partial}{\partial J^{(r)}_{min}} \log Z(G^{(r)}, 0, J^{(r)}_{min})
\end{equation*}
where $\{ J^{(r)}_{min} \}$ is argmin inside $g(J', \rho)$'s definition and $M^{(r)} = \{ M^{(r)}_e \, | \, e \in E(G^{(r)}) \}$ is a vector of \textit{pairwise marginal expectations}. We reparameterize $\rho(r)$ into $\frac{w(r)}{\sum_{r'} w(r')}$ where $w(r) > 0$.

For $e = \{ v, w \} \in E(G)$ we approximate pairwise marginal probabilities as \citet{wainwright,globerson}
\begin{equation*}
    \mathbb{P}^{alg} (x_v x_w = 1) = \frac12 \cdot \lbrack \sum_r \rho(r) M^{(r)}_e \rbrack + \frac12
\end{equation*}
Let $e_{A}$ be an edge between central vertex $v$ and apex in $G'$. We approximate singleton marginal probability at vertex $v$ as
\begin{equation*}
    \mathbb{P}^{alg} (x_v = 1) = \frac12 \cdot \lbrack \sum_r  \rho(r) M^{(r)}_{e_A} \rbrack + \frac12
\end{equation*}

\vskip 0.2in
\bibliography{sample}

\end{document}